\newcommand{\san}[1]{\mathsf{#1}}
\newcommand{\rom}[1]{\mathrm{#1}}
\newcommand{\Pj}{\rom{P}_{\rom{j}}}
\newcommand{\Pjs}{\rom{P}_{\rom{js}}}
\newcommand{\Ps}{\rom{P}_{\rom{s}}}
\newcommand{\Pc}{\rom{P}_{\rom{c}}}
\newcommand{\K}{\rom{K}}
\newcommand{\V}{\rom{V}}
\newtheorem{definition}{Definition}
\newtheorem{proposition}[definition]{Proposition}
\newtheorem{theorem}{Theorem}
\newtheorem{corollary}{Corollary}
\newtheorem{remark}[definition]{Remark}
\newtheorem{lemma}{Lemma}
\def\Label#1{\label{#1}\ [\ \text{#1}\ ]\ }
\def\Label{\label}
\begin{document}

\title{Second Order Analysis for Joint Source-Channel Coding with Markovian Source}
\author{Ryo Yaguchi and Masahito Hayashi~\IEEEmembership{Fellow,~IEEE}
\thanks{The material in this paper will be presented in part at the 2017 IEEE International Symposium on Information Theory (ISIT 2017),   Aachen (Germany), 25-30 June 2017.}
\thanks{Ryo Yaguchi  was with the Graduate School of Mathematics, Nagoya University,
Furocho, Chikusaku, Nagoya, 464-860, Japan}
\thanks{Masahito Hayashi   is with the Graduate School of Mathematics, Nagoya University,
Furocho, Chikusaku, Nagoya, 464-860, Japan,
and
Centre for Quantum Technologies, National University of Singapore, 3 Science Drive 2, Singapore 117542.
(e-mail: masahito@math.nagoya-u.ac.jp)}
}

\markboth{R. Yaguchi and M. Hayashi: Second Order Analysis for Joint Source-Channel Coding}{}

\maketitle

\begin{abstract}
We derive the second order rates of joint source-channel coding, whose source obeys an irreducible and ergodic Markov process when the channel is a discrete memoryless,
while a previous study solved it only in a special case.
We also compare the joint source-channel scheme with the separation scheme in the second 
order regime while a previous study made a notable comparison only with numerical calculation.
To make these two notable progress, 
we introduce two kinds of new distribution families, 
switched Gaussian convolution distribution
and $*$-product distribution, which are defined by modifying the Gaussian distribution. 
\end{abstract}

\begin{IEEEkeywords} 
Markov chain, second order, joint source-channel coding, separation scheme
\end{IEEEkeywords}

\section{Introduction}\Label{S1}
Nowadays, second order analysis attracts much attention in information theory \cite{7,6,HSO,12,DAY}.
In this type of analysis, 
we focus on the second leading term with the order $\sqrt{n}$ in the coding length 
in addition to the first leading term with the order $n$ when the block length is $n$.
To discuss the finiteness of the blocklength, we need to be careful for 
the second leading term as well as the first leading term.
The coefficient of the order $\sqrt{n}$ is given 
as the inverse of the cumulative distribution function of the Gaussian distribution
depending on the decoding error probability $\varepsilon$
in many existing studies for the second order except for the papers \cite{KH1,KH2}.
This is because the second order analysis is deeply rooted in the central limit theorem.
In channel coding, the second order coefficient is given by the Gaussian distribution,
whose variance is given as the variance of the information density.
Here, the information density is given as the logarithm of the likelihood ratio between 
the joint distribution of the input and output random variable 
and their product distribution 
when the expectation of the logarithm of the likelihood ratio achieve the channel capacity.
However, the variance of the information density is not unique, in general
because multiple input distributions attain the channel capacity in general.
So, in such a general case, the variance of the Gaussian determining the second order coefficient is chosen depending on the sign of the decoding error probability $\varepsilon$.
Recently, the two papers \cite{HW,HW2} extended the second order analysis to the Makovian case,
in which, the Markovian version of the central limit theorem is employed instead of the conventional central limit theorem.
In particular, the paper \cite{HW} discussed 
source coding for Markovian source and channel coding for additive channel whose additive noise is Markovian.
Also, Kontoyiannis and Verd\'{u}, \cite{KV14} discussed the variable-length source coding in a similar setting.

Usually, the channel coding is discussed with the message subject to the uniform distribution.
However, in the real communication, the message is not necessarily subject to the uniform distribution.
To resolve this problem, we often consider the channel coding with 
the message subject to the non-uniform distribution.
Such a problem is called source-channel joint coding and has been actively studied by several researchers \cite{Csis,CVFKM,ZA,DAY,KV,VSM}.
As a simple case, we often assume that the message is subject to the independent and identical distribution.
In this case, the capacity is given as the ratio of
the conventional channel capacity to the entropy of the message.
Several studies \cite{Csis,CVFKM,ZA} derived the exponential decreasing rate of the decoding error probability in this setting.
Recently, while Wang-Ingber-Kochman \cite{DAY} and Kostina-Verd\'{u} \cite{KV,KV13} discussed the second-order coefficient in this problem,
two major open problems has been remained in this topic as follows.
Wang-Ingber-Kochman \cite{DAY} derived the second order coefficient
only when the variance of the information density is unique.
When the variance is not unique, 
Kostina-Verd\'{u} \cite{KV13} extended it to the lossy case.
Kostina-Verd\'{u} \cite{KV} extended the lower bound of the second-order coefficient 
by the same method as \cite{DAY}.
However, the impossibility to improve the bound has been an open problem in the general case.
Also, in the above special case,
Wang-Ingber-Kochman \cite{DAY} compared their second order coefficient of the joint scheme
with that with the separation scheme.
Based on their numerical calculation,
they conjectured an inequality for the loss of the separation scheme \cite{DAY2}, whose analytical proof has been remained as another open problem.

In this paper, we tackle both open problems.
Firstly, we derive the second-order coefficient in this problem.
The obtained coefficient is strictly larger than that by Kostina-Verd\'{u} \cite{KV}
when the variance of the information density is not unique.
To characterize the second-order coefficient,
we introduce a new probability distribution as a generalization of the Gaussian distribution.
That is, the second-order coefficient is given
as the inverse of the cumulative distribution function of the new probability distribution.
Further, we derive this result even when the distribution of the message is Markovian.
Secondly, we discuss the second order coefficient with the separation scheme
in the above general setting.
Also, we analytically determine the range of the ratio between 
the error probabilities with the joint and separation schemes
when the variance of the information density is unique.
In this way, we resolve both open problems.

The remaining part of this paper is organized as follows.
In Section \ref{S2}, we prepare several information quantities for Markovian process.
Section \ref{S3} introduces two new distribution families.
In Section \ref{S4}, we discuss the  joint source-channel coding in the single shot setting.
Then, 
Section \ref{S5} 
shows our results for Markovian conditional additive channel.
discusses the second order rate.
Section \ref{S6} 
discusses the case with discrete memoryless channel.
In Section \ref{S7}, we compare the joint source-channel scheme with the separation scheme.

\section{Notations and Information quantities}\Label{S2}
\subsection{Single shot}\Label{S21}
In this paper, 
we denote the random variable by a capital letter, e.g., $X$.
By ${\cal X}$, we denote the set that the random variable $X$ takes values in.
Then, we denote the distribution of the random variable $X$ by $P_X$.
When we have two distributions $P_X$ and $P_Y$, 
we define their product distribution $P_X \times P_Y$
as $(P_X \times P_Y)(x,y):=P_X(x)P_Y(y)$.

When we have two different sets ${\cal X}$ and ${\cal Y}$,
we denote a transition matrix from ${\cal X}$ to ${\cal Y}$ by $W_{Y|X}$.
Then, we define the distribution $P_X \times W_{Y|X}$
as $(P_X \times W_{Y|X})(x,y)=P_X(x)W_{Y|X}(y|x)$.
When ${\cal X}$ is the same set as ${\cal Y}$,
we do not describe the subscript $Y|X$.
In this case,
we define the transition matrix $W^n$ on ${\cal X}$
as $W^n(x_{n}|x_0):= \sum_{x_{n-1},\ldots x_1}
W(x_n|x_{n-1})W(x_{n-1}|x_{n-2})\cdots W(x_1|x_{0})$.
A transition matrix $W$ on ${\cal X}$ is called {\it irreducible} 
when for each $x,x'\in {\cal X}$, there exists a natural number $n$ such that $W^n(x|x')>0$.
An irreducible matrix $W$ is called {\it ergodic} when 
there are no input $x'$ and no integer $n'$
such that $W^{n} (x'|x')=0$ unless 
$n$ is divisible by $n'$.

\subsection{Markovian process}\Label{S22}
Since this paper addresses the Markovian processes, 
we prepare several information measures given in \cite{HW} for an ergodic and irreducible transition matrix $W = \{W(x, z|x', z') \}_{(x, z), (x', z') \in ({\cal X \times Z} )^2 }$ on $({\cal X} \times {\cal Z})$. 
For this purpose, we employ the following assumption on transition matrices, which were introduced by the paper \cite {HW}. 

\begin{definition}[non-hidden]
When an ergodic and irreducible transition matrix $W $ satisfies the condition
\begin{align}
\sum_{x} W(x, z|x', z') = W(z|z')
\end{align}
for every $x' \in {\cal X}$ and $z, z' \in {\cal Z}$,
it is canned non-hidden (with respect to $ {\cal Z} $).
\end{definition}

For example, when the cardinality of ${\cal Z}$ is $1$, the above non-hidden condition holds.
For a non-hidden transition matrix $W$ on ${\cal X} \times {\cal Z}$ with respect to $ {\cal Z} $,
we define the marginal $W_Z$ by $W_Z(z|z') := \sum_x W(x, z|x',z')$. 
In the following, we assume the non-hidden condition.
By $\lambda_{\theta}$, 
we denote the Perron-Frobenius eigenvalue of
\begin{align}
W(x, z|x', z')^{1+\theta}W_Z(z|z')^{-\theta}
\end{align}
for a real number $\theta$.
Then, we define the conditional R\'{e}nyi entropy for the transition matrix \cite{HW} as
\begin{align}
H_{1+\theta}^{W, \downarrow}(X|Z) :=
-\frac{1}{\theta} \log \lambda_{\theta}, 
\end{align}
which is often called 
the lower type of conditional R\'{e}nyi entropy and is denoted by 
$H_{1+\theta}^{W, \downarrow}(X|Z)$ in \cite{HW}.

Taking the limit $\theta \to 0$,
we define the entropy for the transition matrix $W $ as
\begin{eqnarray}
H^W(X|Z):=
\lim_{ \theta \to 0 }H_{ 1 + \theta }^W(X|Z).  \Label{Hw}
\end{eqnarray}
To discuss the difference of $H_{ 1 + \theta }^W(X|Z) $ from $H^W(X|Z)$,
we introduce the varentropy for the transition matrix
$ \Gamma $ as
\begin{eqnarray}
V^W(X|Z )
:=
\lim_{ \theta \to 0 }
\frac{2[H^W(X|Z) - H_{ 1 + \theta }^W(X|Z )]} { \theta }. \Label{Vw}
\end{eqnarray}
So, we have the approximation as
$H_{ 1 + \theta }^W(Z|X )= H^W(Z|X )- \frac{1}{2} V^W(Z|X  )\theta +O(\theta^2)$ as $\theta \to 0$.
In these definitions,
when the output distribution of $W$ does not depend on the input element,
 the quantities
$H_{ 1 + \theta }^W(X|Z)$, $H^W(X|Z)$, and $V^W(X|Z)$
are the same as the conventional definitions.
Then, we have the following proposition.

\begin{proposition}[Central limit theorem for Markovian Process (\cite{A}etc.)]
\Label{CLT}
When $X^n=(X_1, \ldots, X_n)$ and $Z^n=(Z_1, \ldots, Z_n)$
are subject to the Markovian process generated by a non-hidden transition matrix $W$,
the random variable 
$\frac{1}{\sqrt{n}}
(- \log P_{X^n|Z^n}(X^n|Z^n)-n  H^W(X|Z))$
asymptotically obeys the Gaussian distribution with variance $V^W(X|Z)$\footnote{There are so many 
literatures for central limit theorem for Markovian Process.
The paper \cite[Corollary 7.2.]{WH} gives its very elementary proof.
It also summarizes existing approaches for this statement.}.
\end{proposition}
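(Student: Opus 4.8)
The plan is to reduce the statement to a standard central-limit theorem for additive functionals of a Markov chain, applied to the chain $(X_t,Z_t)$ generated by $W$. First I would make the normalization explicit: since $W$ is non-hidden with marginal $W_Z$, the conditional law $P_{X^n|Z^n}$ factorizes along the chain as $P_{X^n|Z^n}(x^n|z^n) = \prod_{t=1}^{n} W(x_t,z_t|x_{t-1},z_{t-1})/W_Z(z_t|z_{t-1})$ (up to the boundary term coming from the initial distribution), so that
\begin{align}
-\log P_{X^n|Z^n}(X^n|Z^n) = \sum_{t=1}^{n} f(X_{t-1},Z_{t-1},X_t,Z_t) + (\text{boundary}),
\end{align}
where $f(x',z',x,z) := -\log W(x,z|x',z') + \log W_Z(z|z')$ is a bounded function of two consecutive states of the ergodic, irreducible (hence positive-recurrent, finite-state) Markov chain $(X_t,Z_t)$. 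The boundary term is $O(1)$ almost surely, so after dividing by $\sqrt n$ it vanishes and does not affect the limit.

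Next I would identify the centering and scaling constants with the quantities defined in the excerpt. By the Perron–Frobenius / tilted-matrix formalism used to define $H^{W,\downarrow}_{1+\theta}(X|Z)$, the eigenvalue $\lambda_\theta$ of the matrix $W(x,z|x',z')^{1+\theta}W_Z(z|z')^{-\theta}$ is precisely the scaled cumulant generating function of the additive functional $\sum_t f$: one has $\lim_n \frac1n \log \mathbb{E}[e^{-\theta \sum_{t=1}^n f(X_{t-1},Z_{t-1},X_t,Z_t)}] = \log\lambda_\theta$. Differentiating at $\theta=0$ gives $\frac{d}{d\theta}\big|_0(-\frac1\theta\log\lambda_\theta)\cdot(-1)$-type identities; concretely, the definitions \eqref{Hw} and \eqref{Vw} are exactly set up so that $H^W(X|Z) = \lim_n \frac1n\mathbb{E}[-\log P_{X^n|Z^n}]$ (the mean of the additive functional) and $V^W(X|Z)$ equals the asymptotic variance $\lim_n \frac1n \mathrm{Var}[-\log P_{X^n|Z^n}]$, which is the standard Markov-chain CLT variance $\sigma^2 = \mathbb{E}_\pi[\tilde f^2] + 2\sum_{k\ge1}\mathbb{E}_\pi[\tilde f_0 \tilde f_k]$ for the centered functional $\tilde f = f - H^W(X|Z)$, $\pi$ the stationary distribution. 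This identification is the linchpin; I would either cite the analyticity of $\theta\mapsto\lambda_\theta$ (Perron–Frobenius eigenvalue of an entrywise-analytic irreducible matrix family is simple and analytic) to legitimize the Taylor expansion $\log\lambda_\theta = -\theta H^W(X|Z) + \frac{\theta^2}{2}V^W(X|Z) + O(\theta^3)$, or invoke the corresponding statement from \cite{HW} directly, since the excerpt explicitly says these measures are taken from that paper.

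Finally, I would invoke the central limit theorem for additive functionals of a finite-state ergodic Markov chain — exactly the reference pointed to in the footnote, \cite[Corollary 7.2]{WH}, or classical sources such as \cite{A} — to conclude that $\frac{1}{\sqrt n}\sum_{t=1}^n \tilde f(X_{t-1},Z_{t-1},X_t,Z_t)$ converges in distribution to a centered Gaussian with variance $V^W(X|Z)$, and then transfer this to $\frac{1}{\sqrt n}(-\log P_{X^n|Z^n}(X^n|Z^n) - nH^W(X|Z))$ using the $O(1)$ boundary bound and Slutsky's theorem. The main obstacle is not the CLT itself but the bookkeeping that ties the information-theoretic quantities $H^W$ and $V^W$ — defined via derivatives of the Perron–Frobenius eigenvalue — to the probabilistic mean and asymptotic variance of the additive functional; once the analyticity of $\lambda_\theta$ in $\theta$ is in hand this is a routine (if slightly tedious) matching of Taylor coefficients, and one should also note the mild point that a degenerate case $V^W(X|Z)=0$ is permitted, in which the limit is the constant $0$ and the statement still holds trivially.
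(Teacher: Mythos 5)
Your proposal is correct and follows essentially the route the paper intends: the paper offers no proof of Proposition \ref{CLT}, delegating it to \cite{A} and \cite[Corollary 7.2]{WH}, and your reduction — using the non-hidden condition to write $-\log P_{X^n|Z^n}(X^n|Z^n)$ as an additive functional of the ergodic chain $(X_t,Z_t)$ up to an almost-surely bounded boundary term, recognizing $\log\lambda_\theta$ as the scaled cumulant generating function of that functional, and matching its Taylor coefficients at $\theta=0$ with $H^W(X|Z)$ and $V^W(X|Z)$ so that the cited Markov CLT applies — is exactly the bookkeeping those references formalize. No gap to report.
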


\section{New Probability Distribution Families}\Label{S3}
\subsection{Switched Gaussian convolution distribution}\Label{S31}
To describe the second order rate in the joint source-channel coding,
we introduce a new type of distribution family, so called switched Gaussian convolution distributions.
It is known that the convolution of two Gaussian distributions is also a Gaussian distribution as follows.
When $\varphi_v$ is the probability density function of the Gaussian distribution with average $0$ and variance $v$,
we have
\begin{align}
\varphi_{v_1+v_2}(x)
=\int_{-\infty}^{\infty}\varphi_{v_1}(y) \varphi_{v_2}(x-y) dy.
\end{align}
Now, we consider the case when 
the variance of the second probability density function is switched at $y=x$.
So, we define the function $\psi[v_1,v_2,v_3](x)$ as
\begin{align}
&\psi[v_1,v_2,v_3](x)\nonumber \\
:=&\int_{-\infty}^{x}\varphi_{v_1}(y) \varphi_{v_+}(x-y) dy
+\int_{x}^{\infty}\varphi_{v_1}(y) \varphi_{v_-}(x-y) dy,
\end{align}
where $v_+:=\max\{ v_2,v_3\}$ and $v_-:=\min\{ v_2,v_3\}$.
Taking the integral with respect to $x$, we define the function 
${\Psi}[v_1,v_2,v_3](R):=\int_{-\infty}^R \psi[v_1,v_2,v_3](x) dx$, which satisfies
\begin{align}
&{\Psi}[v_1,v_2,v_3](R)\nonumber \\
=&\int_{-\infty}^{R}\varphi_{v_1}(y) \Phi_{v_+}(R-y) dy
+\int_{R}^{\infty}\varphi_{v_1}(y) \Phi_{v_-}(R-y) dy \nonumber \\
=&\int_{-\infty}^{\infty}\varphi_{v_1}(y) 
\min\{\Phi_{v_2}(R-y), \Phi_{v_3}(R-y) \} dy ,\Label{12-24}
\end{align}
where $\Phi_{v}(R):=\int_{-\infty}^R\varphi_v(x) dx$.
We simplify $\Phi_{v}$ to $\Phi$ when $v=1$.

Since the value $\min\{\Phi_{v_2}(R-y), \Phi_{v_3}(R-y) \} $ goes to $0$($1$) as $R$ goes to $-\infty$($\infty$),
respectively,
the RHS of  \eqref{12-24} goes to 
$0$($1$) as $R$ goes to $-\infty$($\infty$), respectively,
Also, the value $\min\{\Phi_{v_2}(R-y), \Phi_{v_3}(R-y) \} $ is monotonically increasing with respect to $R$,
the RHS of  \eqref{12-24} also is monotonically increasing with respect to $R$.
These facts show that 
${\Psi}[v_1,v_2,v_3](R)$ is the cumulative distribution function of a probability distribution.
In the following, we call this distribution 
the switched Gaussian convolution distribution with $v_1,v_2$, and $v_3$.
 
To see the behavior of the distribution function of 
the switched Gaussian convolution distribution,
we set $v_1=v_2=1$, and change the third parameter $v_3$.
Then, we obtain the graph given in Fig. \ref{F3}.
From the definition, we find that
the maximum $\max_{v_3}{\Psi}[1,1,v_3](x)$ is realized when $v_3=1$.
Fig. \ref{F3} shows how much ${\Psi}[1,1,v_3](x)$ decreases unless $v_3=1$.

\begin{figure}[htbp]
\begin{center}
\scalebox{0.7}{\includegraphics[scale=1.3]{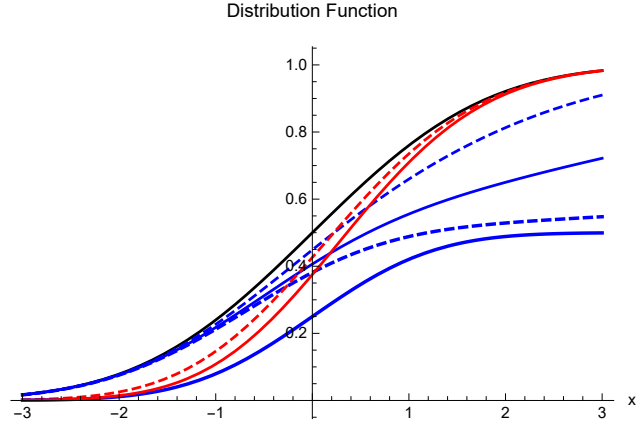}}
\end{center}
\caption{Graphs of ${\Psi}[1,1,v_3](x)$.
Black line: $v_3=1$,
Red dashed line: $v_3=1/9$,
Red normal line: $v_3 \to 0$,
Blue dashed line: $v_3=4$,
Blue normal line: $v_3=25$,
Blue dashed thick line: $v_3=25^2$,
Blue thick line: $v_3 \to \infty$.
}
\Label{F3}
\end{figure}%

\subsection{$*$-product distribution}\Label{S32}
Now, given two parameter $v_1,v_2>0$, we define another probability distribution.
For this purpose, we define the function $\tilde{\Phi}[v_1,v_2]$ as 
\begin{align}
\tilde{\Phi}[v_1,v_2](R)
:=\min_{a\in \mathbb{R}}
\Phi_{v_1}(a)
*  \Phi_{v_2}(R-a)  ,\Label{sep,errX}
\end{align}
where the product $*$ is defined as 
\[a*b=a+b-a b.\]
So, the inverse function
$\tilde{\Phi}[v_1,v_2]^{-1}$ 
is given as
\begin{align}
\tilde{\Phi}[v_1,v_2]^{-1}(\varepsilon)
=
\max_{\varepsilon = \varepsilon_s * \varepsilon_c  }
\left (
\Phi_{v_1}^{-1}(\varepsilon_s)
 +  \Phi_{v_2}^{-1}(\varepsilon_c) 
\right ) .\Label{sep,err}
\end{align}
Since the function $\tilde{\Phi}[v_1,v_2]$  satisfies the condition of the cumulative distribution function,
it can be regarded as the cumulative distribution function of another probability distribution.
We call it $*$-product distribution because it is defined based on the $*$ product.

The cumulative distribution function $\tilde{\Phi}[v_1,v_2]$ has the following property. 
\begin{lemma}\Label{sep,so,lem}
For any 
$ v_1,v_2>0  $, we have 
\begin{align}
\Phi_{v_1+v_2}(R)
\le 
\tilde{\Phi}[v_1,v_2](R)
\le
2 \Phi_{2(v_1+v_2)}(R)
-\Phi_{2(v_1+v_2)}(R)^2.
\Label{sep,so,th1}
\end{align}
The equality in the first inequality
is attained 
if and only if $ \frac{v_1}{v_2} $ is $0$ or $\infty$.
When $R\le 0$, the equality of the second inequality is attained if and only if $ v_1=v_2 $.
\end{lemma}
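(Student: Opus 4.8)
The plan is to pass to the complementary function $\bar\Phi_v:=1-\Phi_v$ (so $\bar\Phi=1-\Phi$) and use the identity $a*b=1-(1-a)(1-b)$. Then \eqref{sep,errX} reads $\tilde\Phi[v_1,v_2](R)=1-\max_{a\in\mathbb{R}}\bar\Phi_{v_1}(a)\bar\Phi_{v_2}(R-a)$, where the maximum is attained for $v_1,v_2\in(0,\infty)$ since the product is continuous and tends to $0$ as $a\to\pm\infty$; and the right-hand side of \eqref{sep,so,th1} equals $1-\bar\Phi_{2(v_1+v_2)}(R)^2$. Thus both inequalities of \eqref{sep,so,th1} amount to the two-sided bound $\bar\Phi_{2(v_1+v_2)}(R)^2\le \max_a\bar\Phi_{v_1}(a)\bar\Phi_{v_2}(R-a)\le \bar\Phi_{v_1+v_2}(R)$. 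Throughout I would use $\Phi_v(x)=\Phi(x/\sqrt v)$ to reduce to the standard Gaussian.

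For the lower bound I would argue probabilistically: letting $Y_1\sim N(0,v_1)$ and $Y_2\sim N(0,v_2)$ be independent, for every $a$ we have $\bar\Phi_{v_1}(a)\bar\Phi_{v_2}(R-a)=\Pr[Y_1>a,\,Y_2>R-a]\le\Pr[Y_1+Y_2>R]=\bar\Phi_{v_1+v_2}(R)$. For the equality case, when $v_1,v_2\in(0,\infty)$ the slack $\Pr[Y_1+Y_2>R]-\Pr[Y_1>a,\,Y_2>R-a]$ is at least $\Pr[Y_1+Y_2>R,\,Y_2\le R-a]=\int_{-\infty}^{R-a}\bar\Phi_{v_1}(R-t)\varphi_{v_2}(t)\,dt>0$ for every $a$, because $\bar\Phi_{v_1}>0$ everywhere; since the maximum over $a$ is attained, this forces $\max_a\bar\Phi_{v_1}(a)\bar\Phi_{v_2}(R-a)<\bar\Phi_{v_1+v_2}(R)$, i.e.\ strict inequality. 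So equality can occur only in the degenerate limit where one Gaussian becomes deterministic, i.e.\ $v_1/v_2\in\{0,\infty\}$ (reading $\Phi_0$ as the Heaviside function, in which case $\tilde\Phi[0,v_2](R)=\bar\Phi_{v_2}(R)$ directly).

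The upper bound is the core of the argument and I would split on the sign of $R$. For $R\le0$ it is elementary: put $a_\circ:=\frac{\sqrt{v_1}}{\sqrt{v_1}+\sqrt{v_2}}R$, so $a_\circ/\sqrt{v_1}=(R-a_\circ)/\sqrt{v_2}=\frac{R}{\sqrt{v_1}+\sqrt{v_2}}=:r$ and $\bar\Phi_{v_1}(a_\circ)\bar\Phi_{v_2}(R-a_\circ)=\bar\Phi(r)^2$; since $\sqrt{v_1}+\sqrt{v_2}\le\sqrt{2(v_1+v_2)}$ (equivalently $2\sqrt{v_1v_2}\le v_1+v_2$) and $R\le0$ we get $r\le R/\sqrt{2(v_1+v_2)}$, hence $\bar\Phi(r)^2\ge\bar\Phi_{2(v_1+v_2)}(R)^2$. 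For $R>0$ the balanced point no longer works, and instead I would take the variance-proportional point $a_*:=\frac{v_1}{v_1+v_2}R$, chosen precisely so that $\frac{a_*^2}{2v_1}+\frac{(R-a_*)^2}{2v_2}=\frac{R^2}{2(v_1+v_2)}$. Writing $\log\bar\Phi(x)=g(x)-\frac{x^2}{2}$ with $g(x):=\log\!\big(\bar\Phi(x)e^{x^2/2}\big)$, the quadratic parts cancel and the target $\bar\Phi_{v_1}(a_*)\bar\Phi_{v_2}(R-a_*)\ge\bar\Phi_{2(v_1+v_2)}(R)^2$ reduces to $g(x_1)+g(x_2)\ge2g\!\big(\tfrac{R}{\sqrt{2(v_1+v_2)}}\big)$ with $x_i:=\tfrac{\sqrt{v_i}R}{v_1+v_2}$, whose midpoint $\tfrac{(\sqrt{v_1}+\sqrt{v_2})R}{2(v_1+v_2)}$ is $\le\tfrac{R}{\sqrt{2(v_1+v_2)}}$ by the same mean inequality. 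Hence it suffices that $g$ be convex and decreasing, and both follow from the representation $g(x)=\log\!\big(\tfrac1{\sqrt{2\pi}}\int_0^\infty e^{-xu}e^{-u^2/2}\,du\big)$, obtained by the substitution $t=x+u$ in $\bar\Phi(x)e^{x^2/2}=\tfrac1{\sqrt{2\pi}}\int_x^\infty e^{-(t^2-x^2)/2}\,dt$: $g$ is the logarithm of a positive mixture of the functions $x\mapsto e^{-xu}$ ($u>0$), each log-convex and decreasing, and such a mixture is again log-convex and decreasing. Convexity gives $g(x_1)+g(x_2)\ge2g\big(\tfrac{x_1+x_2}{2}\big)$ and monotonicity closes the gap.

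For equality in the second inequality when $R\le0$: if $v_1=v_2=:v$, then $a\mapsto\log\bar\Phi_v(a)+\log\bar\Phi_v(R-a)$ is concave — because $\log\bar\Phi$ is concave, which follows e.g.\ from $(\log\bar\Phi)'=-\varphi/\bar\Phi$ being decreasing — and symmetric about $a=R/2$, so it is maximized at $R/2$, giving $\max_a\bar\Phi_v(a)\bar\Phi_v(R-a)=\bar\Phi_v(R/2)^2=\bar\Phi_{4v}(R)^2=\bar\Phi_{2(v_1+v_2)}(R)^2$, i.e.\ equality (in fact for all $R$). Conversely, if $v_1\ne v_2$: for $R<0$ the bound at $a_\circ$ is already strict because $\sqrt{v_1}+\sqrt{v_2}<\sqrt{2(v_1+v_2)}$ strictly; and for $R=0$ the point $a=0$ gives $\bar\Phi_{v_1}(0)\bar\Phi_{v_2}(0)=\tfrac14=\bar\Phi_{2(v_1+v_2)}(0)^2$ but is not a maximizer, since $\frac{d}{da}\big[\log\bar\Phi_{v_1}(a)+\log\bar\Phi_{v_2}(-a)\big]\big|_{a=0}=2\big(\varphi_{v_2}(0)-\varphi_{v_1}(0)\big)\ne0$; in both cases $\max_a>\bar\Phi_{2(v_1+v_2)}(R)^2$, so the inequality is strict. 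The step I expect to be the real obstacle is the $R>0$ half of the upper bound: guessing the variance-proportional split so that the Gaussian exponents line up with $\bar\Phi_{2(v_1+v_2)}$, and — once that is suspected — proving convexity and monotonicity of $g(x)=\log\!\big(\bar\Phi(x)e^{x^2/2}\big)$, for which the integral representation above is the clean route. Everything else is bookkeeping with $\Phi_v(x)=\Phi(x/\sqrt v)$ and $a*b=1-(1-a)(1-b)$.
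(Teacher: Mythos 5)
Your proof is correct, but it follows a genuinely different route from the paper's. The paper works with the inverse functions: it rewrites $\tilde{\Phi}[v_1,v_2]^{-1}(\varepsilon)$ as $\sqrt{v_1+v_2}\,\Phi^{-1}(\tilde{\varepsilon}(\varepsilon,y))$ with $y=v_2/v_1$, and proves the key bound $\Phi^{-1}(\tilde{\varepsilon}(\varepsilon,y))\ge \sqrt{2}\,\Phi^{-1}(1-\sqrt{1-\varepsilon})$ by a max--min analysis over the error split $\varepsilon=\varepsilon_s*\varepsilon_c$ (inner-product computation, a case split at $\varepsilon=3/4$, and a trigonometric parametrization with a derivative sign chart showing the optimum sits at $\varepsilon_s=\varepsilon_c$), plus a separate lemma analyzing $s\mapsto\Phi(s)*\Phi(R-s)$ for the equality case at $v_1=v_2$, $R\le 0$. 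You instead pass to survivor functions, reduce both inequalities to $\bar{\Phi}_{2(v_1+v_2)}(R)^2\le\max_a\bar{\Phi}_{v_1}(a)\bar{\Phi}_{v_2}(R-a)\le\bar{\Phi}_{v_1+v_2}(R)$, get the right inequality (and its strictness) from the event inclusion $\{Y_1>a,\,Y_2>R-a\}\subset\{Y_1+Y_2>R\}$ exactly as the paper does, and get the left inequality from explicit test points: the balanced point for $R\le 0$, and for $R>0$ the variance-proportional point combined with convexity and monotonicity of $g(x)=\log\bigl(e^{x^2/2}\bar{\Phi}(x)\bigr)$, proved via the mixture representation $\tfrac{1}{\sqrt{2\pi}}\int_0^\infty e^{-xu}e^{-u^2/2}du$. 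This buys several things: it avoids the paper's case split at $\varepsilon=3/4$ and the max--min bookkeeping; the equality analysis via concavity and symmetry of $a\mapsto\log\bar{\Phi}_v(a)+\log\bar{\Phi}_v(R-a)$ establishes $\tilde{\Phi}[v,v](R)=2\Phi_{4v}(R)-\Phi_{4v}(R)^2$ for \emph{all} $R$, whereas the paper proves it only for $R\le 0$ and checks $R>0$ numerically; and the $R=0$ strictness argument via the nonvanishing derivative is a clean touch the paper handles implicitly. What the paper's route buys is a closer tie to the operational picture (the optimization over $\varepsilon_s,\varepsilon_c$ mirrors the separation-scheme error split used elsewhere in Section \ref{S7}). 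Two minor points to fix in writing up: the degenerate case should read $\tilde{\Phi}[0,v_2](R)=\Phi_{v_2}(R)$ (you wrote $\bar{\Phi}_{v_2}(R)$, which is the value of the maximized product, not of $\tilde{\Phi}$); and the concavity of $\log\bar{\Phi}$ rests on the increasing hazard rate of the Gaussian, which you invoke without proof --- it is classical, but a one-line justification (e.g.\ $h'=h(h-x)>0$ since $h(x)=\varphi(x)/\bar{\Phi}(x)>x$) would make the argument self-contained.
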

Lemma \ref{sep,so,lem} is shown in Appendix \ref{Ap3}.
The functions in Lemma \ref{sep,so,lem} are numerically compared in Fig. \ref{v2=01}.
When $ v_1=v_2 $,
we also numerically checked that the equality of the second inequality holds 
even for $R>0$.
Overall,  the cumulative distribution functions of this paper are summarized in Table \ref{T1}.

\begin{figure}[htbp]
\begin{center}
\scalebox{0.7}{\includegraphics[scale=1.3]{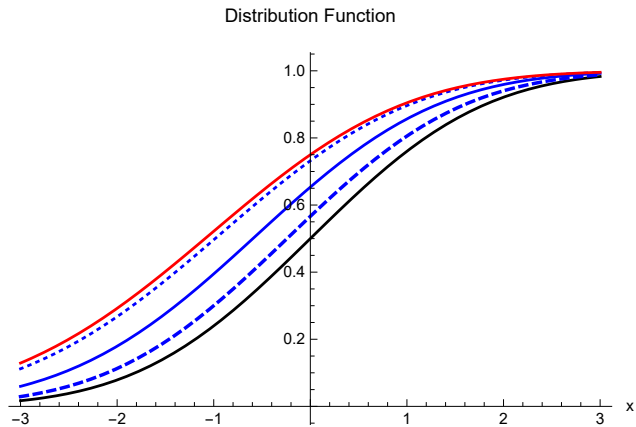}}
\end{center}
\caption{Graphs of functions in \eqref{sep,so,th1}.
Red line: $2 \Phi_{4}(x)-\Phi_{4}(x)^2$.
Blue dotted line: $\tilde{\Phi}[1.5,0.5](x)$.
Blue normal line: $\tilde{\Phi}[1.9,0.1](x)$.
Blue dashed line: $\tilde{\Phi}[1.99,0.01](x)$.
Black line: $\Phi_{2}(x)$.
}
\Label{v2=01}
\end{figure}%

\begin{table}[htpb]
  \caption{Cumulative distribution functions}
\Label{T1}
\begin{center}
  \begin{tabular}{|c|l|} 
\hline
$\Phi_v $ & Gaussian distribution with variance $v$ \\
\hline
$\Psi[v_1,v_2,v_3] $ & Switched Gaussian convolution distribution \eqref{12-24} \\
\hline
$\tilde{\Phi}[v_1,v_2]$ & $*$-product distribution \eqref{sep,err}\\
\hline
  \end{tabular}
\end{center}
\end{table}

\begin{remark}
The paper \cite[Section V]{DAY} considered  the function $\tilde{\Phi}[1,v_2]^{-1}(\varepsilon)/\sqrt{1+v_2}$, and gave the same statement as the second inequality in \eqref{sep,so,th1} with 
the condition $\tilde{\Phi}[v_1,v_2](R) < \frac{3}{4}$
in a difference form as a conjecture based on numerical calculations.
This conjecture had been an open problem.
\end{remark}

\section{Single Shot Setting}\Label{S4}

\subsection{Problem formulation}\Label{S41}
We first present the problem formulation by the single shot setting. 
Assume that the message $M$ takes values in ${\cal M}$
and is subject to the distribution $P_M$. 
For a channel $W_{Y|X}(y|x)$ with input alphabet ${\cal X}$
and output alphabet ${\cal Y}$, 
a channel code $\phi = (\san{e}, \san{d})$ consists of one encoder 
$\san{e}: {\cal M} \to {\cal X}$ and
one decoder $\san{d}:{\cal Y} \to {\cal M}$. 
The average decoding error probability is defined by
\begin{eqnarray}
\Pjs[\phi|P_M, W_{Y|X}] := \sum_{m \in {\cal M}}
P_M(m) W_{Y|X}(\{b:\san{d}(b) \neq m \}|\san{e}(m)). 
\end{eqnarray}
For notational convenience, we introduce the smallest attainable decoding error probability 
under the above condition:
\begin{eqnarray}
\Pjs(P_M, W_{Y|X}) := \inf_{\phi} \Pjs[\phi|P_M, W_{Y|X}]. 
\end{eqnarray}

\subsection{Direct part}\Label{S42}
\subsubsection{General case}
We introduce several lemmas for the case when
${\cal M}$ is the set of messages to be sent, 
$P_M$ is the distribution of the messages, and
$W_{Y|X}$ is the channel from ${\cal X}$ to ${\cal Y}$. 
We have the following single-shot lemma for the direct part. 

\begin{proposition}\cite[Lemma 3.8.1]{han} \Label {si,di,le}
For any constant $c>0$ and for any $P_X \in {\cal P(X)} $, 
there exists a code $ \phi = (\san{e}, \san{d}) $ such that
\begin{align}
\Pjs[\phi|P_M, W_{Y|X}] \le
(P_M \times P_X \times W_{Y|X})
\{ (P_M \times P_X \times W_{Y|X}) (M, X, Y)
\le c (P_X \times \bar{W}_{Y} )(X, Y)
\}+ \frac{1}{c}, \Label{si,di,le1}
\end{align}
where $\bar{W}_{Y}(y):= \sum_{x}P_X(x) W_{Y|X}(y|x)$
and $P_X \times W_{Y|X} (y, x):= P_X (x) W_{Y|X} (y|x)$. 
\end{proposition}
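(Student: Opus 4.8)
The plan is to prove \eqref{si,di,le1} by a standard random-coding argument with a message-dependent threshold decoder, in the spirit of Feinstein's lemma (this is exactly the route of \cite[Lemma 3.8.1]{han}). First I would fix $P_X$ and $c>0$ and, for each message $m\in{\cal M}$, set the threshold $\gamma_m:=c/P_M(m)$. Then I would generate codewords $X_1,\dots,X_{|{\cal M}|}$ independently, each distributed according to $P_X$; this is the random encoder $\san{e}$. As decoder, given a received symbol $y$, I declare $\san{d}(y)=m$ if $m$ is the unique index with $W_{Y|X}(y|X_m) > \gamma_m\,\bar W_{Y}(y)$, and declare an error otherwise (any tie-breaking rule works, since it only affects the error event in a way that is already counted).

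Next I would bound the conditional error probability given $M=m$. The error event is contained in the union of (a) the event that the true codeword fails its threshold, i.e.\ $W_{Y|X}(Y|X_m)\le\gamma_m\,\bar W_{Y}(Y)$ with $(X_m,Y)\sim P_X\times W_{Y|X}$; and (b) the event that some competitor $m'\neq m$ exceeds its threshold, i.e.\ $W_{Y|X}(Y|X_{m'})>\gamma_{m'}\,\bar W_{Y}(Y)$, where now $X_{m'}$ is independent of $Y$ so that $(X_{m'},Y)\sim P_X\times\bar W_{Y}$. Event (a) has probability exactly $(P_X\times W_{Y|X})\{P_M(m)W_{Y|X}(Y|X)\le c\,\bar W_{Y}(Y)\}$. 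For event (b), a union bound together with the elementary change-of-measure estimate $(P_X\times\bar W_{Y})\{W_{Y|X}(Y|X)>\gamma\,\bar W_{Y}(Y)\}\le 1/\gamma$ (obtained by replacing $\bar W_{Y}(y)$ with $W_{Y|X}(y|x)/\gamma$ on the set where the inequality holds) gives $\sum_{m'\neq m}1/\gamma_{m'}=\sum_{m'\neq m}P_M(m')/c\le 1/c$.

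Finally I would average over $m$ with weights $P_M(m)$ and over the random code. The term from (a) becomes $\sum_m P_M(m)\,(P_X\times W_{Y|X})\{P_M(m)W_{Y|X}(Y|X)\le c\,\bar W_{Y}(Y)\}$, which, since the ratio of the densities $(P_M\times P_X\times W_{Y|X})(M,X,Y)$ and $(P_X\times\bar W_{Y})(X,Y)$ equals $P_M(M)W_{Y|X}(Y|X)/\bar W_{Y}(Y)$, is exactly the first term on the RHS of \eqref{si,di,le1}. The term from (b) is at most $1/c$. Hence the expected value over random codes of $\Pjs[\phi|P_M,W_{Y|X}]$ is bounded by the RHS of \eqref{si,di,le1}, so some deterministic code $\phi=(\san{e},\san{d})$ attains it.

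As for the main obstacle: there is no deep difficulty here — the argument is textbook — but the one point requiring care is the choice of the message-dependent threshold $\gamma_m=c/P_M(m)$, which is precisely what makes the union bound over competing messages telescope through $\sum_{m'}P_M(m')=1$ into the clean term $1/c$; one must also keep the strict versus non-strict inequalities consistent between the ``miss'' event (a) and the ``false-alarm'' event (b) so that the two contributions fit together without overlap.
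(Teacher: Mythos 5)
Your argument is correct: the random codebook drawn from $P_X$ together with the message-dependent threshold $\gamma_m=c/P_M(m)$, the change-of-measure bound $(P_X\times\bar W_Y)\{W_{Y|X}(Y|X)>\gamma\,\bar W_Y(Y)\}\le 1/\gamma$, and the union bound $\sum_{m'}P_M(m')/c\le 1/c$ reproduce exactly the bound \eqref{si,di,le1}. The paper does not prove this proposition itself but cites \cite[Lemma 3.8.1]{han}, and your proof is essentially that standard Feinstein-type random-coding argument, so there is nothing to add.
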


From above Proposition, we obviously have the following corollary.
\begin{corollary}\Label{Co11}
\begin{align}
\Pjs(P_M, W_{Y|X}) \le
(P_M \times P_X \times W_{Y|X})
\{ (P_M \times P_X \times W_{Y|X}) (M, X, Y)
\le c (P_X \times \bar{W}_{Y} )(X, Y)
\}+ \frac{1}{c}. \Label{3}
\end{align}
\end{corollary}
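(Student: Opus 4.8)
The plan is to observe that Corollary~\ref{Co11} is an immediate consequence of Proposition~\ref{si,di,le}, so the ``proof'' amounts to one remark about quantifiers. The key point is that the right-hand side of \eqref{si,di,le1} depends only on $P_M$, $P_X$, $W_{Y|X}$, and the constant $c$, and not on the particular code $\phi=(\san{e},\san{d})$ whose existence is asserted. Hence I would invoke Proposition~\ref{si,di,le} with the given $P_X$ and $c$ to obtain a specific code $\phi^{\ast}$ satisfying \eqref{si,di,le1}, and then chain the inequalities $\Pjs(P_M,W_{Y|X})=\inf_{\phi}\Pjs[\phi|P_M,W_{Y|X}]\le \Pjs[\phi^{\ast}|P_M,W_{Y|X}]\le (\text{RHS of }\eqref{si,di,le1})$, which is exactly \eqref{3}. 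Equivalently, one takes the infimum over $\phi$ on both sides of \eqref{si,di,le1}: the left side becomes $\Pjs(P_M,W_{Y|X})$ by definition, while the right side is untouched.

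I do not expect any real obstacle here, since no estimates, limiting arguments, or case analyses are involved; the statement is purely a matter of passing from ``there exists a code with error probability $\le B$'' to ``the optimal error probability is $\le B$''. The only thing worth noting is that $P_X$ and $c$ remain free parameters on the right-hand side, so in applications one will subsequently minimize the bound over $P_X$ and $c$; but the corollary as stated follows for every fixed admissible choice of these parameters, and that is all that needs to be recorded before moving on to the asymptotic analysis.
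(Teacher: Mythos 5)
Your argument is correct and is exactly the reasoning the paper intends: Corollary \ref{Co11} is obtained from Proposition \ref{si,di,le} simply by noting that the bound \eqref{si,di,le1} does not depend on the code, so taking the infimum over $\phi$ on the left-hand side yields \eqref{3} (the paper states this as an immediate consequence without further proof). Nothing is missing, and your remark that $P_X$ and $c$ remain free parameters to be optimized later matches how the corollary is used in the sequel.
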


\subsubsection{Conditional additive case}
Now, we proceed to the case when the channel is conditional additive. 
Assume that ${\cal X}$ is a module and 
${\cal Y}$ is given as ${\cal X}\times {\cal Z}$. 
Here, $Z$ is called the internal state.
Then, the channel $W$ is called conditional additive \cite{HW} when
there exists a joint distribution $P_{XZ}$ such that
\begin{align}
W_{XZ|X}(x, z|x')= P_{XZ}(x-x', z). \Label{ca1}
\end{align}

We summarize the relation between general case and conditional additive case as Table \ref{T2}.

\begin{table}[htpb]
  \caption{Relation between general case and conditional additive case}
\Label{T2}
\begin{center}
  \begin{tabular}{|l|c|c|} 
\hline
& general case & conditional additive\\
\hline
message  & $M$ & $M$  \\
\hline
input & $X$ & $X$  \\
\hline
output variable & $Y$ & $(X,Z)$ \\
\hline
channel & $W_{Y|X}$ & $W_{XZ|X}$ \\
\hline
encoder  & $\san{e}$ & $\san{e}$ \\
\hline
decoder & $\san{d}$ & $\san{d}$ \\
\hline
distribution  & \multirow{2}{*}{$P_{M}$} & \multirow{2}{*}{$P_{M}$} \\
of message & & \\\hline
decoding error & \multirow{2}{*}{$\Pjs(\phi|P_M, W_{Y|X})$} & 
\multirow{2}{*}{$\Pjs(\phi|P_M,W_{XZ|X})$}  \\
probability & & \\
\hline
  \end{tabular}
\end{center}
\end{table}

Then we simplify (\ref{3}) of Corollary \ref{Co11} to the following lemma. 
\begin{lemma}
A conditional additive channel $W_{XZ|X}$ satisfies the inequality 
\begin{align}
\Pjs(P_M, W_{XZ|X}) \le
P_M \times P_{XZ} 
\{ P_M(M)P_{X|Z}(X|Z)
\le c \frac{1}{|{\cal X}|}
\} + \frac{1}{c}. \Label{a}
\end{align}
\end{lemma}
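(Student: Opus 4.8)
The plan is to obtain \eqref{a} as a direct specialization of Corollary~\ref{Co11}, taking $P_X$ to be the uniform distribution on the (finite) module ${\cal X}$. With this choice the proof reduces to two elementary computations: identifying the output marginal $\bar{W}$, and performing the additive change of variables that is built into \eqref{ca1}.

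First I would compute $\bar{W}$. By \eqref{ca1}, for the uniform input $P_X(x')=1/|{\cal X}|$ one has
\begin{align*}
\bar{W}_{XZ}(x,z)=\sum_{x'}\frac{1}{|{\cal X}|}P_{XZ}(x-x',z)=\frac{1}{|{\cal X}|}P_Z(z),
\qquad P_Z(z):=\sum_{x}P_{XZ}(x,z),
\end{align*}
so that the reference distribution in \eqref{3} factorizes as $(P_X\times\bar{W}_{XZ})(x',(x,z))=|{\cal X}|^{-2}P_Z(z)$, while $(P_M\times P_X\times W_{XZ|X})(m,x',(x,z))=|{\cal X}|^{-1}P_M(m)P_{XZ}(x-x',z)$. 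Substituting these into \eqref{3}, cancelling the common factor $|{\cal X}|^{-1}$, and dividing by $P_Z(z)$ (which is positive with probability one), the threshold event becomes $P_M(m)\,P_{X|Z}(x-x'|z)\le c/|{\cal X}|$, where $x'$ is the channel input, $(x,z)$ the channel output, and $P_{X|Z}(\cdot|z):=P_{XZ}(\cdot,z)/P_Z(z)$.

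Second, I would rewrite the probability of this event. Under $P_M\times P_X\times W_{XZ|X}$ the output is generated as $x=x'+n$ with $(n,z)\sim P_{XZ}$ independent of the input $x'$ and of $m$; hence, setting $n:=x-x'$, the pair $(m,(n,z))$ is distributed exactly according to $P_M\times P_{XZ}$, and the event above depends on the sample only through $(m,n,z)$. Therefore the right-hand side of \eqref{3} equals $(P_M\times P_{XZ})\{\,P_M(M)P_{X|Z}(X|Z)\le c/|{\cal X}|\,\}+1/c$, which is precisely the bound \eqref{a}.

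I do not expect a genuine obstacle here; the only care needed is bookkeeping with the overloaded notation (the channel output is written $(X,Z)$ while the input is also an ${\cal X}$-valued variable) and noting that the uniformity of the input makes the output component $X$ and the ``noise'' $X-X'$ carry the same $P_{X|Z}$ structure, which is what lets the expression collapse to an expectation over $P_M\times P_{XZ}$ alone. Finiteness of ${\cal X}$ is implicit in the discrete setting and guarantees that the uniform $P_X$ used above is well defined.
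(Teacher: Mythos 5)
Your proposal is correct and follows essentially the same route as the paper's own proof: apply Corollary~\ref{Co11} with the uniform input distribution, compute $\bar{W}_{XZ}(x,z)=\frac{1}{|{\cal X}|}P_Z(z)$, reduce the threshold event to $P_M(m)P_{X|Z}(x-x'|z)\le c/|{\cal X}|$, and use the additive structure to identify the law of $(M,X-X',Z)$ with $P_M\times P_{XZ}$. The only cosmetic difference is that the paper writes the change of variables as a chain of equalities of probabilities rather than an explicit substitution $n:=x-x'$, and your remark about dividing by $P_Z(z)$ is a harmless technicality since outcomes with $P_Z(z)=0$ occur with probability zero.
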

\begin{proof}
By setting that $P_X$ is the uniform distribution 
and choosing the random variables $X=X'$ and $Y=XZ$ 
to the right hand side of (\ref{3}), 
we have 
\begin{align*}
& (P_M \times P_{X'} \times W_{XZ|X'})
\{ 
(P_M \times P_{X'} \times W_{XZ|X}) (M, X', XZ)
\le c 
P_{X'} \times \bar{W}_{XZ} (X', X, Z)\} \\
= &
(P_M \times P_{X'} \times W_{XZ|X})
\{
P_M(m) \frac{1}{|{\cal X}|} 
P_{XZ}(x-x', z)
\le c 
\frac{1}{|{\cal X}|^2}
P_{Z}(z)
\} \\
= &
(P_M \times P_X \times W_{XZ|X'})
\{
P_M(m)P_{X|Z}(x-x'|z)
\le c 
\frac{1}{|{\cal X}|}
\} \\
= &
P_M \times P_{XZ} 
\{ P_M(M)P_{X|Z}(X|Z)
\le c \frac{1}{|{\cal X}|}
\}, 
\end{align*}
where $ P_Z (z):= \sum_{x} P_{XZ} (x, z) $. 
Hence, 
(\ref{3}) can be simplified to 
\begin{align}
\Pjs(\phi|P_M, W_{Y|X}) \le
P_M \times P_{XZ} 
\{ P_M(M)P_{X|Z}(X|Z)
\le c \frac{1}{|{\cal X}|}
\} + \frac{1}{c}.
\end{align}
\end{proof}

\subsection{Converse part}\Label{S43}
\subsubsection{General case}
Firstly, combining the idea of meta converse \cite{Naga}\cite[Lemma 4]{HN}\cite{12} 
and the general converse lemma for the joint source and channel coding \cite[Lemma 3.8.2]{han}, we obtain the following lemma for the single shot setting. 
The following lemma is the same as \cite[Lemma 3.8.2]{han} when $Q_Y$ is $\bar{W}_Y$. 

\begin{lemma}\Label{L-11}
For any constant $c>0$, any code $ \phi = (\san{e}, \san{d}) $ and any distribution $ Q_Y $ on $ {\cal Y} $, we have
\begin{align}
\Pjs(P_M, W_{Y|X}) \ge
\sum_m P_M(m)W_{Y|X=\san{e}(m)} 
\{ P_M(m)W_{Y|X=\san{e}(m)} (Y)
\le c Q_Y (Y)
\}-c. \Label{2}
\end{align}
\end{lemma}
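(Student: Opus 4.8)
The plan is to show that every code $\phi=(\san{e},\san{d})$ satisfies \eqref{2} with $\Pjs(P_M,W_{Y|X})$ on the left replaced by the error probability $\Pjs[\phi|P_M,W_{Y|X}]$ of that code; the displayed statement then follows by applying this to an optimal code, for which $\Pjs[\phi|P_M,W_{Y|X}]=\Pjs(P_M,W_{Y|X})$ in the single-shot setting. So first I would fix $\phi=(\san{e},\san{d})$ and write ${\cal D}_m:=\san{d}^{-1}(m)\subseteq{\cal Y}$ for its decoding regions; since $\san{d}$ is a function, $\{{\cal D}_m\}_{m\in{\cal M}}$ is a partition of ${\cal Y}$, and by definition $\Pjs[\phi|P_M,W_{Y|X}]=\sum_{m}P_M(m)\,W_{Y|X=\san{e}(m)}({\cal D}_m^{c})$.

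Next I would introduce, for each message $m$, the ``$Q_Y$-dominated'' set ${\cal A}_m:=\{\,y\in{\cal Y}:P_M(m)\,W_{Y|X=\san{e}(m)}(y)\le c\,Q_Y(y)\,\}$, so that the right-hand side of \eqref{2} equals $\sum_{m}P_M(m)\,W_{Y|X=\san{e}(m)}({\cal A}_m)-c$, and then bound $\sum_{m}P_M(m)\,W_{Y|X=\san{e}(m)}({\cal A}_m)$ by splitting each ${\cal A}_m$ into ${\cal A}_m\cap{\cal D}_m^{c}$ and ${\cal A}_m\cap{\cal D}_m$. On the first piece, $W_{Y|X=\san{e}(m)}({\cal A}_m\cap{\cal D}_m^{c})\le W_{Y|X=\san{e}(m)}({\cal D}_m^{c})$, whose $P_M$-average over $m$ is exactly $\Pjs[\phi|P_M,W_{Y|X}]$. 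On the second piece, summing the pointwise inequality defining ${\cal A}_m$ over ${\cal A}_m\cap{\cal D}_m$ lets me change measure: $P_M(m)\,W_{Y|X=\san{e}(m)}({\cal A}_m\cap{\cal D}_m)\le c\,Q_Y({\cal A}_m\cap{\cal D}_m)\le c\,Q_Y({\cal D}_m)$, and since the ${\cal D}_m$ are disjoint and $Q_Y$ is a probability distribution, $\sum_{m}c\,Q_Y({\cal D}_m)\le c$. Adding the two bounds gives $\sum_{m}P_M(m)\,W_{Y|X=\san{e}(m)}({\cal A}_m)\le\Pjs[\phi|P_M,W_{Y|X}]+c$, which rearranges to the desired inequality for the code $\phi$; specializing to an optimal $\phi$ yields \eqref{2}.

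The computation uses nothing about $Q_Y$ beyond $\sum_m Q_Y({\cal D}_m)\le 1$, i.e. that it is a (sub)probability measure on ${\cal Y}$; this is precisely the meta-converse freedom, and the choice $Q_Y=\bar W_Y$ recovers \cite[Lemma~3.8.2]{han}. I do not expect any real obstacle here: the single delicate point is the bookkeeping in the split, namely that the channel mass of ${\cal A}_m$ falling inside the correct decoding cell ${\cal D}_m$ is the portion that must be traded against $c\,Q_Y({\cal D}_m)$ — producing the global additive loss $c$ once the disjoint cells are summed via $\sum_m Q_Y({\cal D}_m)=1$ — whereas the channel mass of ${\cal A}_m$ outside ${\cal D}_m$ is simply absorbed into the error probability. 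Everything else is a one-line ``change of measure plus union over decoding cells'' argument.
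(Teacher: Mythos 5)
Your proposal is correct and follows essentially the same route as the paper's proof in the appendix: the paper likewise splits the set $\{P_M(m)W_{Y|X=\san{e}(m)}(Y)\le cQ_Y(Y)\}$ according to the decoding regions ${\cal D}(m)$, absorbs the part outside ${\cal D}(m)$ into $\Pjs[\phi|P_M,W_{Y|X}]$, and bounds the part inside ${\cal D}(m)$ by a change of measure to $cQ_Y$ followed by summing the disjoint decoding cells to get the additive loss $c$. Your remark that one first proves the per-code inequality and then specializes (to an optimal code, or notes that the bound is what is actually used) matches the paper's argument, which concludes with exactly that per-code inequality.
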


\begin{remark}
The paper \cite[Theorem 1]{KV13} gives a similar statement with slightly different terminology.
To readers' convenience, we give its proof in Appendix \ref{A10}.
\end{remark}


\subsubsection{Conditional additive case}
Now, we proceed to the conditional additive case given in \eqref{ca1},
in which, ${\cal Y}$ is given as ${\cal X} \times {\cal Z}$.
Applying \eqref{2} to the conditional additive case, we obtain the following lemma. 
\begin{lemma}\Label{sbcl'}
The inequality 
\begin{align}
\Pjs(P_M, W_{X, Z|X}) \ge
P_M \times P_{XZ} 
\{ P_M(M) P_{X|Z}(X| Z)
\le c \frac{1}{|{\cal X}|}
\} - c \Label{e}
\end{align}
holds for any $c>0$.

\end{lemma}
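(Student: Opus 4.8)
The plan is to specialize the general converse Lemma~\ref{L-11} to the conditional additive channel, exploiting the translation invariance in \eqref{ca1}, exactly as was done in the direct part after Table~\ref{T2}. Recall that here ${\cal Y}={\cal X}\times{\cal Z}$ and, by \eqref{ca1}, $W_{XZ|X=\san{e}(m)}(x,z)=P_{XZ}(x-\san{e}(m),z)$; hence the ${\cal X}$-marginal of $W_{XZ|X=\san{e}(m)}$ is the uniform distribution on ${\cal X}$, while its ${\cal Z}$-marginal is $P_Z(z):=\sum_x P_{XZ}(x,z)$, independently of $m$.

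First I would apply \eqref{2} of Lemma~\ref{L-11} with the test distribution $Q_Y=Q_{XZ}$ chosen as the product of the uniform distribution on ${\cal X}$ and $P_Z$, i.e.\ $Q_{XZ}(x,z):=\frac{1}{|{\cal X}|}P_Z(z)$; this is a legitimate distribution since $\sum_{x,z}Q_{XZ}(x,z)=\sum_z P_Z(z)=1$. With this choice \eqref{2} reads
\begin{align}
\Pjs(P_M,W_{XZ|X})\ge
\sum_m P_M(m)\, W_{XZ|X=\san{e}(m)}
\Bigl\{P_M(m)P_{XZ}(X-\san{e}(m),Z)\le \frac{c}{|{\cal X}|}P_Z(Z)\Bigr\}-c.
\end{align}
Then I would divide the defining inequality of the event by $P_Z(Z)$: if $P_Z(z)=0$ then $W_{XZ|X=\san{e}(m)}(x,z)=P_{XZ}(x-\san{e}(m),z)=0$ for all $x$ (because $\sum_x P_{XZ}(x-\san{e}(m),z)=P_Z(z)=0$), so such $z$ carry no probability and may be discarded, and on the support of $P_Z$ the event becomes $P_M(m)P_{X|Z}(X-\san{e}(m)|Z)\le c\frac{1}{|{\cal X}|}$.

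Finally, for each fixed $m$ I would substitute $X':=X-\san{e}(m)$. Under $W_{XZ|X=\san{e}(m)}$ the pair $(X,Z)$ has law $P_{XZ}(\,\cdot-\san{e}(m),\,\cdot\,)$, so $(X',Z)$ has law $P_{XZ}$, which no longer depends on $m$; hence each summand equals $P_{XZ}\{P_M(m)P_{X|Z}(X'|Z)\le c\frac{1}{|{\cal X}|}\}$, and the $P_M$-average over $m$ gives $(P_M\times P_{XZ})\{P_M(M)P_{X|Z}(X|Z)\le c\frac{1}{|{\cal X}|}\}$, which is exactly \eqref{e}. The computation is routine; the only point requiring care is the bookkeeping for states $z$ with $P_Z(z)=0$ when passing to the conditional $P_{X|Z}$, together with the verification that the shift $X\mapsto X-\san{e}(m)$ removes the $m$-dependence — this last step is precisely the conditional-additive structure and mirrors the corresponding step in the direct part.
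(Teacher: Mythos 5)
Your proposal is correct and follows essentially the same route as the paper's proof: the same choice $Q_{XZ}(x,z)=\frac{1}{|{\cal X}|}P_Z(z)$ in Lemma \ref{L-11}, the same use of the conditional-additive structure to rewrite the channel output law as $P_{XZ}(X-\san{e}(m),Z)$, the same shift removing the $m$-dependence, and the same passage to $P_{X|Z}$. The only (harmless) differences are the order of the shift and the division by $P_Z(Z)$, and your explicit handling of states with $P_Z(z)=0$, which the paper leaves implicit.
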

\begin{proof}
We choose $Q_Y$ as
\begin{align*}
Q_Y(y) = Q_{XZ}(x, z) = \frac{1}{|{\cal X}|}P_Z(z)
\end{align*}
to \eqref{2}. Then, the first term of the right hand side of (\ref{e}) is
\begin{align*}
&\sum_m P_M(m)W_{Y|X=\san{e}(m)} 
\{ P_M(m)W_{Y|X=\san{e}(m)} (Y)
\le c Q_Y (Y)
\}\\
=&\sum_m P_M(m)W_{XZ|X=\san{e}(m)} 
\{ P_M(m)W_{XZ|X} (X, Z|\san{e}(m))
\le c \frac{1}{|{\cal X}|}P_Z (Z)
\}\\
=&\sum_m P_M(m)P_{XZ} 
\{ P_M(m)P_{XZ} (X-\san{e}(m), Z)
\le c \frac{1}{|{\cal X}|}P_Z (Z)
\}\\
=& P_M \times P_{XZ} 
\{ P_M(M)P_{XZ}(X, Z)
\le c \frac{1}{|{\cal X}|}P_Z (Z)
\}\\
=&
 P_M \times P_{XZ} 
\{ P_M(M) P_{X|Z}(X| Z)
\le c \frac{1}{|{\cal X}|}
\}.
\end{align*}
So, we obtain (\ref{e}). 
\end{proof}

\section{$n$-fold Markovian conditional additive channel}\Label{S5}
\subsection{Formulation for general case}\Label{s51}
Firstly, we give general notations for channel coding when the message obeys Markovian process. 
The formulation presented in this subsection will be applied even to the next section.
We assume that the set of messages is ${\cal M}^k$. 
Then, we assume that
the message $M^k=(M_1, \ldots, M_k)\in {\cal M}^k$ 
is subject to the Markov process with 
the transition matrix $\{W_s(m|m')\}_{m, m' \in {\cal M}}$. 
We denote the distribution for $M^k$ by $P_{M^k}$. 

Now, we consider very general sequence of channels with 
the input alphabet ${\cal X}^n$ and the output alphabet ${\cal Y}^n$. 
In this case, the transition matrix as 
$\{W_{Y^n| X^n}(y^n| x^n)\}_{x^n \in {\cal X}^n, y^n \in {\cal Y}^n}$. 
Then, a channel code $\phi = (\san{e}, \san{d})$ consists of one encoder 
$\san{e}: {\cal M}^k \to {\cal X}^n$ and
one decoder $\san{d}:{\cal Y}^n \to {\cal M}^k$. 
Then, the average decoding error probability is defined by
\begin{eqnarray}
\Pj [\phi|k, n|W_s, W_{Y^n|X^n}] := \sum_{m^k \in {\cal M}^k}
P_{M^k}(m^k) W_{Y^n| X^n}
(\{y^n:\san{d}(y^n) \neq m^k \}|\san{e}(m^k)). 
\end{eqnarray}
For notational convenience, we introduce the error probability 
under the above condition:
\begin{eqnarray}
\Pj(k, n|W_s, W_{Y^n|X^n}) := \inf_{\phi} \Pj[\phi|k, n|W_s, W_{Y^n|X^n}]. 
\end{eqnarray}
When there is no possibility for confusion, we simplify it to $\Pj(k, n)$. 
Instead of evaluating the error probability $\Pj(n, k)$ for given $n, k$, 
we are also interested in evaluating 
\begin{eqnarray}
\K(n, \varepsilon|W_s, W_{Y^n|X^n}) := \sup\left\{ k : \Pj(n, k|W_s, W_{Y^n|X^n}) \le \varepsilon \right\}
\end{eqnarray}
for given $0 \le \varepsilon \le 1$. 
\subsection{Formulation for Markovian conditional additive channel}\Label{S52}
In this section, we address an $n$-fold Markovian conditional additive channel \cite{HW}. 
That is, we consider the case when the joint distribution for the additive noise obeys the Markov process. 
To formulate our channel, we prepare notations. 
Consider the joint Markovian process on ${\cal X}\times {\cal Z}$. 
That is, the random variables $X^n=(X_1, \ldots, X_n)\in {\cal X}^n$ 
and $Z^n=(Z_1, \ldots, Z_n)\in {\cal Z}^n$ 
are assumed to be subject to 
the joint Markovian process defined by 
the transition matrix $\{W_c(x, z|x', z')\}_{x, x'\in {\cal X}, z, z' \in {\cal Z}}$. 
We denote the joint distribution for $X^n$ and $Z^n$ by $P_{X^n, Z^n}$. 
Now, we assume that ${\cal X}$ is a module, and 
consider the channel with
the input alphabet ${\cal X}^n$
and the output alphabet $({\cal X} \times {\cal Z})^n$. 
The transition matrix for the channel 
$W_{X^n, Z^n| {X^n}'}$ is given as
\begin{align}
W_{X^n, Z^n| {X^n}'}(x^n, z^n| {x^n}')=
P_{X^n, Z^n}(x^n-{x^n}', z^n )
\end{align}
for $z^n \in {\cal Z}^n$ and $x^n, {x^n}'\in {\cal X}^n$. 
Also, we denote $\log |{\cal X}|$ by $R$. 
In this case, we denote 
the average error probability $\Pj [\phi| k, n|W_s, W_{X^n, Z^n|X^n}]$ and
the minimum average error probability $\Pj (k, n|W_s, W_{X^n, Z^n|X^n})$
by $\rom{P_{jca}}[\phi| k, n|W_s, W_c]$ and
$\rom{P_{jca}}(k, n|W_s, W_c)$, respectively. 
Then, we denote the maximum size 
$\K(n, \varepsilon|W_s, W_{Y^n|X^n})$
by $\rom{K_{ca}}(n, \varepsilon|W_s, W_c)$. 
When we have no possibility for confusion, we simplify them to
by $\rom{P_{jca}}[\phi| k, n]$, $\rom{P_{jca}}(k, n)$, 
and $\rom{K_{ca}}(n, \varepsilon)$, respectively. 

In the following discussion, we assume the non-hidden condition for
the joint Markovian process described by 
the transition matrix $\{W_c(x, z|x', z')\}_{x, x'\in {\cal X}, z, z' \in {\cal Z}}$. 
Under the non-hidden condition, the paper \cite{HW} shows the single-letterized channel capacity 
to be $C:=\log|{\cal X}| - H^{W_c}(X|Z)$.
Among author's knowledge, the class of channels satisfying the non-hidden condition is the largest class of channels whose channel capacity is known. 
When ${\cal Z}$ is singleton and the channel is the noiseless channel given by identity transition matrix $I$, 
our problem becomes the source coding with Markovian source. 
In this case, the memory size is equal to the cardinality $|{\cal X}|^k$, and
we simplify the smallest attainable decoding error probability $\rom{P_{jca}}(k, n|W_s, I_{X|X})$ to $\Ps(k, n| W_s)$. 

\subsection{Second order analysis}\Label{S53}
\begin{theorem}\Label{T10}
For any $ 0 < \varepsilon < 1 $, 
it holds that 
\begin{align}
\lim_{n \rightarrow \infty} 
\frac{\rom{K_{ca}}(n, \varepsilon) H^{W_{s}}(M) - nC}{\sqrt{n}} 
	= \sqrt{\frac{C}{H^{W_{s}}(M)}V^{W_{s}}(M)+V^{W_{c}}(X|Z)}\Phi^{-1}(\varepsilon). 
\end{align}
In other words, 
\begin{align}
\lim_{n \to \infty}
\rom{P_{jca}}
\left (
n \frac{C}{H^{W_{s}}(M)}
+ \sqrt{n} \frac{ R }{ H^{W_s}(M) }, 
n
\right )
=
\Phi_{\frac{C}{H^{W_{s}}(M)}V^{W_{s}}(M)+V^{W_{c}}(X|Z)}(R )
. \Label{T10,err}
\end{align}
\end{theorem}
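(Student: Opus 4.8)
\emph{Proof strategy.} The plan is to squeeze $\rom{P_{jca}}(k,n)$ between the single-shot direct bound \eqref{a} and the single-shot converse bound \eqref{e}, both applied to the $n$-fold channel, and then to evaluate the resulting tail probabilities with the Markovian central limit theorem, Proposition \ref{CLT}. Throughout, write $V:=\frac{C}{H^{W_s}(M)}V^{W_s}(M)+V^{W_c}(X|Z)$. First I would note that the $n$-fold Markovian conditional additive channel $W_{X^n,Z^n|{X^n}'}(x^n,z^n|{x^n}')=P_{X^n,Z^n}(x^n-{x^n}',z^n)$ is itself conditional additive in the sense of \eqref{ca1}, with module ${\cal X}^n$ and internal state space ${\cal Z}^n$, so \eqref{a} and \eqref{e} apply verbatim after replacing $|{\cal X}|$ by $|{\cal X}|^n$, $P_M$ by $P_{M^k}$, $P_{XZ}$ by $P_{X^n,Z^n}$, and $P_{X|Z}$ by $P_{X^n|Z^n}$. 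I will prove the equivalent statement \eqref{T10,err} and then recover the first display by inversion.

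Fix a real number $R$ and set $k=k_n:=\lfloor \frac{nC}{H^{W_s}(M)}+\sqrt n\,\frac{R}{H^{W_s}(M)}\rfloor$, so that $k_n/n\to C/H^{W_s}(M)$ and $\frac1{\sqrt n}(k_nH^{W_s}(M)-nC)\to R$. With $L_n:=-\log P_{M^{k_n}}(M^{k_n})-\log P_{X^n|Z^n}(X^n|Z^n)$, the inequality inside the probability in \eqref{a} and \eqref{e} reads $L_n\ge n\log|{\cal X}|-\log c$, so for parameters $c_+,c_->0$ these two bounds become
\begin{align*}
(P_{M^{k_n}} \times P_{X^n,Z^n})\{L_n\ge n\log|{\cal X}|-\log c_-\}-c_-
&\le \rom{P_{jca}}(k_n,n)\\
&\le (P_{M^{k_n}} \times P_{X^n,Z^n})\{L_n\ge n\log|{\cal X}|-\log c_+\}+\tfrac1{c_+}.
\end{align*}
Because the message $M^{k_n}$ is independent of the noise pair $(X^n,Z^n)$, and $-\log P_{M^{k_n}}(M^{k_n})$ and $-\log P_{X^n|Z^n}(X^n|Z^n)$ are log-likelihoods of non-hidden Markov processes, Proposition \ref{CLT} gives $\frac1{\sqrt{k_n}}(-\log P_{M^{k_n}}(M^{k_n})-k_nH^{W_s}(M))\Rightarrow\mathcal N(0,V^{W_s}(M))$ and $\frac1{\sqrt n}(-\log P_{X^n|Z^n}(X^n|Z^n)-nH^{W_c}(X|Z))\Rightarrow\mathcal N(0,V^{W_c}(X|Z))$. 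Scaling the first by $\sqrt{k_n/n}\to\sqrt{C/H^{W_s}(M)}$ and using independence then yields
\begin{align*}
\frac1{\sqrt n}\bigl(L_n-k_nH^{W_s}(M)-nH^{W_c}(X|Z)\bigr)\ \Rightarrow\ \mathcal N(0,V).
\end{align*}

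Next I would take $c_+=n$ and $c_-=1/n$, so that $\tfrac1{c_+}\to0$, $c_-\to0$, and $\log c_\pm=o(\sqrt n)$. Since $n\log|{\cal X}|-nH^{W_c}(X|Z)=nC$, the normalized thresholds satisfy $\frac1{\sqrt n}(n\log|{\cal X}|-\log c_\pm-k_nH^{W_s}(M)-nH^{W_c}(X|Z))=\frac{nC-k_nH^{W_s}(M)}{\sqrt n}-\frac{\log c_\pm}{\sqrt n}\to-R$, and, by continuity of the Gaussian cdf, both bounding probabilities converge to $\Pr[\mathcal N(0,V)\ge -R]=\Phi_V(R)=\Phi(R/\sqrt V)$. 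Hence $\rom{P_{jca}}(k_n,n)\to\Phi_V(R)$, which is \eqref{T10,err}. To get the first display, I would use that $\rom{P_{jca}}(k,n)$ is nondecreasing in $k$ --- a code for $P_{M^{k+1}}$ yields one for $P_{M^k}$ of no larger error by drawing the $(k{+}1)$-st symbol from $W_s(\cdot|M_k)$ before encoding and discarding it after decoding --- so $\rom{P_{jca}}(k,n)\le\varepsilon$ is equivalent to $k\le\rom{K_{ca}}(n,\varepsilon)$; comparing this with \eqref{T10,err} evaluated at $R=\sqrt V\,\Phi^{-1}(\varepsilon)\pm\delta$ and letting $\delta\to0$ gives $\frac1{\sqrt n}(\rom{K_{ca}}(n,\varepsilon)H^{W_s}(M)-nC)\to\sqrt V\,\Phi^{-1}(\varepsilon)$, as claimed.

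The hard part will be the joint central limit statement: the two log-likelihood processes run over the different ``times'' $k_n$ and $n$, and one must verify that after the common $1/\sqrt n$ normalization they converge jointly to the single Gaussian $\mathcal N(0,V)$ --- the coefficient $C/H^{W_s}(M)$ multiplying $V^{W_s}(M)$ being precisely $\lim_n k_n/n$. Everything else is routine bookkeeping around the choice of $c_\pm$ and the continuity of $\Phi$; in particular, none of the new distribution families of Section \ref{S3} are needed here, because the conditional additive structure forces the capacity-achieving input (the uniform distribution), and hence the dispersion $V^{W_c}(X|Z)$, to be unique.
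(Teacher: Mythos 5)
Your proposal is correct and follows essentially the same route as the paper: sandwiching $\rom{P_{jca}}(k_n,n)$ between the single-shot bounds \eqref{a} and \eqref{e} applied to the $n$-fold conditional additive channel, invoking Proposition \ref{CLT} for the sum of the two independent log-likelihoods to get a Gaussian with variance $\frac{C}{H^{W_s}(M)}V^{W_s}(M)+V^{W_c}(X|Z)$, and letting the residual terms vanish. Your choice $c_\pm=n,1/n$ in place of the paper's $e^{\pm n^{1/4}}$ is immaterial, and your explicit monotonicity-and-inversion step recovering the $\rom{K_{ca}}$ form is a detail the paper leaves implicit ("in other words") but which you supply correctly.
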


Theorem \ref{T10} yields the following corollary. 
\begin{corollary}
For $0<\varepsilon<1$, we have
\begin{align}
\lim_{n \rightarrow \infty} \frac{\K(n, \varepsilon)}{n} 
	= \frac{C}{H^{W_{s}}(M)}.
\end{align}
\end{corollary}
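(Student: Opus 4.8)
The quickest route is to read the corollary off Theorem~\ref{T10}: that theorem gives
\begin{align}
\K(n,\varepsilon)\,H^{W_{s}}(M)-nC
=\sqrt{n}\Bigl(\sqrt{\tfrac{C}{H^{W_{s}}(M)}V^{W_{s}}(M)+V^{W_{c}}(X|Z)}\;\Phi^{-1}(\varepsilon)+o(1)\Bigr),
\end{align}
so dividing by $n\,H^{W_{s}}(M)$ (which is positive) and letting $n\to\infty$ gives $\K(n,\varepsilon)/n\to C/H^{W_{s}}(M)$. Nothing more is needed; the corollary is simply the first-order content of Theorem~\ref{T10}, obtained by discarding the $\sqrt n$-term.

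For completeness I would also record a self-contained proof that does not rely on the second-order analysis: it proceeds like the proof of Theorem~\ref{T10} but with the weak law of large numbers for ergodic irreducible Markov chains replacing the central limit theorem of Proposition~\ref{CLT}. Writing $R:=\log|{\cal X}|$ and applying the one-shot bounds \eqref{a} and \eqref{e} to the $n$-fold channel (with the substitutions ${\cal M}\to{\cal M}^k$, ${\cal X}\to{\cal X}^n$, $P_M\to P_{M^k}$, $P_{XZ}\to P_{X^nZ^n}$, $|{\cal X}|\to e^{nR}$), one gets $\beta_n(k,c_1)-c_1\le\rom{P_{jca}}(k,n)\le\beta_n(k,c_2)+1/c_2$ for all $c_1,c_2>0$, where
\begin{align}
\beta_n(k,c):=(P_{M^k}\times P_{X^nZ^n})\bigl\{-\log P_{M^k}(M^k)-\log P_{X^n|Z^n}(X^n|Z^n)\ge nR-\log c\bigr\}.
\end{align}
The Markovian law of large numbers gives $-\tfrac1k\log P_{M^k}(M^k)\to H^{W_{s}}(M)$ and $-\tfrac1n\log P_{X^n|Z^n}(X^n|Z^n)\to H^{W_{c}}(X|Z)$ in probability.

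For the achievability direction, fix $\gamma>0$, take $k_n:=\lceil n(C/H^{W_{s}}(M)-\gamma)\rceil$ and $\log c_n:=\tfrac12 n\gamma H^{W_{s}}(M)$; then $1/c_n\to0$, and since $C=R-H^{W_{c}}(X|Z)$ one has $k_nH^{W_{s}}(M)+nH^{W_{c}}(X|Z)=nR-n\gamma H^{W_{s}}(M)+o(n)$, so the law of large numbers makes the event defining $\beta_n(k_n,c_n)$ have probability tending to $0$; hence $\rom{P_{jca}}(k_n,n)\to0$, so $\K(n,\varepsilon)\ge k_n$ for all large $n$ and $\liminf_n\K(n,\varepsilon)/n\ge C/H^{W_{s}}(M)-\gamma$. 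For the converse direction, take $k_n:=\lceil n(C/H^{W_{s}}(M)+\gamma)\rceil$ and $\log c_n:=-\tfrac12 n\gamma H^{W_{s}}(M)$; then $c_n\to0$ and $\beta_n(k_n,c_n)\to1$, so $\rom{P_{jca}}(k_n,n)\to1>\varepsilon$, and since $\beta_n(k,c_n)$ is nondecreasing in $k$ (because $-\log P_{M^{k+1}}(M^{k+1})\ge-\log P_{M^{k}}(M^{k})$ pointwise) the same inequality $\rom{P_{jca}}(k,n)>\varepsilon$ holds for every $k\ge k_n$ once $n$ is large, forcing $\K(n,\varepsilon)<k_n$ and $\limsup_n\K(n,\varepsilon)/n\le C/H^{W_{s}}(M)+\gamma$. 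Letting $\gamma\downarrow0$ in both estimates gives $\lim_n\K(n,\varepsilon)/n=C/H^{W_{s}}(M)$.

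There is essentially no obstacle here. Through Theorem~\ref{T10} the corollary is a one-line computation, and the self-contained route needs, beyond the one-shot bounds already established, only the weak law of large numbers for the log-likelihood of an ergodic irreducible Markov chain, which is standard and far weaker than Proposition~\ref{CLT}. The only things to watch are the non-degeneracy $H^{W_{s}}(M)>0$ (used in dividing by it) and the degenerate case $C=0$, in which the asserted limit is $0$: the converse direction still applies verbatim, and achievability is then trivial since $\K(n,\varepsilon)\ge0$.
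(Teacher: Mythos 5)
Your proposal is correct, and your first route is exactly the logic the paper intends: the corollary is stated as a consequence of Theorem~\ref{T10}, and dividing the second-order expansion by $nH^{W_s}(M)$ (with $H^{W_s}(M)>0$, as you note) gives the limit immediately. What the paper actually displays as the ``proof'' of the corollary, however, is the full second-order argument: it fixes $kH^{W_s}(M)=nC+\sqrt{n}R$, applies the one-shot bounds \eqref{a} and \eqref{e} with $c=e^{\pm n^{1/4}}$, and invokes the Markovian CLT (Proposition~\ref{CLT}) to establish \eqref{T10,errB}, from which the first-order statement follows. Your second, self-contained route follows the same skeleton (same one-shot bounds, same information-density event) but works with a linear slack $\gamma$ and only the weak law of large numbers for the Markovian log-likelihoods, so it is genuinely more elementary: it proves the corollary without any second-order machinery, at the cost of not recovering Theorem~\ref{T10} itself. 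A further merit of your write-up is the explicit monotonicity observation — that $\beta_n(k,c)$ is nondecreasing in $k$ because $-\log P_{M^{k+1}}(M^{k+1})\ge -\log P_{M^k}(M^k)$ pointwise — which is needed to pass from a lower bound on the error at the single blocklength $k_n$ to the conclusion $\K(n,\varepsilon)<k_n$; the paper leaves this step implicit. Your handling of the degenerate cases ($H^{W_s}(M)>0$ required, $C=0$ trivial) is also sound.
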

\begin{proof}
It is sufficient to show
\begin{align}
\lim_{n \to \infty}
\rom{P_{jca}}
\left (k, n
\right )
=
\Phi_{\frac{C}{H^{W_{s}}(M)}V^{W_{s}}(M)+V^{W_{c}}(X|Z)}(R )
. \Label{T10,errB}
\end{align}
when $k$ is chosen as 
\begin{align}
kH^{W_{s}}(M) 
=& nC + \sqrt{n} R 
= n \log|{\cal X}| - n H^{W_c}(X|Z)
 + \sqrt{n} R.
\end{align} 
By choosing $c= e^{n^{1/4}}$, (\ref{a}) implies that
\begin{align}
 \Pj(k, n) \le
P_{M^k} \times P_{X^nZ^n} 
\{ -\log{P_{M^k}(M^k)} - \log{P_{X^n|Z^n}(X^n|Z^n)}
\ge n\log{|{\cal X}|} - n^{1/4}
\} + e^{-n^{1/4}}. \Label{so1}
\end{align}
Applying Proposition \ref{CLT} to the random variables 
$-\log{P_{M^k}(M^k)} $ and $- \log{P_{X^n|Z^n}(X^n|Z^n)}$,
we find that
\par\noindent
the random variable $\frac{1}{\sqrt{n}}(-\log{P_{M^k}(M^k)} - \log{P_{X^n|Z^n}(X^n|Z^n)}
- k H^{W_{s}}(M) - n H^{W_c}(X|Z) )$
converges to the Gaussian random variable with variance 
$ \frac{C}{H^{W_{s}}(M)}V^{W_{s}}(M)+V^{W_{c}}(X|Z)$.
Since $\frac{n^{1/4}}{\sqrt{n}} \to 0$ and
$\frac{1}{\sqrt{n}}
( n\log{|{\cal X}|} )
=\frac{1}{\sqrt{n}}
(kH^{W_{s}}(M) 
+ n H^{W_c}(X|Z)
 - \sqrt{n} R)$, 
we see that the RHS of \eqref{so1} goes to $\Phi_{\frac{C}{H^{W_{s}}(M)}V^{W_{s}}(M)+V^{W_{c}}(X|Z)}(R )$, which implies that
\begin{align}
\limsup_{n\rightarrow \infty} \Pj(k, n) \le 
1-\Phi_{\frac{C}{H^{W_{s}}(M)}V^{W_{s}}(M)+V^{W_{c}}(X|Z)}(-R )
=\Phi_{\frac{C}{H^{W_{s}}(M)}V^{W_{s}}(M)+V^{W_{c}}(X|Z)}(R ).
\Label{EH1}
\end{align}

By choosing $c= e^{-n^{1/4}}$, 
(\ref{e}) implies that
\begin{align}
 \Pj(k, n) \ge
P_M \times P_{XZ} 
\{ -\log{P_{M^n}(M^n)} - \log{P_{X^n|Z^n}(X^n|Z^n)}
\ge n\log{|{\cal X}|} +n^{1/4}
\} - e^{-n^{1/4}} \Label{Eq10}
\end{align}
Since $ e^{-n^{1/4}}\to 0$, 
the above application of Proposition \ref{CLT} implies
\begin{align}
\liminf_{n\rightarrow \infty} \Pj(k, n) \ge 
1-\Phi_{\frac{C}{H^{W_{s}}(M)}V^{W_{s}}(M)+V^{W_{c}}(X|Z)}(-R )
\Label{EH2}.
\end{align}
The combination of \eqref{EH1} and \eqref{EH2} implies \eqref{T10,errB}.
\end{proof}
Similar to the above two cases, we can recover the result of data compression with the second order regime.


\section{$n$-fold Discrete Memoryless Channel (DMC) case}\Label{S6}
\subsection{Formulation and notations}\Label{S61} 
In this section, we address the $n$-fold discrete memoryless channel 
with the input system ${\cal X}^n$ and the output system ${\cal Y}^n$
Hence, we adopt the same assumptions given in Section \ref{S5}
for the message source. 
The difference from Section \ref{S5} is the form of channel. 
Given a transition matrix $\{W_{Y|X}(y|x)\}_{x\in {\cal X}, y\in {\cal Y}}$, 
the transition matrix for the channel 
$W_{Y^n| {X^n}}$ is given as
\begin{align}
W_{Y^n|X^n}(y^n|x^n) := \Pi^n_{i=1}W_{Y|X}(y_i|x_i)
\end{align}
where $x^n=(x_1, \ldots, x_n) \in {\cal X}^n$
and $y^n=(y_1, \ldots, y_n) \in {\cal Y}^n$. 

In this case, we denote the average error probability $\Pj [\phi| k, n|W_s, W_{X^n, Z^n|X^n}]$ and
the minimum average error probability $\Pj (k, n|W_s, W_{X^n, Z^n|X^n})$
by $\rom{P_{jdm}}[\phi| k, n|W_s, W_{Y|X}]$ and
$\rom{P_{jdm}}(k, n|W_s, W_{Y|X})$, respectively. 
Then, we denote the maximum size 
$\K(n, \varepsilon|W_s, W_{Y^n|X^n})$
by $\rom{K_{jdm}} (n, \varepsilon|W_s, W_{Y|X})$. 
When we have no possibility for confusion, we simplify them to 
$\rom{P_{jdm}}[\phi| k, n]$, $\rom{P_{jdm}}(k, n)$, and $\rom{K_{jdm}}(n, \varepsilon)$, respectively. 

For the latter discussion, we prepare the mutual information as
\begin{align*}
I(P_X, W_{Y|X}):=&\sum_{x\in {\cal X}} {P}_X(x)\sum_{y}W_{Y|X}(y|x)\log \frac{W_{Y|X}(y|x)}{\bar{W}_{Y}(y)} \\
=& \sum_{x\in {\cal X}}{P}_X(x) D(W_{Y|X=x}\| \bar{W}_{Y}) ,
\end{align*}
where
$D(P\|Q):= \sum_{y \in {\cal Y}}P(y)\log \frac{P(y)}{Q(y)} $.
Then, we define its variance version as
\begin{align}
V(P_X, W_{Y|X}):=
 \sum_{x}{P}_X(x)\sum_{y}\bar{W}_{Y}(y) \left(\log \frac{W_{Y|X}(y|x)}{\bar{W}_{Y}(y)} - 
D(W_{Y|X=x}\| \bar{W}_{Y}) \right)^2
\end{align}
and we also define the channel capacity
$C:= \max_{P_X \in {\cal P(X)}} I(P_X, W_{Y|X})= \min_{Q}\max_{x\in {\cal X}} D(W_{Y|X=x}\|Q) $.
Also, we define the  maximum and minimum variances
\begin{align}
V^*_{+} (W_{Y|X}) &:= \max_{ P_X :I ( P_X, W_{Y|X} ) = C } V( P_X, W_{Y|X} ) \\
V^*_{-} (W_{Y|X}) &:= \min_{ P_X :I ( P_X, W_{Y|X} ) = C } V( P_X, W_{Y|X} ),
\end{align}
and the distribution achieving above maximum and minimum as
\begin{align}
P_{X}^+(x)
& =
\mathop{ { \rm argmax } }_{ P_X :I ( P_X, W_{Y|X} ) = C } V( P_X, W_{Y|X} ),\\
P_{X}^-(x)
& =
\mathop{ { \rm argmin } }_{ P_X :I ( P_X, W_{Y|X} ) = C } V( P_X, W_{Y|X} ).
\end{align}

\subsection{Second order analysis and comparison}\Label{S62} 
Using the switched Gaussian convolution distribution $\Psi
 \left[ \frac{C}{H^{W_{s}}(M)} V^{W_{s}} (M), V^*_{+} (W_{Y|X}), V^*_{-} (W_{Y|X})
 \right] $, we derive the second order coding rate in the following Theorem.

\begin{theorem}\Label{dis,so,th}
For any $ \varepsilon \in (0, 1) $, we have
\begin{align}
\lim_{n \rightarrow \infty}
\rom{P_{jdm}}\left(
\frac{ C }{ H^{W_{s}}(M) }n + \frac{ R }{ H^{W_{s}}(M) } \sqrt{n}
, n \right)
=\varepsilon(R), \Label{eq14B}
\end{align}
where
\begin{align}
\varepsilon(R)
:=
\Psi
\left[ \frac{C}{H^{W_{s}}(M)} V^{W_{s}} (M), V^*_{+} (W_{Y|X}), V^*_{-} (W_{Y|X})
\right](R).
\end{align}
In other words, we have
\begin{align}
\lim_{n \rightarrow \infty} \frac{\rom{K_{jdm}}(n, \varepsilon)H^{W_{s}}(M) - nC}{\sqrt{n}} 
	= 
\Psi
 \left[ \frac{C}{H^{W_{s}}(M)} V^{W_{s}} (M), V^*_{+} (W_{Y|X}), V^*_{-} (W_{Y|X})
 \right]^{-1}(\varepsilon).
\end{align}
\end{theorem}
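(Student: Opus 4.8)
The plan is to prove Theorem~\ref{dis,so,th} by establishing matching achievability (upper bound on $\rom{P_{jdm}}$) and converse (lower bound on $\rom{P_{jdm}}$) bounds, each of which reduces, after a central-limit-type argument, to the cumulative distribution function $\Psi$ of the switched Gaussian convolution distribution. The key structural fact driving everything is that the DMC channel contributes an information-density term whose variance is \emph{not} unique: it depends on which capacity-achieving input distribution $P_X$ we use in the single-shot bounds of Corollary~\ref{Co11} and Lemma~\ref{L-11}. The source contributes an independent $-\log P_{M^k}(M^k)$ term whose fluctuation, by Proposition~\ref{CLT}, is asymptotically Gaussian with variance $\frac{C}{H^{W_s}(M)}V^{W_s}(M)$ (after the substitution $k \approx \frac{C}{H^{W_s}(M)}n$). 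The interplay between the fixed source variance and the $\varepsilon$-dependent choice of channel variance is exactly what produces $\Psi$ rather than a single Gaussian.

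For the \textbf{achievability part}, I would apply Corollary~\ref{Co11} with $P_X = P_X^+$ (the capacity achiever with maximal variance $V^*_+$) and with $c = e^{nC - n^{3/4}}$ or a similar near-capacity threshold, plus the product channel structure. The single-shot probability then becomes $\Pr\{ -\log P_{M^k}(M^k) + i_n(X^n;Y^n) \le nC - o(n) \}$ where $i_n$ is the $n$-letter information density under input $P_X^+$. The source term, rescaled by $\sqrt n$, converges to a Gaussian of variance $\frac{C}{H^{W_s}(M)}V^{W_s}(M)$; the channel term is a sum of i.i.d.\ summands so by the ordinary CLT it converges to a Gaussian of variance $V^*_+$. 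Independence of source and channel gives a convolution, and since $P_X^+$ gives the \emph{largest} variance, this single choice suffices to get the bound $\limsup \rom{P_{jdm}} \le \Psi[\,\cdot\,, V^*_+, V^*_-](R)$ only after one observes that on the event $\{-\log P_{M^k} \le$ (something)$\}$ we want the smaller channel variance and elsewhere the larger — which is precisely why one must actually split the achievability argument at the value of the source term and use \emph{both} $P_X^+$ and $P_X^-$, invoking the single-shot bound with $P_X$ chosen adaptively (formally, a time-sharing / union-bound over a partition of the message space). The contribution of each half then matches one of the two integrals in the definition \eqref{12-24} of $\Psi$.

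For the \textbf{converse part}, I would use Lemma~\ref{L-11} (the meta-converse) with $Q_{Y^n} = (\bar W_Y)^{\times n}$ for an appropriate choice; but since the capacity-achieving output distribution $\bar W_Y$ is unique even when the input is not, the $n$-letter output reference distribution is well-defined, and the converse information density $-\log P_M(m) + \log \frac{W_{Y^n|X^n=\san e(m)}(Y^n)}{\bar W_Y^{\times n}(Y^n)}$ has, for each fixed codeword, a conditional variance lying between $V^*_-$ and $V^*_+$. The delicate point is to show that no encoder can do better than the $\Psi$ curve: one partitions codewords according to the (empirical-type-induced) variance of their channel term, applies the CLT on each part, and shows the best achievable error is a mixture that is minorized by $\Psi[\,\cdot\,, V^*_+, V^*_-](R)$. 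This uses that the source term is encoder-independent and that the worst case for the adversary is to put high-variance codewords where the source term is large and low-variance ones where it is small — yielding exactly the switched structure.

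The \textbf{main obstacle} I anticipate is the converse, specifically rigorously handling the continuum of possible channel variances $V(P_X,W)$ between $V^*_-$ and $V^*_+$ as the input type of codewords varies with $n$, and showing that the optimal tradeoff collapses to the two extreme variances in the limit. One needs a type-counting argument (à la Wang--Ingber--Kochman~\cite{DAY}, but now without the uniqueness assumption) combined with a careful uniform CLT (Berry--Esseen) so that the $O(1/\sqrt n)$ errors do not accumulate over the polynomially many types, and a convexity/extreme-point argument showing the mixture over intermediate variances is dominated by the switched two-variance distribution $\Psi$. The achievability side is comparatively routine once the adaptive choice of $P_X^\pm$ on a partition of $\mathcal M^k$ is set up, since there the union bound gives the clean split into the two integrals of \eqref{12-24}. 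Finally, translating the error-probability statement \eqref{eq14B} into the rate statement for $\rom{K_{jdm}}$ is immediate by the standard inversion argument, using that $\Psi[\,\cdot\,]$ is continuous and strictly increasing.
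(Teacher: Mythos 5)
Your achievability argument is essentially the paper's: you identified the key new idea, namely splitting ${\cal M}^k$ according to the source fluctuation and invoking the single-shot bound (Proposition \ref{si,di,le}) with $P_X^+$ for the high-information-content messages and $P_X^-$ for the rest, so that the two halves reproduce the two integrals in \eqref{12-24}. The one implementation point you gloss over is how the decoder learns which sub-codebook was used; a ``union bound over the partition'' does not settle this, and the paper resolves it by prepending a length-$n^{1/4}$ flag transmission (codewords $(0,\dots,0)$ and $(1,\dots,1)$) whose error probability and rate loss both vanish. That is a small, fixable omission; the inversion from \eqref{eq14B} to the statement about $\rom{K_{jdm}}(n,\varepsilon)$ is indeed routine.

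The converse, however, has a genuine gap. You apply Lemma \ref{L-11} with the single product reference $Q_{Y^n}=(\bar W_Y)^{\times n}$ built from the (unique) capacity-achieving output distribution, and you assert that for every codeword the conditional variance of $\log\bigl(W^n_{Y^n|X^n=\san{e}(m)}(Y^n)/\bar W_Y^{\times n}(Y^n)\bigr)$ lies between $V^*_-(W_{Y|X})$ and $V^*_+(W_{Y|X})$. This is false in general: for a codeword of empirical type $P$ that variance is $n\sum_x P(x)\mathrm{Var}_{W_{Y|X=x}}\bigl[\log (W_{Y|X=x}/\bar W_Y)\bigr]$, while $V^*_\pm$ are extremized only over \emph{capacity-achieving} input distributions. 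For example, a type concentrated on one symbol $x_0$ with $D(W_{Y|X=x_0}\|\bar W_Y)=C$ has mean exactly $nC$ against your reference, yet its variance can lie below $V^*_-$ (or above $V^*_+$); then the per-codeword bound $\Phi_{V'}(R-x)$ falls below the target $\min\{\Phi_{V^*_+}(R-x),\Phi_{V^*_-}(R-x)\}$ and your ``extreme-point/convexity'' step has nothing to bite on, so the lower bound delivered by this method is not $\varepsilon(R)-o(1)$ uniformly over codes. The paper avoids exactly this by choosing the reference to be the mixture $Q_U^n$ of all per-type output distributions $(W_P)^{\times n}$ together with $Q_M^{\times n}$ (costing only a $\log(|T_n|+1)=O(\log n)$ shift): codewords whose type satisfies $I(P,W_{Y|X})\le C-\xi$ are compared against their own type's output distribution, so a $\xi\sqrt n$ mean gap plus Chebyshev forces their conditional error probability to $1$, and only for the remaining near-capacity types is the comparison with $Q_M^{\times n}$ used, where the variance does lie (up to continuity in $\xi$) in $[V^*_-,V^*_+]$ and the Gaussian approximation yields the switched bound; the source fluctuation is additionally discretized into slices $[i/J,(i+1)/J]$ before taking $n\to\infty$, then $J,L\to\infty$. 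Without the type-dependent part of the reference measure (or an equivalent device), your converse step fails and needs to be repaired along these lines.
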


The direct and converse parts will be shown in Subsections \ref{S64} and \ref{S65}.
The paper \cite{DAY} discussed the same problem when 
the message is subject to the independent and identical distribution and 
the relation $V^*_{+} (W_{Y|X})=V^*_{-} (W_{Y|X})$ holds.
When the condition $V^*_{+} (W_{Y|X})=V^*_{-} (W_{Y|X})$ holds,
$\Psi 
\Big[\frac{C}{H^{W_{s}}(M)} V^{W_{s}} (M), V^*_{+} (W_{Y|X}), V^*_{-} (W_{Y|X})\Big]^{-1}(\varepsilon)$
becomes $
\sqrt{\frac{C}{H^{W_{s}}(M)} V^{W_{s}} (M)+ V^*_{+}(W_{Y|X})}
{\Phi^{-1} (\varepsilon)}$.

When the message is subject to the independent and identical distribution,
as a simple generalization of the direct part of \cite{DAY}, 
Kostina-Verd\'{u} \cite{KV} showed the inequality
	\begin{align}
\lim_{n \rightarrow \infty}
\rom{P_{jdm}}\left(
\frac{ C }{ H^{W_{s}}(M) }n + \frac{ R }{ H^{W_{s}}(M) } \sqrt{n}
, n \right)
\le \varepsilon_{KV}(R),
	\end{align}
	where $\varepsilon_{KV}(R) $ is defined as
	\begin{align}
	\varepsilon_{KV}(R):=
	\left\{
	\begin{array}{ll}
\Phi_{\frac{C}{H^{W_{s}}(M)} V^{W_{s}} (M)+V^*_{-}(W_{Y|X})}(R)	& \hbox{ when }R \le 0 \\
\Phi_{\frac{C}{H^{W_{s}}(M)} V^{W_{s}} (M)+V^*_{+}(W_{Y|X})}(R)	& \hbox{ when }R >0 .
	\end{array}
	\right.\Label{err,kv}
	\end{align}
Hence, we call the bound $\varepsilon_{KV}(R) $
Kostina-Verd\'{u} bound even for a general Markovian source with a transition matrix $W_{s}$.	
As a comparison between our tight bound $	\varepsilon(R)$ and Kostina-Verd\'{u} bound $	\varepsilon_{KV}(R)$,
we obtain the following lemma.

\begin{lemma}\Label{LH77}
The ratio $\frac{\varepsilon_{KV}(R) }{\varepsilon(R)}$
is evaluated as
\begin{align}
1 \le \frac{\varepsilon_{KV}(R) }{\varepsilon(R)}
\le
\left\{
\begin{array}{ll}
2 & \hbox{ when } R < 0 \\
1/\Phi_{ \frac{ C }{ H^{W_s}(M) } V^{ W_s } (M) }(R)
& \hbox{ when } R \ge 0.
\end{array}
\right. \Label{UPP7}
\end{align}
The equality of the first inequality is attained if and only if 
$ V^*_{+} (W_{Y|X}) = V^*_{-} (W_{Y|X}) $ or 
$ V(W_s) = 0 $.
The equality of the second inequality is attained if and only if 
$ V^*_{+} (W_{Y|X}) $ and $V^*_{-} (W_{Y|X}) $ go to $+\infty$ and $0$, respectively.
\end{lemma}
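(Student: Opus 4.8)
The plan is to reduce the inequality chain \eqref{UPP7} to the pointwise comparison of the three cumulative distribution functions $\Phi_{v}$, $\tilde\Phi[\cdot,\cdot]$ (implicitly) and $\Psi[\cdot,\cdot,\cdot]$ appearing in \eqref{12-24} and \eqref{err,kv}, so that everything follows from Lemma \ref{sep,so,lem} together with elementary properties of the switched Gaussian convolution distribution. Write $v:=\frac{C}{H^{W_s}(M)}V^{W_s}(M)$, $v_+:=V^*_+(W_{Y|X})$, $v_-:=V^*_-(W_{Y|X})$, so that $\varepsilon(R)=\Psi[v,v_+,v_-](R)$ while $\varepsilon_{KV}(R)=\Phi_{v+v_-}(R)$ for $R\le 0$ and $\Phi_{v+v_+}(R)$ for $R>0$. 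First I would establish the lower bound $\varepsilon_{KV}(R)/\varepsilon(R)\ge 1$. Using the representation
\begin{align}
\Psi[v,v_+,v_-](R)=\int_{-\infty}^{\infty}\varphi_{v}(y)\,
\min\{\Phi_{v_+}(R-y),\Phi_{v_-}(R-y)\}\,dy ,
\end{align}
and noting that $\min\{\Phi_{v_+}(R-y),\Phi_{v_-}(R-y)\}\le \Phi_{v_-}(R-y)$ pointwise (and $\le\Phi_{v_+}(R-y)$ as well), the Gaussian-convolution identity $\int\varphi_v(y)\Phi_{v'}(R-y)\,dy=\Phi_{v+v'}(R)$ gives $\varepsilon(R)\le\Phi_{v+v_-}(R)$ and $\varepsilon(R)\le\Phi_{v+v_+}(R)$; taking whichever branch defines $\varepsilon_{KV}$ at the given $R$ yields $\varepsilon(R)\le\varepsilon_{KV}(R)$. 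Equality forces $\min\{\Phi_{v_+}(R-y),\Phi_{v_-}(R-y)\}$ to coincide almost everywhere with the chosen one, i.e. $v_+=v_-$; the alternative degenerate case is $v=0$ (equivalently $V^{W_s}(M)=0$, the ``$V(W_s)=0$'' in the statement), for which $\varphi_v$ is a point mass and both sides collapse. I would have to check that when $v=0$ the branch used in $\varepsilon_{KV}$ still matches $\min\{\Phi_{v_+}(R),\Phi_{v_-}(R)\}$ at $R$; this is immediate since for $R\le0$ the smaller variance $v_-$ gives the smaller value and for $R>0$ the larger variance $v_+$ does.

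Next I would handle the upper bound, which splits into the two regimes. For $R<0$: we must show $\Phi_{v+v_-}(R)\le 2\varepsilon(R)$. The idea is to compare $\Psi[v,v_+,v_-]$ against the $*$-product distribution $\tilde\Phi$ and invoke Lemma \ref{sep,so,lem}. Concretely, I expect the key inequality to be
\begin{align}
\Psi[v,v_+,v_-](R)\ \ge\ \tfrac12\,\Phi_{v+v_-}(R)\qquad(R\le 0),
\end{align}
which I would prove by bounding the integrand from below: on the region $y\le R$ one has $\Phi_{v_+}(R-y)\ge\tfrac12$ (since $R-y\ge0$ there is irrelevant — rather, use $\min\{\Phi_{v_+}(R-y),\Phi_{v_-}(R-y)\}\ge\Phi_{v_-}(R-y)\cdot\mathbf 1$ nowhere helps), so instead I would argue via the symmetric decomposition already used to derive \eqref{12-24}: split the convolution at $y=x$ and compare the ``$v_-$''-tail piece against half of the full Gaussian convolution. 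A cleaner route is to note $\min\{a,b\}\ge \tfrac12(a+b)-\tfrac12|a-b|$ together with a symmetry/tail estimate on $\int\varphi_v(y)|\Phi_{v_+}(R-y)-\Phi_{v_-}(R-y)|\,dy$; bounding that correction term by $\Phi_{v+v_-}(R)$ (using $R\le0$ so that the relevant Gaussian tail is at most $\tfrac12$) yields the factor $2$. The extremal case $v_+\to\infty$, $v_-\to0$ then saturates it, matching the stated equality condition; I would verify saturation by direct limit computation, where $\Psi[v,v_+,v_-](R)\to\tfrac12\Phi_v(R)$ while $\varepsilon_{KV}(R)\to\Phi_v(R)$.

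For $R\ge0$: we must show $\Phi_{v+v_+}(R)/\varepsilon(R)\le 1/\Phi_v(R)$, i.e. $\Phi_v(R)\,\Phi_{v+v_+}(R)\le\Psi[v,v_+,v_-](R)$. Here I would again start from the integral representation and bound $\min\{\Phi_{v_+}(R-y),\Phi_{v_-}(R-y)\}\ge \Phi_{v_-}(R-y)$, but that gives $\Phi_{v+v_-}$ on the right, which is too weak; instead the sharp move is to restrict the integral to $y\le 0$, where $R-y\ge R\ge0$ so $\Phi_{v_+}(R-y)\ge\Phi_{v_+}(R)\ge\tfrac12$ and moreover $\min\{\cdots\}$ is bounded below by a quantity that, after re-assembling with $\int_{-\infty}^{0}\varphi_v(y)\,dy=\Phi_v(0)$-type factors, produces the product $\Phi_v(R)\Phi_{v+v_+}(R)$ — this is where the $1/\Phi_v(R)$ shape comes from, reflecting that with probability $\Phi_v(R)$ the ``source fluctuation'' $y$ is already favorable. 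The extremal limit $v_+\to\infty$, $v_-\to0$ makes $\Phi_{v+v_+}(R)\to\tfrac12$ and $\Psi[v,v_+,v_-](R)\to\tfrac12\Phi_v(R)$, giving equality. The main obstacle I anticipate is pinning down the correct lower bound on $\Psi$ in the $R\ge0$ regime with the exact constant: the naive tail estimates lose factors, so one has to exploit the switching structure in \eqref{12-24} — specifically that the ``min'' is taken of the \emph{same} argument $R-y$ under two variances — rather than treating the two Gaussians as independent. Once that pointwise/integral inequality is in hand, assembling \eqref{UPP7} and reading off the equality conditions (degeneracy $v_+=v_-$ or $v=0$ for the first, and the simultaneous limits $v_+\to\infty$, $v_-\to0$ for the second) is routine.
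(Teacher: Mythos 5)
Your treatment of the first inequality (pointwise $\min$ in \eqref{12-24} plus the Gaussian convolution identity) is fine and is essentially what the paper does. The gap is in the two upper bounds, which are the real content of the lemma. For $R<0$, your route through $\min\{a,b\}=\tfrac12(a+b)-\tfrac12|a-b|$ hinges on the correction bound $\int\varphi_{v}(y)\,|\Phi_{v_+}(R-y)-\Phi_{v_-}(R-y)|\,dy\le\Phi_{v+v_-}(R)$, where $v:=\frac{C}{H^{W_s}(M)}V^{W_s}(M)$, $v_\pm:=V^*_{\pm}(W_{Y|X})$. That bound is false precisely in the regime the lemma is about: fix any $R<0$ and $v>0$ and let $v_-\to 0$, $v_+\to\infty$; then the left side tends to $\tfrac12$ (since $\Phi_{v_+}(R-y)\to\tfrac12$ and $\Phi_{v_-}(R-y)\to\mathbf{1}\{y> R\}$), while the right side tends to $\Phi_{v}(R)<\tfrac12$. (The bound that would suffice is $\le\Phi_{v+v_+}(R)$, but proving it already requires the quantitative facts below, which your sketch never establishes.) For $R\ge 0$ you state the correct target $\Phi_{v}(R)\,\Phi_{v+v_+}(R)\le\Psi[v,v_+,v_-](R)$ but concede that you cannot pin down the constant, so that half is openly missing.

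The missing idea — and the paper's actual proof — is to split $\Psi$ at the switching point $y=R$ and work with the two pieces
\begin{align}
\beta[u](R):=\int_{-\infty}^{R}\Phi_{u}(R-y)\,\varphi_{v}(y)\,dy,
\qquad
\alpha[u](R):=\int_{R}^{\infty}\Phi_{u}(R-y)\,\varphi_{v}(y)\,dy,
\end{align}
so that $\varepsilon(R)=\alpha[v_-](R)+\beta[v_+](R)$, while $\varepsilon_{KV}(R)=\alpha[v_-](R)+\beta[v_-](R)$ for $R<0$ and $\alpha[v_+](R)+\beta[v_+](R)$ for $R\ge 0$. On $y<R$ the argument $R-y$ is positive, hence $\tfrac12\Phi_{v}(R)\le\beta[u](R)\le\Phi_{v}(R)$ for every $u$; on $y>R$ it is negative, hence $0\le\alpha[u](R)\le\tfrac12\Phi_{v}(-R)$; moreover $\alpha[u]$ is increasing and $\beta[u]$ decreasing in $u$. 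Consequently
\begin{align}
\frac{\varepsilon_{KV}(R)}{\varepsilon(R)}
=1+\frac{\beta[v_-](R)-\beta[v_+](R)}{\alpha[v_-](R)+\beta[v_+](R)}
\le 1+\frac{\Phi_{v}(R)-\tfrac12\Phi_{v}(R)}{\tfrac12\Phi_{v}(R)}=2
\quad (R<0),
\end{align}
and analogously $1+\frac{\alpha[v_+]-\alpha[v_-]}{\alpha[v_-]+\beta[v_+]}\le 1+\frac{\tfrac12\Phi_{v}(-R)}{\tfrac12\Phi_{v}(R)}=1/\Phi_{v}(R)$ for $R\ge 0$, with the equality conditions read off from saturation of these bounds (i.e.\ $v_+\to\infty$, $v_-\to 0$). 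Your limiting computations confirming saturation are correct, but without the two-sided bounds on $\alpha$ and $\beta$ (exactly the ``switching structure'' you gesture at) the inequalities themselves are not proved by your sketch.
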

This lemma shows that a gap between 
$ V^*_{+} (W_{Y|X}) $ and $ V^*_{-} (W_{Y|X}) $ 
produces a non-negligible effect for joint source-channel coding
when the source is non-uniform.
Fig. \ref{joint2-6} gives a numerical calculation of the ratio
$\frac{\varepsilon_{KV}(R) }{\varepsilon(R)}$.

\begin{figure}[htbp]
\begin{center}
\scalebox{0.7}{\includegraphics[scale=1.3]{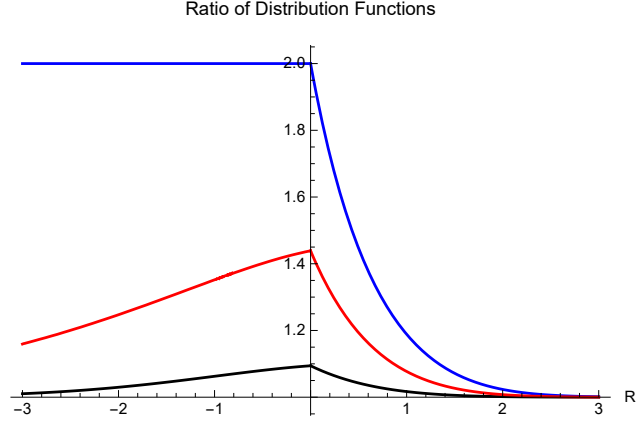}}
\end{center}
\caption{Graphs of the ratio $\frac{\varepsilon_{KV}(R) }{\varepsilon(R)}$
with $ \frac{ C }{ H^{W_s}(M) } V^{ W_s } (M) =1 $.
The origin is $(0,1)$.
Blue line expresses the upper bound given in \eqref{UPP7}.
Red line expresses the case with 
$ V^*_{-} (W_{Y|X}) =0.1$ and
$ V^*_{+} (W_{Y|X}) =10$. 
Black line expresses the case with 
$ V^*_{-} (W_{Y|X}) =0.5$ and
$ V^*_{+} (W_{Y|X}) =1.5$. }
\Label{joint2-6}
\end{figure}%

\begin{proof}
The property \eqref{12-24} implies the first inequality. 
The equality condition for the first inequality follows from the form of 
the switched Gaussian convolution distribution given in \eqref{12-24}.

To show the second inequality, we introduce the notation 
with variance $ v $ as:
\begin{align}
&\alpha[v](R)
:=
\int_{ R }^{ \infty }
	\Phi_{v}( R - x)
	\varphi_{\frac{ C }{ H^{W_s}(M) } V^{ W_s } (M)} (x)  dx\\
&\beta[v](R)
:=
\int_{ -\infty }^{ R }
\Phi_{v}( R - x)
\varphi_{\frac{ C }{ H^{W_s}(M) } V^{ W_s } (M)} (x)  dx.
\end{align}
For any $ R $, we find that 
$ \alpha[v](R) $ is 
monotonically increasing function 
of $ v $,
and 
$ \beta[v](R) $ 
is monotonically decreasing function 
of $ v $.
Additionally, we define
\begin{align}
&
\alpha_{max}(R)
:=
\lim_{v \to \infty}\alpha[v](R)
=
\frac{ 1 }{ 2 } \Phi_{ \frac{ C }{ H^{W_s}(M) } V^{ W_s } (M) }(-R)\\
&\alpha_{min}(R)
:=
\lim_{v \to 0}\alpha[v](R)
=
0\\
&\beta_{max}(R)
:=
\lim_{v \to 0}\beta[v](R)
=
\Phi_{ \frac{ C }{ H^{W_s}(M) } V^{ W_s } (M) }(R)\\
&
\beta_{min}(R)
:=
\lim_{v \to \infty}\beta[v](R)
=
\frac{ 1 }{ 2 } \Phi_{ \frac{ C }{ H^{W_s}(M) } V^{ W_s } (M) }(R).
\end{align}

For $ R < 0 $,
we have
\begin{align}
\frac{ \varepsilon_{KV}(R) }{ \varepsilon(R) } 
=& 
\frac{ \Psi \left [
	\frac{ C }{ H^{W_s}(M) } V^{ W_s } (M), V^*_{-} (W_{Y|X}), V^*_{-} (W_{Y|X})
	\right ](R) }
{ \Psi \left [
	\frac{ C }{ H^{W_s}(M) } V^{ W_s } (M), V^*_{+} (W_{Y|X}), V^*_{-} (W_{Y|X})
	\right ](R) } \nonumber \\
=&
\frac{ \alpha[V^*_{-} (W_{Y|X})](R) + \beta[V^*_{-} (W_{Y|X})](R) }
	{ \alpha[V^*_{-} (W_{Y|X})](R) + \beta[V^*_{+} (W_{Y|X})](R) } \nonumber \\
=&
1+\frac{ \beta[V^*_{-} (W_{Y|X})](R) - \beta[V^*_{+} (W_{Y|X})](R) }
	{ \alpha[V^*_{-} (W_{Y|X})](R) + \beta[V^*_{+} (W_{Y|X})](R) } \nonumber \\
\stackrel{(a)}{\le}  &
1 +
\frac{ \beta_{max}(R) - \beta_{min}(R) }
{ \alpha_{min}(R) + \beta_{min}(R) }
=2,\Label{dis,so,re1}
\end{align}
where $(a)$ follows from 
$ \beta[V^*_{+} (W_{Y|X})](R) \ge \beta_{min}(R)$, 
$ \beta[V^*_{-} (W_{Y|X})](R) \le \beta_{max}(R)$, 
and $ \alpha[V^*_{-} (W_{Y|X})](R) \ge \alpha_{min}(R)$.

For $ R \ge 0 $, we have
\begin{align}
&\frac{ \varepsilon_{KV}(R) }{ \varepsilon(R) } = 
\frac{ \Psi \left [
	\frac{ C }{ H^{W_s}(M) } V^{ W_s } (M), V^*_{+} (W_{Y|X}), V^*_{+} (W_{Y|X})
	\right ](R) }
{ \Psi \left [
	\frac{ C }{ H^{W_s}(M) } V^{ W_s } (M), V^*_{+} (W_{Y|X}), V^*_{-} (W_{Y|X})
	\right ](R) } \nonumber \\
=&
\frac{ \alpha[V^*_{+} (W_{Y|X})](R) + \beta[V^*_{+} (W_{Y|X})](R) }
{ \alpha[V^*_{-} (W_{Y|X})](R) + \beta[V^*_{+} (W_{Y|X})](R) } \nonumber \\
=&
1+ \frac{ \alpha[V^*_{+} (W_{Y|X})](R) - \alpha[V^*_{-} (W_{Y|X})](R) }
{ \alpha[V^*_{-} (W_{Y|X})](R) + \beta[V^*_{+} (W_{Y|X})](R) } \nonumber \\
\stackrel{(b)}{\le}  &
1 +
\frac{ \alpha_{max}(R) - \alpha_{min}(R) }
{ \alpha_{min}(R) + \beta_{min}(R) }
= 
1+ \frac{
\Phi_{ \frac{ C }{ H^{W_s}(M) } V^{ W_s } (M) }(-R)
}{
\Phi_{ \frac{ C }{ H^{W_s}(M) } V^{ W_s } (M) }(R)} \nonumber \\
=&
1/\Phi_{ \frac{ C }{ H^{W_s}(M) } V^{ W_s } (M) }(R),
\Label{dis,so,re4}
\end{align}
where $(b)$ follows from 
$ \beta[V^*_{+} (W_{Y|X})](R) \ge \beta_{min}(R)$, 
$ \alpha[V^*_{-} (W_{Y|X})](R) \le \alpha_{max}(R)$, 
and $ \alpha[V^*_{-} (W_{Y|X})](R) \ge \alpha_{min}(R)$.
The quality condition of the second inequality follows from the equality conditions of $(a)$ and $(b)$.
\end{proof}

\subsection{Direct part}\Label{S64}
To show the direct part of Theorem \ref{dis,so,th}, we invent a novel random coding method because the existing random coding method cannot attain the bound $\varepsilon(R)$.
To attain the bound $\varepsilon(R)$, we need to choose 
the distribution on ${\cal X}^n$ deciding the random coding
depending on the message to be sent.
Hence, we divide the set of messages into two sets,
and we decide our code depending on the set the message belongs to.
To realize this type code, we employ a code composed of two parts.
The first part informs which set the message belongs to.
The second part sends which element of the chosen set to be transmitted.
Using Proposition \ref{si,di,le}, we show that this code attains the bound
$\varepsilon(R)$.

\noindent{\it Step(0):} First, we prepare several notations, some of which are used throughout this proof including the converse part.
We simplify $W_{Y|X}(y|x)$ as $ W_{x}(y) $ and  $W_{Y^n|X^n}(y^n|x^n)$ as $ W_{x^n}^n(y^n) $. 
So, $W_{X^n}(Y^n)$ is a random variable on ${\cal X}^n \times {\cal Y}^n$.
We choose the integer $ k $ as
\begin{align}
k
&:= 
\frac{ C }{ H^{W_{s}}(M) }n + \frac{ R }{ H^{W_{s}}(M) } \sqrt{n}
\Label{eq14}.
\end{align}
Then, we define the following random variables.
\begin{align}
S(M^k)
&:=
 -\sqrt{n}\left (
 \frac{ - \log{P_{M^k} (M^k) } }{ n }
- \frac{ k }{ n } H^{W_{s}}(M)
\right ), \\
C(X^n,Y^n)
&:=
-\sqrt{n}
\left(
\frac{ 1 }{ n }
\log{\frac{W_{X^n}(Y^n)}{\bar{W}_{Y^n}(Y^n)}} - C
\right).
\end{align}

\noindent{\it Step (i):} In this step, 
we describe our code used in this proof.
This code consists of two parts as follows.
In the first part, the sender tells the receiver whether $ S(m^k) \le R $ or $ S(m^k) > R $.
In the second part, 
they communicate each other by using the code depending on the result of the first part.

Now, we give the first part, in which, the message size is $ 2 $. 
So, we use  only $ n^{1/4}$ transmission of the channel for the first part.
That is, the first is the code 
$ \phi_{n}^0 = (\san{e}_{n}^0, \san{d}_{n}^0)$ 
to tell whether $ S(m^k) \ge R $ or not.
Assume that ${\cal X}$ contains elements $0$ and $1$.
To give the first part,
we define the encoder $ \san{e}_{n}^0: \{0, 1 \} \to {\cal X}^{n} $ as
\begin{align*}
\san{e}_{n}^0(0)
&:= (0,0,\cdots,0)  \in {\cal X}^{n^{1/4}}\\
\san{e}_{n}^0(1)
&:= (1,1,\cdots,1) \in {\cal X}^{n^{1/4}}.
\end{align*}
The decoder 
$ \san{d}_{n}^0: {\cal Y}^{n^{1/4}} \to \{0, 1 \} $ is defined as
\begin{align*}
\san{d}_{n}^0(y) 
:=
\left\{
\begin{array}{ll}
0, & if \quad 
(W_{Y|X=0})^{\times n^{1/4}}(y) \ge (W_{Y|X=1})^{\times n^{1/4}}(y)\\
1, &
 if \quad
(W_{Y|X=0})^{\times n^{1/4}}(y) < (W_{Y|X=1})^{\times n^{1/4}}(y).
\end{array}
\right.
\end{align*}
Then, we denote the error probability of the code $ \phi_{n}^0 $
by $ \delta_{n} $, which is represented as
\begin{align}
\delta_{n} 
=&
(W_{Y|X=1})^{\times n^{1/4}}
\big\{ (W_{Y|X=0})^{\times n^{1/4}}(Y^{n^{1/4}}) \ge 
(W_{Y|X=1})^{\times n^{1/4}}(Y^{n^{1/4}}) \big\} 
\nonumber \\
& +
(W_{Y|X=0})^{\times n^{1/4}}
\big\{ (W_{Y|X=0})^{\times n^{1/4}}(Y^{n^{1/4}}) < (W_{Y|X=1})^{\times n^{1/4}}(Y^{n^{1/4}}) \big\}.
\Label{dis,so,di11}
\end{align}
Note that $ \delta_{n} \to 0 $ because $ n^{1/4} \to \infty $.

As the second part, we define the code to send the massage $ m^k $
based on the information transmitted in the first part.
We use $N$ transmissions of the channel in the second part, where $N=n-n^{1/4}$. 
Then, \eqref{eq14} implies that
\begin{align}
k
=
\frac{ C }{ H^{W_{s}}(M) }N + \frac{ R }{ H^{W_{s}}(M) } \sqrt{N} + o(\sqrt{N}).
\Label{dis,so,di9}
\end{align}
Using Proposition \ref{si,di,le},
we define the code $ \phi_N^+ := (\san{e}_N^+, \san{d}_N^-) $ 
so that 
\begin{align}
&\Pjs[\phi_N^+|P_{M^k |S(M^k) \le R} , W_{Y^{N}|X^{N}}] \nonumber \\
&\le
\left (
P_{M^k |S(M^k) \le R} \times (P_X^+)^{\times N} \times W_{Y^N|X^N}
\right )
\left\{ \log P_{M^k |S(M^k) \le R} (M^k) 
+ \log \frac{ W_{X^{N}}(Y^{N}) }{ W_{Y^{N}}(Y^{N}) }
\le \log c
\right\} + \frac{1}{c},\Label{dis,so,di4}
\end{align}
where $ P_{M^k |S(M^k) \le R}  $ is the conditional probability distribution 
of $ P_{M^k} $ under the condition of $ S(M^k) \le R $.
On the other hands, from Proposition \ref{si,di,le}, 
we define a code $ \phi_N^- = (\san{e}_N^-, \san{d}_N^-)$ 
so that 
\begin{align}
&\Pjs[\phi_N^-|P_{M^k |S(M^k) > R} , W_{Y^{N}|X^{N}}] \nonumber \\
&\le
\left (
P_{M^k |S(M^k) > R} \times (P_X^-)^{\times N} \times W_{Y^N|X^N}
\right )
\left\{ \log P_{M^k |S(M^k) > R} (M^k) 
+ \log \frac{ W_{X^{N}}(Y^{N}) }{ W_{Y^{N}}(Y^{N}) }
\le \log c
\right\} + \frac{1}{c},\Label{dis,so,di6}
\end{align}
where $ P_{M^k |S(M^k) > R} $ is the conditional probability distribution of
$ P_{M^k} $ under the condition of $ S(M^k) > R $.
In both cases, $c$ is chosen to be $e^{N^{1/4}}$.

Using the above preparation, we define the code
$ \phi_n := (\san{e}_n, \san{d}_n) $ 
for whole protocol as follows.
Then, for the encoder, 
we define $ \san{e}_n: {\cal M}^k \to
 {\cal X}^{\lceil n^{\frac{ 1 }{ 4 }} \rceil} \times {\cal X}^N $ as
\begin{align}
\san{e}_n(m^k)
:=
\left\{
\begin{array}{ll}
 \big( \san{e}_{n}^0 (0), \san{e}_N^+ (m^k) \big)  &
 {\rm when} \quad S(m^k) \le R  \\
 \big( \san{e}_{n}^0 (1), \san{e}_N^- (m^k) \big) & 
 {\rm when} \quad S(m^k) > R.
\end{array}
\right. 
\end{align}
Also we define the decoder 
$ \san{d}: {\cal X}^{\lceil N^{\frac{ 1 }{ 4 }} \rceil} \times {\cal X}^N \to
 {\cal M}^k $ as
\begin{align}
\san{d}(x_0, x_1)
:=
\left\{
\begin{array}{ll}
\san{d}_N^+ (x_1) &
{\rm when} \quad \san{d}_{n}^0 (x_0) = 0 \\
\san{d}_N^- (x_1) &
{\rm when} \quad \san{d}_{n}^0 (x_0) = 1.
\end{array}
\right. 
\end{align}

\noindent{\it Step (ii):} 
In this step, we will prove that
\begin{align}
&\Pjs[\phi|P_{M^k} , W_{Y^n|X^n}] \nonumber \\
&\le
P_{M^k} \{ S(M^k) \le R \} 
	\big(P_{M^k |S(M^k) \le R} \times (P_X^+)^{\times N} \times W_{Y^N|X^N} \big)
	\left\{ 
	S(M^k)
	-
	C(X^{N},Y^{N})
	\le R
	\right\} \nonumber \\
&\quad +
P_{M^k} \{ S(M^k) > R \}
	\big( P_{M^k |S(M^k) > R} \times (P_X^-)^{\times N} \times W_{Y^N|X^N} \big)
\left\{ 
S(M^k)
-
C(X^{N},Y^{N})
\le R
\right\} +o(1).\Label{eq11-1}
\end{align}

On the code $ \phi_n $, 
an error happens 
if 
an error occurs on the code $ \phi_{n}^0 $, 
or 
an error doesn't occur on the code $ \phi_{n}^0 $
and 
an error occurs on the code $ \phi_N^{\pm} $.
Since $ \delta_{n}\to 0$,
the error probability of the code $ \phi_n $, i.e.,  
$ \Pjs[\phi|P_{M^k} , W_{Y^n|X^n}] $, is evaluated as 
\begin{align}
&\Pjs[\phi|P_{M^k} , W_{Y^n|X^n}]\nonumber \\
&\le
P_{M^k} \{ S(M^k) \le R \} \Pjs[\phi^+|P_{M^k |S(M^k) \le R} , W_{Y^{N}|X^{N}}] 
+
P_{M^k} \{ S(M^k) > R \} \Pjs[\phi^-|P_{M^k |S(M^k) > R} , W_{Y^{N}|X^{N}}]
+o(1).
\Label{dis,so,di2}
\end{align}

When $ S(m^k) \le R $,
$ P_{M^k |S(M^k) \le R} (m^k) = \frac{ P_{M^k} (m^k) }{ P_{M^k} \{ S(M^k) \le R \} }$.
So, applying 
the central limit theorem for Markovian process (Proposition \ref{CLT}) to
random variable $ -\log P_{M^k} (M^k) $, 
we have 
\begin{align*}
P_{M^k} \{ S(M^k) \le R \} \to 
\Phi_{ \frac{ C }{ H^{W_s}(M) } V^{W_{s}} (M) } 
 (R)
\quad (N \to \infty),
\end{align*} 
which implies 
$ \log{P_{M^k|S(M^k) \le R} (M^k) }
=\log P_{M^k}(M^k) + o(\sqrt{N}). $
Since $ kH^{W_s}(M) = NC + \sqrt{N} R + o(\sqrt{N}) $
and $\frac{1}{\sqrt{N}}\log c \to 0$,
due to (\ref{dis,so,di9}), 
we can rewrite (\ref{dis,so,di4}) as:
\begin{align}
&\Pjs[\phi^+|P_{M^k |S(M^k) \le R} , W_{Y^{N}|X^{N}}] \nonumber \\
&\le
\big( P_{M^k |S(M^k) \le R} \times (P_X^+)^{\times n} \times W_{Y^N|X^N} \big)
\left\{ 
S(M^k)
- 
C(X^{N},Y^{N})
\le R
\right\}+o(1).\Label{dis,so,di5}
\end{align}

On the other hands, 
when $ S(m^k) > R $, 
we have $ P_{M^k |S(M^k) > R} (m^k) = \frac{ P_{M^k} (m^k) }
{ P_{M^k} \{ S(M^k) \le R \} }$. 
So, applying  
the central limit theorem for Markovian process to 
random variable $ -\log P_{M^k} (M^k) $,
we obtain 
\begin{align*}
P_{M^k} \{ S(m^k) \le R \} \to 
\Phi_{ \frac{ C }{ H^{W_s}(M) } V^{W_{s}} (M) } (R)
\quad (n \to \infty),
\end{align*} 
which implies
	$ \log{P_{M^k|S(M^k) > R} (m^k) }
	=
	\log P_{M^k}(m^k) + o(N). $
So, we can rewrite (\ref{dis,so,di6}) as:
\begin{align}
	&\Pjs[\phi^-|P_{M^k |S(M^k) \le R} , W_{Y^{N}|X^{N}}] \nonumber \\
	&\le
	\big( P_{M^k |S(M^k) \le R} \times (P_X^-)^{\times n} \times W_{Y|X} \big)
	\left\{  
	S(M^k)
	- 
	C(X^{N},Y^{N})
	\le R
	\right\}+o(1).\Label{dis,so,di7}
\end{align}
Combining (\ref{dis,so,di2}), (\ref{dis,so,di5}) and (\ref{dis,so,di7}),
we obtain \eqref{eq11-1}.

\noindent{\it Step (iii):} 
In this step, 
we will prove that
\begin{align}
\limsup_{n \to \infty}
\Pjs[\phi|P_{M^k} , W_{Y^n|X^n}] 
\le \varepsilon(R)
\Label{dis,so,di10},
\end{align}
which implies 
\begin{align}
\limsup_{n \to \infty}
\rom{P_{jdm}}(k, n)
\le \varepsilon(R)
\Label{dis,so,di10B}
\end{align}
for the integer $k$ given in \eqref{eq14}.

Applying the central limit theorem for Markovian process (Proposition \ref{CLT}), we find the following facts.
Under the distribution $P_{M^k} $, the random variable $ S(M^k) $ asymptotically obeys the Gaussian distribution with mean $ 0 $ and variance $ \frac{C}{H^{W_{s}}(M)} V^{W_{s}} (M) $.
Under the distribution $P_{M^k |S(M^k) \le R} \times (P_X^+)^{\times N} \times W_{Y^N|X^N} $, 
the random variable $ C(X^{N}, Y^{N}) $ asymptotically obeys the Gaussian distribution with mean $ 0 $ 
and variance $ V^*_{+} (W_{Y|X}) $.
Under the distribution $P_{M^k |S(M^k) \le R} \times (P_X^-)^{\times N} \times W_{Y^N|X^N} $, 
the random variable $ C(X^{N}, Y^{N}) $ asymptotically obeys the Gaussian distribution with mean $ 0 $ 
and variance $ V^*_{-} (W_{Y|X}) $.
Hence, taking the limit $N \to \infty$, 
we obtain
\begin{align}
&P_{M^k} \{ S(M^k) \le R \} 
\big( P_{M^k |S(M^k) \le R} \times (P_X^+)^{\times N} \times W_{Y^N|X^N} \big)
\left\{ 
S(M^k)
-C(X^{N},Y^{N})
\le R
\right\} \nonumber \\
&+
P_{M^k} \{ S(M^k) > R \}
\big( P_{M^k |S(M^k) > R} \times (P_X^-)^{\times n} \times W_{Y^N|X^N} \big)
\left\{ 
S(M^k)
-
C(X^{N},Y^{N})
\le R
\right\} \nonumber \\
\to&
\Phi_{ \frac{ C }{ H^{W_s}(M) } V^{W_{s}} (M) } 
(-R)
\frac{ 1 }{ 
\Phi_{ \frac{ C }{ H^{W_s}(M) } V^{W_{s}} (M) } 
(-R) }
\int_{ -\infty }^{ R } \varphi_{\frac{C}{H^{W_{s}}(M)} V^{W_{s}} (M)} (x)
\left( \int_{ -R + x }^{ \infty } \varphi_{V^*_{+} (W_{Y|X})} (y) dy \right) dx \nonumber \\
&+
\Phi_{ \frac{ C }{ H^{W_s}(M) } V^{W_{s}} (M) }
 (R)
\frac{ 1 }{ \Phi_{ \frac{ C }{ H^{W_s}(M) } V^{W_{s}} (M) } 
 (R)}
\int_{ R }^{ \infty } \varphi_{\frac{C}{H^{W_{s}}(M)} V^{W_{s}} (M)} (x)
\left( \int_{ -R + x }^{ \infty } \varphi_{V^*_{-} (W_{Y|X})} (y) dy \right) dx
\nonumber \\
=&
\int_{ -\infty }^{ R } \varphi_{\frac{C}{H^{W_{s}}(M)} V^{W_{s}} (M)} (x)
\left( \int_{ -\infty }^{ R-x } \varphi_{V^*_{+} (W_{Y|X})} (y) dy \right) dx
+
\int_{ R }^{ \infty } \varphi_{\frac{C}{H^{W_{s}}(M)} V^{W_{s}} (M)} (x)
\left( \int_{ -\infty }^{ R-x } \varphi_{V^*_{-} (W_{Y|X})} (y) dy \right) dx  \nonumber\\
=&
\Psi
\left[ \frac{C}{H^{W_{s}}(M)} V^{W_{s}} (M), V^*_{+} (W_{Y|X}), V^*_{-} (W_{Y|X})
\right](R)
= \varepsilon(R),
\end{align}
which implies (\ref{dis,so,di10}).

\subsection{Converse part}\Label{S65}
To show the converse part, we apply (\ref{2}) of Lemma \ref{L-11} to the case with the distribution $Q^n_U $ given in Step (i),
which can be regarded as an extension of the idea of the paper \cite{6} to the joint scheme.
Then, we apply the central limit theorem for Markovian process 
(Proposition \ref{CLT})
to the two random variables related to the dispersions of channel and source.
Since we treat two Gaussian random variables,
the asymptotic error probability is lower bounded by the convolution of 
two Gaussian distributions.
However, since the variance of the dispersions of channel is not unique, in general,
we need to take the minimum for the Gaussian distribution function.
Hence, 
the asymptotic error probability is lower bounded by the switched Gaussian convolution distribution.

\noindent{\it Step (i):} In this step, to show the converse part, we prepare several notations. 
We choose the message block length $ k $ so that
\begin{align}
kH^{W_{s}}(M) = nC + \sqrt{n}R + n^{1/4}.
\end{align}
We denote that $ x^n := \rom{ e } ( m^k ) $. 
We focus on the set $T_n$ of empirical distributions with $n$ channel inputs. 
Its cardinality $|T_n|$ is evaluated as $|T_n| \le (n+1)^{|{\cal X}|}$. 
And in this proof, we use the distribution 
\begin{align}
Q^n_U :=
\sum_{P \in T_n}
	\frac{1}{|T_n|+1}(W_P)^{\times n}
	+\frac{1}{|T_n|+1}Q^{\times n}_M , 
\end{align}
where
\begin{align}
Q_M := \mathop{ { \rm argmin } }_{Q} \max_{x}D(W_x \| Q). 
\end{align}
We also define the sets
\begin{align}
\nu_\xi
:=&\{P|I(P, W_{Y|X}) \ge C-\xi \}, \\
\Omega_n
:=&\{m^k \in {\cal M} ^k | {\rm ep}( \san{ e } ( m^k ) ) \in \nu_\xi \}, \Label{ep}\\
\pi_{ n, J, i }
:=&
\left \{
	m^k \in { \cal M }^k 
	\left|
	\frac{ i }{ J }
	\le
	S(m^k)
		 \le 
	 \frac{ i+1 }{ J }
	\right. 
\right \}, 
\end{align}
where $ {\rm ep}(\san{ e } ( m^k )) $ of \eqref{ep} is empirical distribution function
of $ \san{ e } ( m^k ) \in {\cal X}^n $.

\noindent{\it Step (ii):}\quad
We set the real number $c$ to be $e^{ - n^{ \frac{ 1 }{ 4 } }}$.
Since $ \log c= nC + \sqrt{n}R -kH^{W_{s}}(M) $, by substituting $ Q_{ Y } = Q^n_U $, 
(\ref{2}) of Lemma \ref{L-11} implies that 
\begin{align*}
&\Pj[\phi|k, n] \\
\ge&
\sum_{m^k} 
P_{M^k} (m^k)W_{\san{e}(m^k)}^n
\left \{
S(M^k)
	+ \sqrt{n} \left (
\frac{ 1 }{ n }
\log{\frac{W_{\san{e}(m^k)}^n(Y^n)} { Q^n_U ( Y ) } }- C
				\right )
\le R
\right \}
- e^{ - n^{ \frac{ 1 }{ 4 } } }.
\end{align*}

For arbitrary $ L >0 $, the first term of right hand side is evaluated as 
\begin{align}
&\sum_{m^k} 
P_{M^k} (m^k)W_{\san{e}(m^k)}^n
\left \{
S(M^k)
	+ \sqrt{n} \left (
	\frac{ 1 }{ n }
				\log{\frac{W_{\san{e}(m^k)}^n(Y^n)} { Q^n_U ( Y^n ) } }- C
				\right )
\le R
\right \} \nonumber \\
\ge&
\sum_{ i= -LJ } ^{ LJ-1 }
	\sum_{m^k \in \pi_{ n, J, i }} 
	P_{M^k} (m^k) W_{\san{e}(m^k)}^n
	\left \{
	S(m^k)
	+ \sqrt{n} \left (
	\frac{ 1 }{ n }
				\log{\frac{W_{\san{e}(m^k)}^n(Y^n)} { Q^n_U ( Y^n ) } }- C
				\right )
\le R
\right \} \nonumber \\
\ge&
\sum_{ i= -LJ } ^{ LJ-1 }
	\sum_{m^k \in \pi_{ n, J, i }} 
	P_{M^k} (m^k) W_{\san{e}(m^k)}^n
	\left \{
	\frac{ i+1 }{ J }
	+ \sqrt{n} \left (
	\frac{ 1 }{ n }	\log{\frac{W_{\san{e}(m^k)}^n(Y^n)} { Q^n_U ( Y^n ) } }- C
				\right )
\le R
\right \} \nonumber \\
=&
\sum_{ i= -LJ } ^{ LJ-1 }
	\sum_{m^k \in \pi_{ n, J, i } \cap \Omega_n} 
	P_{M^k} (m^k) W_{\san{e}(m^k)}^n
	\left \{
	\frac{ i+1 }{ J }
	+ \sqrt{n} \left (
	\frac{ 1 }{ n } \log{\frac{W_{\san{e}(m^k)}^n(Y^n)} { Q^n_U ( Y^n ) } }- C
				\right )
\le R
\right \} \nonumber \\
&+\sum_{ i= -LJ } ^{ LJ-1 }
	\sum_{m^k \in \pi_{ n, J, i } \cap \Omega_n^c } 
	P_{M^k} (m^k) W_{\san{e}(m^k)}^n
	\left \{
	\frac{ i+1 }{ J }
	+ \sqrt{n} \left (
	\frac{ 1 }{ n }	\log{\frac{W_{\san{e}(m^k)}^n(Y^n)} { Q^n_U ( Y^n ) } }- C
				\right )
\le R
\right \} \nonumber \\
\ge&
\sum_{ i= -LJ } ^{ LJ-1 }
	\sum_{m^k \in \pi_{ n, J, i } \cap \Omega_n} 
	P_{M^k} (m^k) W_{\san{e}(m^k)}^n
	\left \{
	\frac{ i+1 }{ J }
	+ \sqrt{n} \left (
	\frac{ 1 }{ n }	\log{\frac{W_{\san{e}(m^k)}^n(Y^n)} { (Q_M )^{ \times n }( Y^n ) } }
				+ \frac{ 1 }{ n } \log{(|T_n|+1)} - C
				\right )
\le R
\right \} \nonumber \\
&+\sum_{ i= -LJ } ^{ LJ-1 }
	\sum_{m^k \in \pi_{ n, J, i } \cap \Omega_n^c } 
	P_{M^k} (m^k) W_{\san{e}(m^k)}^n
	\left \{
	\frac{ i+1 }{ J }
	+ \sqrt{n} \left (
	\frac{ 1 }{ n }	\log{\frac{W_{\san{e}(m^k)}^n(Y^n)} { W_{ {\rm ep} (\san{e}(m^k)) }^{ \times n }( Y^n ) } }
				+ \frac{ 1 }{ n } \log{(|T_n|+1)} - C
				\right )
\le R
\right \}. \Label{dis,so,co2}
\end{align}

\noindent{\it Step (iii):} 
For the second term of (\ref{dis,so,co2}), we will show the following fact: 
Given an arbitrary small real number $\delta>0$, 
there exists a sufficiently large $n_1$ such that
\begin{align}
&
	W_{\san{e}(m^k)}^n 
	\left \{
	\frac{ i+1 }{ J }
	+ \sqrt{n} \left (
	\frac{ 1 }{ n }
		\log{\frac{W_{\san{e}(m^k)}^n(Y^n)} { W_{ {\rm ep} (\san{e}(m^k)) }^{ \times n }( Y^n ) } }
				+ \frac{ 1 }{ n } \log{(|T_n|+1)} - C
				\right )
\le R
\right \} \nonumber \\
\ge& 
1 - \delta \Label{ dis, so, pr1 }, 
\end{align} 
for $n \ge n_1$ and $m^k \in \pi_{n,J, i}\cap \Omega_n^c $. 

When $m^k \in \Omega_n^c$, 
\begin{align*}
\V_{ W_{Y^n|X^n = \san{e} (m^k) } }&
\left[\frac{ 1}{\sqrt { n } }
\left(
	\log{\frac{W_{\san{e}(m^k)}^n(Y^n)}{(W_{ {\rm ep} (\san{e}(m^k))})^{\times n}(Y^n)}} 
	+ \log{(|T_n|+1)} - nC
	\right)
	\right]\\
&=
\V _{ {\rm ep} (\san{e}(m^k)), W } < \max_{ P_{ X } } \V _{ P_{ X }, W }  	,
\end{align*}
\begin{align*}
\rom{E}_{ W_{Y^n|X^n = \san{e} (m^k) } }&
\left[\frac{ 1}{\sqrt { n } }
\left(
	\log{\frac{W_{\san{e}(m^k)}^n(Y^n)}{(W_{ {\rm ep} (\san{e}(m^k))})^{\times n}(Y^n)}} 
	+ \log{(|T_n|+1)} - nC
	\right)\right]\\
&=
\frac{ 1}{\sqrt { n } }
	(nI( {\rm ep} (\san{e}(m^k)), W_{Y|X}) + \log{(|T_n|+1)} - nC) \nonumber \\
&\le
\frac{ \log{(|T_n|+1)}}{ \sqrt { n } } - \xi \sqrt { n } , 
\end{align*} 
where $\rom { E } _{ P } $ and $\V _{ P } $ denote the expectation and the variance under the distribution $P$. 
Thus, when $ m^k \in \pi_{ n, J, i } \cap \Omega_n^c $, by using Chebyshev inequality, we obtain
\begin{align}
&W_{\san{e}(m^k)}^n
	\left \{
	\frac{ i+1 }{ J }
	+ \sqrt{n} \left (
 \frac{ 1 }{ n }
				\log{\frac{W_{\san{e}(m^k)}^n(Y^n)} { W_{ {\rm ep} (\san{e}(m^k)) }^{ \times n }( Y^n ) } }
				+ \frac{ 1 }{ n } \log{(|T_n|+1)} - C
				\right )
\le R
\right \} \nonumber \\
\ge &
	1 
	- \frac{ \V _{ {\rm ep} (\san{e}(m^k)), W } }
		{
		\left[
		 R 
			-  \frac{ i }{ J }
				- \frac{ 1}{\sqrt { n } } (nI( {\rm ep} (\san{e}(m^k)), W_{Y|X}) + \log{(|T_n|+1)} - nC) 
				\right]^2
				} .
\end{align} 
For sufficiently large $ n $, we have
\begin{align}
&
	W_{\san{e}(m^k)}^n
	\left \{
	\frac{ i+1 }{ J }
	+ \sqrt{n} \left (
 \frac{ 1 }{ n }
				\log{\frac{W_{\san{e} (m^k)}(Y^n)} { W_{ {\rm ep} (\san{e}(m^k)) }^{ \times n }( Y^n ) } }
				+ \frac{ 1 }{ n } \log{(|T_n|+1)} - C
				\right )
\le R
\right \} \nonumber \\
\ge&
	1 
	- \frac{ \max_{ P_{ X } } \V _{ P_{ X }, W } }
		{
		\left[
		 R 
			-  \frac{ i+1 }{ J }
				- \frac{ \log{(|T_n|+1)}}{ \sqrt { n } } + \xi \sqrt { n } 
				\right]^2
				} . 
\end{align}
Since the value
\begin{align*}
1 
	- \frac{ \max_{ P_{ X } } \V _{ P_{ X }, W } }
		{
		\left[
		 R 
			-  \frac{ i+1 }{ J }
				- \frac{ \log{(|T_n|+1)}}{ \sqrt { n } } + \xi \sqrt { n } 
				\right]^2
				} 
\end{align*}
asymptotically goes to $ 1 $, we obtain (\ref{ dis, so, pr1 }). \\

\noindent{\it Step (iv):} For the second term of (\ref{dis,so,co2}), we will show the following fact:\\
Given an arbitrary small real number $\delta>0$, 
there exists a sufficiently large $n_2$ such that
\begin{align}
&W_{\san{e}(m^k)}^n 
	\left \{
	\frac{ i+1 }{ J }
	+ \sqrt{n} \left (
 \frac{ 1 }{ n }
				\log{\frac{W_{\san{e}(m^k)}^n(Y^n)} { (Q_M )^{ \times n }( Y^n ) } }
				+ \frac{ 1 }{ n } \log{(|T_n|+1)} - C
				\right )
\le R
\right \} \nonumber \\
\ge&
\left \{
	\begin{array}{ll}
	\Phi_{ V^*_{+} (W_{Y|X}) }( R - \frac{ i+1 }{ J } )
	- \delta & \hbox { when } R \ge \frac{ i+1 }{ J }\\
	\Phi_{ V^*_{-} (W_{Y|X}) }( R - \frac{ i+1 }{ J })- \delta
	&\hbox{ when } R < \frac{ i+1 }{ J },
	\end{array}
\right.
\Label{ dis, so, pr2 }
\end{align}
for $n \ge n_2$ and $m^k \in \Omega_n$. 

Now, to evaluate the variance of some random variable later, we define the quantity
\begin{align}
\V' _{ P, W } 
:=
\rom{E} _{ P } \rom { E } _{ W_x }  
	\left(
		\log \frac{{ W_x} }{ Q_M } 
		-D(W_x \| Q_M)
			\right)^2. 
\end{align} 
When $m^k \in \Omega_n $, the inequality
\begin{align}
&W_{\san{e}(m^k)}^n
	\left \{
	\frac{ i+1 }{ J }
	+ \sqrt{n} \left (
				\log{\frac{W_{\san{e}(m^k)}^n(Y^n)} { (Q_M )^{ \times n }( Y^n ) } }
				+ \frac{ 1 }{ n } \log{(|T_n|+1)} - C
				\right)
\le R
\right \} \nonumber \\
\ge&
W_{\san{e}(m^k)}^n
	\left \{
	\frac{ i+1 }{ J }
	+ \sqrt{n} \left (
				\log{\frac{W_{\san{e}(m^k)}^n(Y^n)} { (Q_M )^{ \times n }( Y^n ) } }
				+ \frac{ 1 }{ n }  \log{(|T_n|+1)} - I( {\rm ep} (\san{e}(m^k)), W_{Y|X}) 
				\right )
\le R
\right \} 
\end{align} 
holds. Since the random variable
\begin{align}
\log{\frac{W_{\san{e}(m^k)}^n(Y^n)} { (Q_M )^{ \times n }( Y^n ) } }
=
\sum_i
	\log { \frac{ W_{x_i} ( Y_i ) }{ Q_M ( Y_i ) } }
\end{align}
 has the variance $ n V'_{ {\rm ep} (\san{e}(m^k)), W } $, applying the central limit theorem, we have 
\begin{align}
&W_{\san{e}(m^k)}^n
	\left \{
	\frac{ i+1 }{ J }
	+ \sqrt{n} \left (
				\log{\frac{W_{\san{e}(m^k)}^n(Y^n)} { (Q_M )^{ \times n }( Y^n ) } }
				+ \frac{ 1 }{ n }  \log{(|T_n|+1)} - I( {\rm ep} (\san{e}(m^k)), W_{Y|X}) 
				\right )
\le R
\right \} \nonumber \\
&\ge
\Phi
\left (\frac{ R - \frac{ i+1 }{ J }}{ \sqrt{V'_{ {\rm ep} (\san{e}(m^k)), W }} } \right ) - \delta
=
\Phi_{ V'_{ {\rm ep} (\san{e}(m^k)), W } }
\left ( R - \frac{ i+1 }{ J } \right ) - \delta,
\end{align}
for sufficiently large $ n $.
Because $ \Phi(\cdot) $ is a monotonicity increasing function and
the inequalities 
\begin{align}
V^*_{-} (W_{Y|X})
\le
V'_{ {\rm ep} (\san{e}(m^k)), W }
\le
V^*_{+} (W_{Y|X})
\end{align}
holds, 
the condition $ R - \frac{ i+1 }{ J } \ge 0  $  implies
\begin{align}
\Phi_{ V'_{ {\rm ep} (\san{e}(m^k)), W } }\left (
R - \frac{ i+1 }{ J } 
\right )
\ge
\Phi_{ V^*_{+} (W_{Y|X}) }
\left ( R - \frac{ i+1 }{ J }  \right ),
\end{align}
and the other condition $ R - \frac{ i+1 }{ J } < 0  $ implies
\begin{align}
\Phi_{ V'_{ {\rm ep} (\san{e}(m^k)), W } }
\left (  R - \frac{ i+1 }{ J }  \right )
\ge
\Phi_{ V^*_{-} (W_{Y|X}) }
\left (
R - \frac{ i+1 }{ J } 
\right ).
\end{align}
Hence, we obtain (\ref{ dis, so, pr2 }). 

\noindent{\it Step (v) :} We will show the following fact: 
Given an arbitrary small real number $\delta>0$, 
there exists a sufficiently large $n_3$ such that
\begin{align}
&\Pj[\phi|k, n]\nonumber \\
\ge &
\sum_{ i= -LJ } ^{ i_0 } 
P_{M^k} \left \{
\frac{ i }{ J } \le S(m^k) \le \frac{ i+1 }{ J }
\right \}
\Phi \left(
\frac{ R - \frac{ i+1 }{ J } }{ \sqrt { V^*_{+} (W_{Y|X}) } } 
\right)\nonumber \\
&+
\sum_{ i= i_0 } ^{ LJ-1 } 
P_{M^k} \left \{
\frac{ i }{ J } \le S(m^k) \le \frac{ i+1 }{ J }
\right \}
\Phi_{  V^*_{-} (W_{Y|X})  } 
 \left(  R - \frac{ i+1 }{ J } \right) - \delta,
\end{align}
where $ i_0 := \max \{ i \in \mathbb{Z} | \frac{ i+1 }{ J } \le R \} $,
for $n \ge n_3$  and $ m^k \in \Omega_n $.

Combining (\ref{ dis, so, pr1 }) and (\ref { dis, so, pr2 }), for sufficiently large $ n $, we obtain
\begin{align*}
&\Pj[\phi|k, n] \\
\ge&
\sum_{ i= -LJ } ^{ LJ-1 }
	\sum_{m^k \in \pi_{ n, J, i } \cap \Omega_n} 
	P_{M^k} (m^k) W_{\san{e}(m^k)}^n
	\left \{
	\frac{ i+1 }{ J }
	+ \sqrt{n} \left (
 \frac{ 1 }{ n }
				\log{\frac{W_{\san{e}(m^k)}^n(Y^n)} { (Q_M )^{ \times n }( Y^n ) } }
				+ \frac{ 1 }{ n }\log{(|T_n|+1)} - C
				\right )
\le R
\right \} \nonumber \\
&+\sum_{ i= -LJ } ^{ LJ-1 }
	\sum_{m^k \in \pi_{ n, J, i } \cap \Omega_n^c } 
	P_{M^k} (m^k) W_{\san{e}(m^k)}^n
	\left \{
	\frac{ i+1 }{ J }
	+ \sqrt{n} \left (
 \frac{ 1 }{ n }
				\log{\frac{W_{\san{e}(m^k)}^n(Y^n)} { W_{ {\rm ep} (\san{e}(m^k)) }^{ \times n }( Y^n ) } }
				+ \frac{ 1 }{ n } \log{(|T_n|+1)} - C
				\right )
\le R
\right \} \\
\ge &
\sum_{ i= -LJ } ^{ LJ-1 }
	\sum_{m^k \in \pi_{ n, J, i } \cap \Omega_n} 
	P_{M^k} (m^k) W_{\san{e}(m^k)}^n
	\left \{
	\frac{ i+1 }{ J }
	+ \sqrt{n} \left (
	\frac{ 1 }{ n }
	\log{\frac{W_{\san{e}(m^k)}^n(Y^n)} { (Q_M )^{ \times n }( Y^n ) } }
	+ \frac{ 1 }{ n }\log{(|T_n|+1)} - C
	\right )
	\le R
	\right \}		
 \nonumber\\
&+\sum_{ i= -LJ } ^{ LJ-1 }
	\sum_{m^k \in \pi_{ n, J, i } \cap \Omega_n^c } 
	P_{M^k} (m^k) \cdot 1 - \delta \\
\ge &
\sum_{ i= -LJ } ^{ LJ-1 }
	\sum_{m^k \in \pi_{ n, J, i }} 
	P_{M^k} (m^k) W_{\san{e}(m^k)}^n
	\left \{
	\frac{ i+1 }{ J }
	+ \sqrt{n} \left (
	\frac{ 1 }{ n }
	\log{\frac{W_{\san{e}(m^k)}^n(Y^n)} { (Q_M )^{ \times n }( Y^n ) } }
	+ \frac{ 1 }{ n }\log{(|T_n|+1)} - C
	\right )
	\le R
	\right \} - \delta \\
\ge&
\sum_{ i= -LJ } ^{ i_0 } 
P_{M^k} \left \{
\frac{ i }{ J } \le S(m^k) \le \frac{ i+1 }{ J }
\right \}
\Phi_{  V^*_{+} (W_{Y|X}) } 
 \left(
 R - \frac{ i+1 }{ J } 
 \right)\nonumber \\
&+
\sum_{ i= i_0 } ^{ LJ-1 } 
P_{M^k} \left \{
\frac{ i }{ J } \le S(m^k) \le \frac{ i+1 }{ J }
\right \}
\Phi_{  V^*_{-} (W_{Y|X})  } 
 \left(
 R - \frac{ i+1 }{ J } 
 \right) - \delta.
\end{align*}

\noindent{\it Step (vi):} We will show the following fact: Given an arbitrary small real number $\delta'>0$, 
there exist sufficiently large numbers $n_4, L$, and $J$ such that
\begin{align}
&
\sum_{ i= -LJ } ^{ i_0 } 
P_{M^k} \left \{
\frac{ i }{ J } \le S(m^k) \le \frac{ i+1 }{ J }
\right \}
\Phi_{ V^*_{-} (W_{Y|X})  } 
 \left(
 R - \frac{ i+1 }{ J } 
 \right)\nonumber \\
& +
\sum_{ i= i_0 } ^{ LJ-1 } 
P_{M^k} \left \{
\frac{ i }{ J } \le S(m^k) \le \frac{ i+1 }{ J }
\right \}
\Phi_{V^*_{+} (W_{Y|X})  } 
 \left(
 R - \frac{ i+1 }{ J } 
 \right) \nonumber \\
\ge & 
\varepsilon(R) - \delta' \Label{dis,so,st6},
\end{align}
for $n \ge n_4$. 

From the central limit theorem for Markov sequence 
(Proposition \ref{CLT}), random variable
$ S(M^k) $
asymptotically obeys Gaussian distribution with mean $ 0 $ and variance $ \frac{C}{H^{W_{s}}(M)}V^{ W_s }(M) $ i.e.,
\begin{align}
P_{M^k} \left \{
		\frac{ i }{ J } \le S(m^k) \le \frac{ i+1 }{ J }
	\right \} 
\to
\int_{ \frac{ i }{ J } }^{ \frac{ i+1 }{ J } }
		\varphi_{\frac{C}{H^{W_{s}}(M)}V^{ W_s }(M)} (x) dx  \quad ( n \to \infty ).
\end{align} 
With the limit $n \to \infty $, we have
\begin{align*}
P_{M^k} \left \{
\frac{ i }{ J } \le S(m^k) \le \frac{ i+1 }{ J }
\right \}
\Phi_{  V^*_{\pm} (W_{Y|X})  } 
 \left(
 R - \frac{ i+1 }{ J } 
 \right)
\to 
\int_{ \frac{ i }{ J } }^{ \frac{ i+1 }{ J } }
\varphi_{\frac{C}{H^{W_{s}}(M)}V^{ W_s }(M)} (x) dx 
\int_{ -\infty }^{ R - \frac{ i+1 }{ J } }
\varphi_{ V^*_{\pm} (W_{Y|X})} (y) dy .
\end{align*} 
So, taking the limit $n \to \infty$, we have 
\begin{align}
&
\sum_{ i= -LJ } ^{ i_0 } 
P_{M^k} \left \{
\frac{ i }{ J } \le S(m^k) \le \frac{ i+1 }{ J }
\right \}
\Phi_  { V^*_{-} (W_{Y|X}) }  
 \left(
 R - \frac{ i+1 }{ J } 
\right) \nonumber \\
&+
\sum_{ i= i_0 } ^{ LJ-1 } 
P_{M^k} \left \{
\frac{ i }{ J } \le S(m^k) \le \frac{ i+1 }{ J }
\right \}
\Phi_ { V^*_{+} (W_{Y|X}) }
\left( R - \frac{ i+1 }{ J } 
\right)	\\
\to&
\sum_{ i= -LJ } ^{ i_0 } \int_{ \frac{ i }{ J } }^{ \frac{ i+1 }{ J } }
\varphi_{\frac{C}{H^{W_{s}}(M)}V^{ W_s }(M)} (x) dx 
\int_{ -\infty }^{ R - \frac{ i+1 }{ J } }
\varphi_{ V^*_{\pm} (W_{Y|X})} (y) dy \nonumber \\
&+
\sum_{ i= i_0 } ^{ LJ-1 }  \int_{ \frac{ i }{ J } }^{ \frac{ i+1 }{ J } }
\varphi_{\frac{C}{H^{W_{s}}(M)}V^{ W_s }(M)} (x) dx 
\int_{ -\infty }^{ R - \frac{ i+1 }{ J } }
\varphi_{ V^*_{\pm} (W_{Y|X})} (y) dy. \Label{dis,so,st6,1}
\end{align}
When $ J \to \infty $, we can compute (\ref{dis,so,st6,1}) as:
\begin{align*}
&
\sum_{ i= -LJ } ^{ i_0 } \int_{ \frac{ i }{ J } }^{ \frac{ i+1 }{ J } }
\varphi_{\frac{C}{H^{W_{s}}(M)}V^{ W_s }(M)} (x) dx 
\int_{ -\infty }^{ R - \frac{ i+1 }{ J } }
\varphi_{ V^*_{-} (W_{Y|X})} (y) dy \nonumber \\
&
+
\sum_{ i= i_0 } ^{ LJ-1 }  \int_{ \frac{ i }{ J } }^{ \frac{ i+1 }{ J } }
\varphi_{\frac{C}{H^{W_{s}}(M)}V^{ W_s }(M)} (x) dx 
\int_{ -\infty }^{ R - \frac{ i+1 }{ J } }
\varphi_{ V^*_{+} (W_{Y|X})} (y) dy\\
\to&
 \int_{ -L }^{ R }
\left (
\int_{ -\infty }^{ R - x }
\varphi_{ V^*_{+} (W_{Y|X})} (y) dy
\right )
\varphi_{\frac{C}{H^{W_{s}}(M)}V^{ W_s }(M)} (x) dx \nonumber \\
& 
+
 \int_{ R }^{ L }
 \left (
 \int_{ -\infty }^{ R - x }
 \varphi_{ V^*_{-} (W_{Y|X})} (y) dy
 \right )
 \varphi_{\frac{C}{H^{W_{s}}(M)}V^{ W_s }(M)} (x) dx.
\end{align*}

Furthermore, when $ L \to \infty $, 
\begin{align*}
&
\int_{ -L }^{ R }
\left (
\int_{ -\infty }^{ R - x }
\varphi_{ V^*_{+} (W_{Y|X})} (y) dy
\right )
\varphi_{\frac{C}{H^{W_{s}}(M)}V^{ W_s }(M)} (x) dx 
 +
\int_{ R }^{ L }
\left (
\int_{ -\infty }^{ R - x }
\varphi_{ V^*_{-} (W_{Y|X})} (y) dy
\right )
\varphi_{\frac{C}{H^{W_{s}}(M)}V^{ W_s }(M)} (x) dx\\
\to&
\int_{ -\infty }^{ R }
\left (
\int_{ -\infty }^{ R - x }
\varphi_{ V^*_{+} (W_{Y|X})} (y) dy
\right )
\varphi_{\frac{C}{H^{W_{s}}(M)}V^{ W_s }(M)} (x) dx 
+
\int_{ R }^{ \infty }
\left (
\int_{ -\infty }^{ R - x }
\varphi_{ V^*_{-} (W_{Y|X})} (y) dy
\right )
\varphi_{\frac{C}{H^{W_{s}}(M)}V^{ W_s }(M)} (x) dx \\
=& \varepsilon(R).
\end{align*}
So, we obtain (\ref{dis,so,st6}).

\noindent{\it Step (vii):} Since $ \delta, \delta' > 0 $ are arbitrary, 
the combination of Steps (iv) and (v) yields
\begin{align}
\liminf_{ n \to \infty }\Pj[\phi|k, n]
\ge
\varepsilon(R). 
\end{align}

\section{The Comparison between Joint and Separation Scheme}\Label{S7}
\subsection{Formulation for separation coding}\Label{S71}
In this section, we compare the performance of the joint scheme with the performance of the separation scheme.
To discuss the separation scheme, we formulate a separation encoder and a separation decoder.
Firstly, we fix the input and output coding-lengths to be $k$ and $n$.
Then, we need to consider the encoded set $\{ 1 ,\cdots , A\}$ of source coding, which is also the message set of the channel coding.
Since the channel encoder does not know the source distribution,
it is natural to consider the average case with respect to the permutation on the set $\{ 1 ,\cdots , A \}$.
To handle such a permutation, we focus on the following triplet;
\begin{itemize}
\item A source encoder $\san{e}_{s,k,A} : {\cal M}^k \to \{ 1 ,\cdots , A\}$.\\
\item A source-channel mapping $ f_U:  \{ 1 ,\cdots , A \} \to \{ 1 ,\cdots , A \}$.\\
\item A channel encoder $ \san{e}_{c,A,n} :  \{ 1 ,\cdots , A \} \to {\cal X}^n $.
\end{itemize}
Then, our separation encoder is given as $\san{e}_{c,A,n} \circ f_U \circ \san{e}_{s,k,A}$.
The source-channel mapping $f_U$ is a random variable subject to the uniform distribution on the set of permutations on the set $\{ 1 ,\cdots , A \}$. 
To discuss the separation decoder, we consider
\begin{itemize}
\item A source decoder $\san{d}_{s,A,k} : \{ 1 ,\cdots , A \}\to {\cal M}^k$.\\
\item The inverse of the source-channel mapping $ f_U^{-1}:  \{ 1 ,\cdots , A \} \to \{ 1 ,\cdots , A \}$\\
\item A channel decoder $ \san{d}_{c,n,A} :  {\cal X}^n\to \{ 1 ,\cdots , A \} $.
\end{itemize}
So, our separation decoder is given as $\san{d}_{s,A,k}\circ f_U^{-1} \circ \san{d}_{s,A,k}$.
That is, our separation code is composed of 
$ (\san{e}^*_n, \san{d}^*_n):= (\san{e}_{c,A,n} \circ f_U \circ \san{e}_{s,k,A}  , 
\san{d}_{s,A,k}\circ f_U^{-1} \circ \san{d}_{s,A,k}) $.

Here, the source code $ (\san{e}_{s,k,A}, \san{d}_{s,A,k}) $ has the source coding rate
\begin{align}
R_{s}
:=
\frac{ \log A }{ k },
\end{align}
and the channel code $ (\san{e}_{c,A,n}, \san{d}_{c,n,A}) $ has the channel coding rate
\begin{align}
R_{c}
:=
\frac{ \log A }{ n }.
\end{align}
Then, the decoding error probability of the code $  (\san{e}^*_n, \san{d}^*_n) $
is given as the probability that the error occurs in the source coding or the channel coding.
Hence, the decoding error probability
$ \rom{P}_{\rm sep} (\san{e}^*_n, \san{d}^*_n) $ is defined as
\begin{align}
&\sum_{m \in { \cal M }^k:\san{d}_{s,A,k} \circ \san{e}_{s,k,A}( m ) \neq m} P_{ M^k }( m )
\nonumber \\
&+
\sum_{m \in { \cal M }^k:\san{d}_{s,A,k} \circ \san{e}_{s,k,A}( m ) = m}  P_{ M^k }( m )
W_{Y^n|X^n}( \{y: \san{d}_{c,n,A} (y) \neq f  \circ  \san{e}_{s,k,A}( m ) \} | \san{e}_{c,A,n}\circ f_U \circ  \san{e}_{s,k,A}( m ) ).
\end{align}
Since the source-channel mapping $f_U$ takes the value in the permutation on the set $\{ 1 ,\cdots , A \}$ subject to the uniform distribution,
it is natural to take the average with respect to the choice of $f_U$.
Hence, 
the value $\rom{P}_{\rm sep}[(\san{e}_{s,k,A}, \san{d}_{s,A,k}) ,(\san{e}_{c,A,n}, \san{d}_{c,n,A}) ] $
is defined as
the average of $ \rom{P}_{\rm sep} (\san{e}^*_n, \san{d}^*_n) $ with respect to this choice;
\begin{align}
\rom{P}_{\rm sep}[(\san{e}_{s,k,A}, \san{d}_{s,A,k}) ,(\san{e}_{c,A,n}, \san{d}_{c,n,A}) ] 
&:=
E_U \rom{P}_{\rm sep} (\san{e}^*_n, \san{d}^*_n). \Label{sep,ep} 
\end{align}

Let $\rom{P}_s(\san{e}_{s,k,A}, \san{d}_{s,A,k})$
be the decoding error probability of the source code $(\san{e}_{s,k,A}, \san{d}_{s,A,k})$,
and let
$\rom{P}_c(\san{e}_{c,A,n}, \san{d}_{c,n,A})$
be the decoding error probability of the channel code $(\san{e}_{c,A,n}, \san{d}_{c,n,A})$
with the message subject to the uniform distribution. 
Then, we have the following lemma.
\begin{lemma}\Label{L6}
The average 
$\rom{P}_{\rm sep}[(\san{e}_{s,k,A}, \san{d}_{s,A,k}) ,(\san{e}_{c,A,n}, \san{d}_{c,n,A}) ]
$ is calculated as
\begin{align}
\rom{P}_{\rm sep}[(\san{e}_{s,k,A}, \san{d}_{s,A,k}) ,(\san{e}_{c,A,n}, \san{d}_{c,n,A}) ]
=
\rom{P}_s(\san{e}_{s,k,A}, \san{d}_{s,A,k})
*
\rom{P}_c(\san{e}_{c,A,n}, \san{d}_{c,n,A}).\Label{Eq36}
\end{align}
\end{lemma}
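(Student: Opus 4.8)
The plan is to unfold the definition \eqref{sep,ep}, namely $\rom{P}_{\rm sep}[(\san{e}_{s,k,A}, \san{d}_{s,A,k}),(\san{e}_{c,A,n}, \san{d}_{c,n,A})] = E_U \rom{P}_{\rm sep}(\san{e}^*_n,\san{d}^*_n)$, split the sum over source messages $m\in{\cal M}^k$ according to whether the source code decodes $m$ correctly, and then push the expectation $E_U$ over the uniformly random permutation $f_U$ through the (finite) sums. Let $G:=\{m\in{\cal M}^k : \san{d}_{s,A,k}\circ\san{e}_{s,k,A}(m)=m\}$ be the set on which source decoding succeeds. The contribution of $m\notin G$ to $\rom{P}_{\rm sep}(\san{e}^*_n,\san{d}^*_n)$ is $\sum_{m\notin G}P_{M^k}(m)$, which carries no dependence on $f_U$ and equals $\rom{P}_s(\san{e}_{s,k,A},\san{d}_{s,A,k})$ by the very definition of the source-coding error probability.

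The key step concerns a message $m\in G$: writing $i:=\san{e}_{s,k,A}(m)$, its conditional channel error probability is $W_{Y^n|X^n}(\{y:\san{d}_{c,n,A}(y)\neq f_U(i) \} | \san{e}_{c,A,n}(f_U(i)) )$, which depends on the random permutation $f_U$ only through the value $f_U(i)$. First I would invoke the elementary symmetry fact that, when $f_U$ is uniform on the group of permutations of $\{1,\ldots,A\}$, the marginal law of $f_U(i)$ is uniform on $\{1,\ldots,A\}$ for each fixed $i$ (there are exactly $(A-1)!$ permutations sending $i$ to any prescribed value). Hence
\begin{align*}
& E_U\, W_{Y^n|X^n}( \{y:\san{d}_{c,n,A}(y)\neq f_U(i) \} | \san{e}_{c,A,n}(f_U(i)) ) \\
=& \frac{1}{A}\sum_{j=1}^{A}W_{Y^n|X^n}( \{y:\san{d}_{c,n,A}(y)\neq j \} | \san{e}_{c,A,n}(j) )
= \rom{P}_c(\san{e}_{c,A,n},\san{d}_{c,n,A}),
\end{align*}
independently of $m\in G$, since the right-hand side is exactly the channel-coding error probability under a uniformly distributed input message.

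Assembling the two contributions, $E_U\rom{P}_{\rm sep}(\san{e}^*_n,\san{d}^*_n) = \rom{P}_s(\san{e}_{s,k,A},\san{d}_{s,A,k}) + \big(\sum_{m\in G}P_{M^k}(m)\big)\,\rom{P}_c(\san{e}_{c,A,n},\san{d}_{c,n,A})$; using $\sum_{m\in G}P_{M^k}(m)=1-\rom{P}_s(\san{e}_{s,k,A},\san{d}_{s,A,k})$ together with the definition $a*b=a+b-ab$, this is $\rom{P}_s+\rom{P}_c-\rom{P}_s\rom{P}_c=\rom{P}_s*\rom{P}_c$, which is \eqref{Eq36}. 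I do not expect a real obstacle: the legitimacy of interchanging $E_U$ with the finite sums is immediate, and the only subtlety is the bookkeeping that a source-decoding failure must be counted as a full decoding error before any averaging over $f_U$ takes place. Equivalently, the identity is just the statement $P(\text{error}) = P(\text{source fails}) + P(\text{source succeeds})\cdot\rom{P}_c$ combined with $P(\text{source succeeds})=1-\rom{P}_s$.
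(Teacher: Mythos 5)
Your proposal is correct and follows essentially the same route as the paper's proof: split according to whether the source code decodes $m$ correctly, use the symmetry of the uniform random permutation (the paper's $\frac{1}{A!}\sum_{U:f_U(m)=a}$ counting is exactly your observation that $f_U(i)$ is uniform on $\{1,\ldots,A\}$) to turn the channel term into $\rom{P}_c$ under the uniform message, and then combine via $a*b=a+b-ab$. No gaps.
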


\begin{proof}
From (\ref{sep,ep}), we have
\begin{align}
&\rom{P}_{\rm sep}[(\san{e}_{s,k,A}, \san{d}_{s,A,k}) ,(\san{e}_{c,A,n}, \san{d}_{c,n,A}) ] 
\nonumber \\
=&
E_U \rom{P}_{\rm sep} (\san{e}^*_n, \san{d}^*_n)\nonumber \\
=&
\sum_{m \in { \cal M }^k:\san{d}_{s,A,k} \circ \san{e}_{s,k,A}( m ) \neq m} P_{ M^k }( m )
\nonumber\\
&+
E_U \sum_{m \in { \cal M }^k:\san{d}_{s,A,k} \circ \san{e}_{s,k,A}( m ) = m}  P_{ M^k }( m )
W_{Y^n|X^n}( \{y: \san{d}_{c,n,A} (y) \neq f  \circ  \san{e}_{s,k,A}( m ) \} | \san{e}_{c,A,n}\circ f_U \circ  \san{e}_{s,k,A}( m ) ). \Label{com,pf} 
\end{align}
The second term of (\ref{com,pf}) can be calculated as follows. 
\begin{align}
&
E_U \sum_{m \in { \cal M }^k:\san{d}_{s,A,k} \circ \san{e}_{s,k,A}( m ) = m}  P_{ M^k }( m )
W_{Y^n|X^n}( \{y: \san{d}_{c,n,A} (y) \neq f  \circ  \san{e}_{s,k,A}( m ) \} | \san{e}_{c,A,n}\circ f_U \circ  \san{e}_{s,k,A}( m ) ) \nonumber\\
=&
\sum_{m \in { \cal M }^k:\san{d}_{s,A,k} \circ \san{e}_{s,k,A}( m ) = m}  P_{ M^k }( m )
\sum_{ a \in \{ 1 ,\cdots , A \}  }
\frac{ 1 }{ A ! }\sum_{U:f_U(m)=a}
W_{Y^n|X^n}( \{y: \san{d}_{c,n,A} (y) \neq a \} | \san{e}_{c,A,n}( a ) ) \nonumber\\
=&
\left (
1-\sum_{m \in { \cal M }^k:\san{d}_{s,A,k} \circ \san{e}_{s,k,A}( m ) \neq m} P_{ M^k }( m )
\right )
\frac{ 1 }{ A }\sum_{ a \in \{ 1 ,\cdots , A \}  }
W_{Y^n|X^n}( \{y: \san{d}_{c,n,A} (y) \neq a \} | \san{e}_{c,A,n}( a ) ).\Label{com,pf1}
\end{align}
Combining (\ref{com,pf}) and (\ref{com,pf1}), we have 
\begin{align*}
&\rom{P}_{\rm sep}[(\san{e}_{s,k,A}, \san{d}_{s,A,k}) ,(\san{e}_{c,A,n}, \san{d}_{c,n,A}) ] \\
=&
\sum_{m \in { \cal M }^k:\san{d}_{s,A,k} \circ \san{e}_{s,k,A}( m ) \neq m} P_{ M^k }( m )
\\
&+
\left (
1-\sum_{m \in { \cal M }^k:\san{d}_{s,A,k} \circ \san{e}_{s,k,A}( m ) \neq m} P_{ M^k }( m )
\right )
\frac{ 1 }{ A }\sum_{ a \in \{ 1 ,\cdots , A \}  }
W_{Y^n|X^n}( \{y: \san{d}_{c,n,A} (y) \neq a \} | \san{e}_{c,A,n}( a ) )\\
=&
\rom{P}_s(\san{e}_{s,k,A}, \san{d}_{s,A,k})
*
\rom{P}_c(\san{e}_{c,A,n}, \san{d}_{c,n,A}).
\end{align*}

\end{proof}

Under the fixed input and output coding-lengths $k$ and $n$, 
we minimize the above value 
$\rom{P}_{\rm sep}[(\san{e}_{s,k,A}, \san{d}_{s,A,k}) ,(\san{e}_{c,A,n}, \san{d}_{c,n,A}) ] $ as
\begin{align}
\rom{P}_{\rm sep }^{ * } (k,n:A) :=
\min_{(\san{e}_{s,k,A}, \san{d}_{s,A,k}) ,(\san{e}_{c,A,n}, \san{d}_{c,n,A})}\rom{P}_{\rm sep}[(\san{e}_{s,k,A}, \san{d}_{s,A,k}) ,(\san{e}_{c,A,n}, \san{d}_{c,n,A}) ] .
\end{align}
Here, since 
\begin{align}
&\min_{ (\san{e}_{s,k,A}, \san{d}_{s,A,k})}
\sum_{m \in { \cal M }^k:\san{d}_{s,A,k} \circ \san{e}_{s,k,A}( m ) \neq m} P_{ M^k }( m )
=\Ps(A ; P_{ M^k })\\
&\min_{(\san{e}_{s,k,A}, \san{d}_{c,n,A})}
\sum_{ a \in \{ 1 ,\cdots , A \}  }
W_{Y^n|X^n}( \{y: \san{d}_{c,n,A} (y) \neq a \} | \san{e}_{c,A,n}( a ) )
= \Pc(A ; W_{ Y^n| X^n }),
\end{align}
we have 
\begin{align}
&\rom{P}_{\rm sep }^{ * } (k,n:A) \nonumber \\
:=&
\min_{(\san{e}_{s,k,A}, \san{d}_{s,A,k}) ,(\san{e}_{c,A,n}, \san{d}_{c,n,A})}\rom{P}_{\rm sep}[(\san{e}_{s,k,A}, \san{d}_{s,A,k}) ,(\san{e}_{c,A,n}, \san{d}_{c,n,A}) ] \nonumber \\
=&
\Ps(A ; P_{ M^k }) * \Pc(A ; W_{ Y^n| X^n }) .\Label{sep,pro}
\end{align}
Note that for any two real numbers $ \alpha $ and $ \beta $, 
\begin{align}
\min_{\alpha,\beta}(\alpha * \beta)
=\min(\alpha)*\min(\beta). 
\end{align}
Considering the minimum with given value $A$, we have
\begin{align}
\rom{P}_{\rm sep }^{ * } (k,n)
= \min_{A} \rom{P}_{\rm sep }^{ * } (k,n;A).
\end{align}
Hereafter, we note the coding rate of the separation scheme $r_n$ as $ r_n := \frac{ k }{ n } $.
Additionally, we define
\begin{align}
\rom{K_{sep}}(n,\varepsilon)
:=
\inf \{ k | \rom{P}_{\rm sep }^{ * } (k,n) \le \varepsilon \}.
\end{align}

\begin{remark}
Many existing papers \cite{DAY2,ZA,VSM}
discussed the separation scheme, and they focused on the value 
$\rom{P}_s(\san{e}_{s,k,A}, \san{d}_{s,A,k}) * \rom{P}_c(\san{e}_{c,A,n}, \san{d}_{c,n,A}) $.
However, they did not give a rigorous derivation of this value.
The contribution of this subsection is derivation of this value from 
the formulation given here, which is rigorously shown as Lemma \ref{L6}.
\end{remark}

\subsection{Second order analysis}\Label{S72}
\subsubsection{Conditional additive channel case}
In this section, we evaluate the second order rate of the separation scheme. 
Using the $*$-product distribution $\tilde{\Phi} \left [
\frac{ C }{ H^{W_s}(M) } V^{ W_s } (M), 
V^{W_c} (X|Z)
\right ] $,
we have the following theorem for a conditional additive channel given by 
the transition matrix $ W_c $.
\begin{theorem}\Label{sep,so,th} 
The optimal transmission length
$\rom{K_{sep}}(n,\varepsilon)$ is asymptotically expanded as
\begin{align}
\lim_{n \to \infty}\frac{ \rom{K_{sep}}(n,\varepsilon)H^{ W_s }(M) - nC }{ \sqrt{n} }
=
\tilde{\Phi} \left [
\frac{ C }{ H^{W_s}(M) } V^{ W_s } (M), 
V^{W_c} (X|Z)
\right ] ^{-1}(\varepsilon). \Label{sep,so,th7}
\end{align} 
In other words, 
\begin{align}
\lim_{n \to \infty}\rom{P}_{\rm sep} 
\left (
n \frac{ C }{ H^{W_s}(M) } + \sqrt{n} \frac{ R }{ H^{W_s}(M) }, 
n
\right )
=
\varepsilon_{\rm sep}(R),
\end{align}
where
\begin{align}
\varepsilon_{\rm sep}(R)
:=
\tilde{\Phi} \left [
\frac{ C }{ H^{W_s}(M) } V^{ W_s } (M), 
V^{W_c} (X|Z)
\right ] (R). \Label{sep,so,th,err}
\end{align}
\end{theorem}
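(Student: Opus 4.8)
The plan is to start from the exact decomposition $\rom{P}_{\rm sep }^{ * }(k,n;A) = \Ps(A;P_{M^k}) * \Pc(A;W_{Y^n|X^n})$ of \eqref{sep,pro}, together with $\rom{P}_{\rm sep }^{ * }(k,n) = \min_A \rom{P}_{\rm sep }^{ * }(k,n;A)$, to insert the individual second-order expansions of the Markovian-source compression error $\Ps(A;P_{M^k})$ and of the conditional-additive channel-coding error $\Pc(A;W_{Y^n|X^n})$, and then to carry out the optimization over the intermediate size $A$ at the level of the limiting Gaussian quantities, exploiting that $a*b=a+b-ab$ is jointly continuous and nondecreasing in each variable on $[0,1]^2$.

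Concretely, I would parametrize $\log A = nC + \sqrt{n}\,a$ with $a\in\mathbb{R}$, and abbreviate $v_1 := \frac{C}{H^{W_s}(M)}V^{W_s}(M)$ and $v_2 := V^{W_c}(X|Z)$. With the choice $kH^{W_s}(M) = nC + \sqrt{n}R$, so that $\sqrt{k} = \sqrt{n}\sqrt{C/H^{W_s}(M)}\,(1+o(1))$, one has $\log A = kH^{W_s}(M) + \sqrt{n}(a-R)$; then the second-order source-coding result for $W_s$ — recovered from the single-shot bounds of Section \ref{S4} and Proposition \ref{CLT}, as indicated after the proof of Theorem \ref{T10} — gives $\Ps(A;P_{M^k}) \to \Phi_{v_1}(R-a)$. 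Likewise, specializing the single-shot bounds \eqref{a} and \eqref{e} to the uniform message of size $A$ and applying Proposition \ref{CLT} to $-\log P_{X^n|Z^n}(X^n|Z^n)$ — whose dispersion $V^{W_c}(X|Z)$ is unique because the capacity-achieving input of a conditional additive channel is uniform — gives $\Pc(A;W_{Y^n|X^n}) \to \Phi_{v_2}(a)$; this is just the $V^{W_s}(M)=0$ instance of Theorem \ref{T10}. Hence, for each fixed $a$, $\rom{P}_{\rm sep }^{ * }(k,n;A) \to g(a) := \Phi_{v_1}(R-a) * \Phi_{v_2}(a)$.

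It remains to interchange $\min_A$ with $n\to\infty$. For the direct part, fix a minimizer $a^{*}$ of $g$, take the nearest admissible integer $A$ (rounding shifts $a$ by $O(1/\sqrt{n})$, harmless by continuity of $\Phi$ and of $*$), and conclude $\limsup_n \rom{P}_{\rm sep }^{ * }(k,n) \le g(a^{*}) = \min_a g(a)$. For the converse I would use that $\Ps(A;P_{M^k})$ is nonincreasing in $A$ while $\Pc(A;W_{Y^n|X^n})$ is nondecreasing in $A$. Since $g(a)\to 1$ as $a\to\pm\infty$ (because $t*1=1$, while $\Phi_{v_2}(a)\to1$ as $a\to+\infty$ and $\Phi_{v_1}(R-a)\to1$ as $a\to-\infty$), pick $L$ with $\min\{\Phi_{v_1}(R+L),\Phi_{v_2}(L)\}>\min_a g(a)$; then for every $A$ with $\log A\notin[nC-\sqrt{n}L,\,nC+\sqrt{n}L]$ the monotonicities and the finite-$n$ converse bounds force $\rom{P}_{\rm sep }^{ * }(k,n;A)\ge\max\{\Ps,\Pc\}$ eventually to exceed $\min_a g(a)$, whereas on the compact window the convergences $\Ps\to\Phi_{v_1}(R-a)$ and $\Pc\to\Phi_{v_2}(a)$ are uniform in $a\in[-L,L]$ (monotone functions converging pointwise to a continuous monotone limit converge uniformly on compacts), so $\liminf_n \rom{P}_{\rm sep }^{ * }(k,n) \ge \min_{|a|\le L} g(a) = \min_a g(a)$. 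The substitution $b=R-a$ then identifies $\min_a g(a) = \min_b \Phi_{v_1}(b)*\Phi_{v_2}(R-b) = \tilde{\Phi}[v_1,v_2](R) = \varepsilon_{\rm sep}(R)$ by \eqref{sep,errX}, and since $\tilde{\Phi}[v_1,v_2]$ is a continuous cumulative distribution function, inverting it via \eqref{sep,err} turns $\rom{P}_{\rm sep }^{ * }(k,n)\to\varepsilon_{\rm sep}(R)$ into $\rom{K_{sep}}(n,\varepsilon)H^{W_s}(M)-nC = \sqrt{n}\,\tilde{\Phi}[v_1,v_2]^{-1}(\varepsilon) + o(\sqrt{n})$.

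The main obstacle I anticipate is precisely this interchange of $\min_A$ with the asymptotic limit: one must control the convergence uniformly over a growing family of competing sizes $A$, rule out both pathologically small $A$ (where source coding fails) and pathologically large $A$ (where channel coding fails) by the window argument, and absorb the integrality of $A$ against the continuous parameter $a$. The two ingredient expansions themselves require no new idea — the channel part is the uniform-message specialization of Theorem \ref{T10}, and the source part is the standard Markovian varentropy result — so essentially all of the work lies in making the optimization over the separation point rigorous.
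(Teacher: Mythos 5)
Your proposal is correct, and it rests on the same two pillars as the paper's own proof: the exact separation decomposition \eqref{sep,pro} coming from Lemma \ref{L6}, and the second-order coding theorems for the Markovian source and the conditional additive channel (\cite[Theorems 10 and 21]{HW}, equivalently the single-shot bounds \eqref{a}, \eqref{e} with a uniform message plus Proposition \ref{CLT}). The only real difference is which of the two equivalent descriptions of the $*$-product distribution you work with: the paper fixes the target error $\varepsilon$, expands $\log A$ and $k$ through the two coding theorems, and optimizes over the split $\varepsilon=\varepsilon_s*\varepsilon_c$, landing on the inverse-function form \eqref{sep,err}; you fix $k$ with $kH^{W_s}(M)=nC+\sqrt{n}R$, parametrize $\log A=nC+\sqrt{n}a$, pass to the limiting error $\Phi_{v_1}(R-a)*\Phi_{v_2}(a)$ with your abbreviations $v_1,v_2$, and minimize over the threshold $a$, which is the direct form \eqref{sep,errX}. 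What your route buys is a rigorous treatment of the step the paper compresses into ``Optimizing the choice of $A$'': the monotonicity of $\Ps(A;P_{M^k})$ and $\Pc(A;W_{Y^n|X^n})$ in $A$, the window argument excluding $|a|>L$, and uniform convergence on the compact window genuinely justify interchanging $\min_A$ with $n\to\infty$, whereas the paper implicitly assumes convergent error splits along an optimizing sequence of sizes $A$. When finalizing, state explicitly that $\tilde{\Phi}\left[\frac{C}{H^{W_s}(M)}V^{W_s}(M),V^{W_c}(X|Z)\right]$ is continuous and strictly increasing (the sandwich of Lemma \ref{sep,so,lem} plus a short compactness argument for the minimizing $a$ suffices), since this is what converts the convergence of $\rom{P}_{\rm sep}^{*}$ at each fixed $R$ into the expansion \eqref{sep,so,th7} for $\rom{K_{sep}}(n,\varepsilon)$; this point is asserted without proof in Section \ref{S32} of the paper as well.
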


\begin{remark}
This theorem is an extension of the existing result \cite[Section V]{DAY}
to the case with Markovian source and a conditional additive channel.
\end{remark}

\begin{proof}
We assume that 
$\lim_{n \to \infty} \rom{P}_{\rm sep} (k,n) = \varepsilon$ 
and the intermediate set size of the separation code is $ A $. 
If $\Ps(A ; P_{ M^k })  \to \varepsilon_s$ and $ \Pc(A ; W_{ Y^n| X^n }) \to \varepsilon_c $ then $\varepsilon=\varepsilon_s * \varepsilon_c $.

The channel and source coding theorems for the Markovian case 
with the second order \cite[Theorems 10 and 21]{HW} guarantee the following relations
\begin{align}
\log A &= k H^{ W_s }(M) - \sqrt{ V^{ W_s }(M) }\sqrt{ k }\Phi^{-1}(\varepsilon_s)
+  o\left( \sqrt{k} \right), \Label{com,so,dis1}\\
\log A &=
n C +\sqrt{ V^{ W_c }(X|Z) }\sqrt{ n }\Phi^{-1}(\varepsilon_c)+  o\left( \sqrt{n} \right).
\end{align}
Hence, we have 
\begin{align}
k H^{ W_s }(M)
= n C + \sqrt{ V^{ W_s }(M) }\sqrt{ k }\Phi^{-1}(\varepsilon_s) + \sqrt{ V^{ W_c }(X|Z) }\sqrt{ n }\Phi^{-1}(\varepsilon_c) +  o\left( \sqrt{n} \right).
\end{align}

Since $ \frac{k}{n} = \frac{ C }{H^{ W_s }(M) } + o(n) $, 
\begin{align}
\frac{ k H^{ W_s }(M) }{ n }
=
C + 
\left (
\sqrt{ \frac{ C }{ H^{ W_s }(M) } V^{ W_s }(M) }\Phi^{-1}(\varepsilon_s)
+ \sqrt{ V^{ W_c }(X|Z) }\Phi^{-1}(\varepsilon_c) 
\right ) \sqrt{ \frac{ 1 }{ n } } +  o\left( \frac{ 1 }{ \sqrt{n} } \right). 
\end{align}
Optimizing the chose of $ A $, we have 
\begin{align}
\frac{ \rom{K_{sep}}(n,\varepsilon)H^{ W_s }(M) }{ n }
=
C + 
\max_{\varepsilon \ge \varepsilon_s * \varepsilon_c  }
\left (
\sqrt{ \frac{ C }{ H^{ W_s }(M) } V^{ W_s }(M) }\Phi^{-1}(\varepsilon_s)
 + \sqrt{ V^{ W_c }(X|Z) }\Phi^{-1}(\varepsilon_c) 
\right ) \sqrt{ \frac{ 1 }{ n } } 
+  o\left( \frac{ 1 }{ \sqrt{n} } \right). \Label{sep,so}
\end{align}
Hence, we have 
\begin{align}
\lim_{n \to \infty}\frac{ \rom{K_{sep}}(n,\varepsilon)H^{ W_s }(M) - nC }{ \sqrt{n} }
=
\max_{\varepsilon \ge \varepsilon_s * \varepsilon_c  }
\left (
\sqrt{ \frac{ C }{ H^{ W_s }(M) } V^{ W_s }(M) }\Phi^{-1}(\varepsilon_s)
 + \sqrt{ V^{ W_c }(X|Z) }\Phi^{-1}(\varepsilon_c) 
\right ).
\end{align}
\end{proof}

\subsubsection{Discrete memoryless channel case}
Using the $*$-product distribution, 
we evaluate the second order rate of separation coding in the discrete memoryless channel case.
\begin{theorem} \Label{sep,so,dis,th}
For the discrete memoryless channel give by a transition matrix W, we have  
\begin{align}
&\lim_{n \to \infty}
\frac{ \rom{K_{dm,sep}}(\varepsilon,n)H^{W_s}(M) - nC }{ \sqrt{n} }
=\varepsilon_{\rm sep}^{-1} (\varepsilon),
\end{align}
where
\begin{align}
\varepsilon_{\rm sep} (R)
	:=
\min
\left \{
\tilde{\Phi} \left [
\frac{ C }{ H^{W_s}(M) } V^{ W_s } (M), 
V^*_{+} (W_{Y|X})
\right ] (R), 
\tilde{\Phi} \left [
\frac{ C }{ H^{W_s}(M) } V^{ W_s } (M), 
V^*_{-} (W_{Y|X})
\right ] (R)
\right \}. 
\end{align}

\end{theorem}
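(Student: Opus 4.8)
The plan is to mirror the proof of Theorem~\ref{sep,so,th}, the only new ingredient being the second-order channel coding theorem for the DMC in place of the one for the conditional additive channel. By Lemma~\ref{L6} and \eqref{sep,pro}, for every intermediate size $A$ the optimal separation error equals $\Ps(A;P_{M^k})*\Pc(A;W_{Y^n|X^n})$, so if $\Ps(A;P_{M^k})\to\varepsilon_s$ and $\Pc(A;W_{Y^n|X^n})\to\varepsilon_c$ then $\varepsilon=\varepsilon_s*\varepsilon_c$. Put $v_1:=\frac{C}{H^{W_s}(M)}V^{W_s}(M)$. The second-order source coding theorem for the Markovian source \cite[Theorem 21]{HW} gives $\log A = kH^{W_s}(M)-\sqrt{k\,V^{W_s}(M)}\,\Phi^{-1}(\varepsilon_s)+o(\sqrt k)$, while the second-order channel coding theorem for the DMC \cite{6,7} gives $\log A = nC+\sqrt{n\,V_{\varepsilon_c}}\,\Phi^{-1}(\varepsilon_c)+o(\sqrt n)$, where $V_{\varepsilon_c}=V^*_{-}(W_{Y|X})$ if $\varepsilon_c\le 1/2$ and $V_{\varepsilon_c}=V^*_{+}(W_{Y|X})$ if $\varepsilon_c>1/2$; this dependence on the side of $1/2$ is the sole effect of the non-uniqueness of the channel dispersion.

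Eliminating $\log A$ from these two expansions and using $k/n\to C/H^{W_s}(M)$, i.e.\ $\sqrt k=\sqrt{C/H^{W_s}(M)}\,\sqrt n+o(\sqrt n)$, we get $\frac{kH^{W_s}(M)-nC}{\sqrt n}\to\Phi_{v_1}^{-1}(\varepsilon_s)+g(\varepsilon_c)$, where $g(\varepsilon_c):=\Phi_{V^*_{-}(W_{Y|X})}^{-1}(\varepsilon_c)$ for $\varepsilon_c\le 1/2$ and $g(\varepsilon_c):=\Phi_{V^*_{+}(W_{Y|X})}^{-1}(\varepsilon_c)$ for $\varepsilon_c>1/2$ (note $g$ is continuous and increasing, since $\Phi_{v}^{-1}(1/2)=0$ for every $v$). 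Optimizing the choice of $A$ exactly as in the derivation of \eqref{sep,so}, and using the identity $\min_{\alpha,\beta}(\alpha*\beta)=\min(\alpha)*\min(\beta)$, yields
\begin{align}
\lim_{n\to\infty}\frac{\rom{K_{dm,sep}}(\varepsilon,n)H^{W_s}(M)-nC}{\sqrt n}
&=\max_{\varepsilon=\varepsilon_s*\varepsilon_c}\left(\Phi_{v_1}^{-1}(\varepsilon_s)+g(\varepsilon_c)\right).
\end{align}

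It remains to identify the right-hand side with $\varepsilon_{\rm sep}^{-1}(\varepsilon)=\max\{\tilde{\Phi}[v_1,V^*_{+}(W_{Y|X})]^{-1}(\varepsilon),\ \tilde{\Phi}[v_1,V^*_{-}(W_{Y|X})]^{-1}(\varepsilon)\}$; the last equality holds because the pointwise minimum of the two cumulative distribution functions defining $\varepsilon_{\rm sep}(R)$ has the pointwise maximum of their quantile functions as its inverse. Splitting the feasible set of decompositions $\varepsilon=\varepsilon_s*\varepsilon_c$ into the parts $\varepsilon_c\le 1/2$ and $\varepsilon_c>1/2$ shows each piece of the maximum is at most the corresponding unconstrained value $\tilde{\Phi}[v_1,V^*_{-}(W_{Y|X})]^{-1}(\varepsilon)$ or $\tilde{\Phi}[v_1,V^*_{+}(W_{Y|X})]^{-1}(\varepsilon)$, hence ``$\le$''. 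For ``$\ge$'', let $(\varepsilon_s^*,\varepsilon_c^*)$ attain the larger of the two unconstrained maxima; if $\varepsilon_c^*$ lies on the side of $1/2$ matching the dispersion used, the pair is feasible for that piece, and otherwise one checks that switching to the dispersion prescribed by $g$ does not lower the objective, because $\Phi_{V^*_{+}(W_{Y|X})}^{-1}(\varepsilon_c)\ge\Phi_{V^*_{-}(W_{Y|X})}^{-1}(\varepsilon_c)$ for $\varepsilon_c\ge 1/2$ and the reverse inequality holds for $\varepsilon_c\le 1/2$. This reconciliation of the piecewise channel dispersion with the two $*$-product distributions is the main obstacle; the rest is a routine adaptation of the conditional additive case.
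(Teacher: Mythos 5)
Your proposal is correct and follows essentially the same route as the paper: Lemma \ref{L6} with \eqref{sep,pro}, the second-order source coding expansion \eqref{com,so,dis1}, the DMC second-order channel coding expansion with the $\varepsilon_c$-dependent dispersion, elimination of $\log A$ using $k/n \to C/H^{W_s}(M)$, and identification with $\max\bigl\{\tilde{\Phi}[v_1,V^*_{+}(W_{Y|X})]^{-1}(\varepsilon),\tilde{\Phi}[v_1,V^*_{-}(W_{Y|X})]^{-1}(\varepsilon)\bigr\}$. The only cosmetic difference is that the paper writes the dispersion term directly as $\max\{\sqrt{V^*_{+}(W_{Y|X})}\Phi^{-1}(\varepsilon_c),\sqrt{V^*_{-}(W_{Y|X})}\Phi^{-1}(\varepsilon_c)\}$ (which equals your $g(\varepsilon_c)$) and simply interchanges the two maxima, whereas your $\varepsilon_c \le 1/2$ versus $\varepsilon_c > 1/2$ case analysis proves that same interchange by hand.
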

\begin{remark}
The paper \cite[section V]{DAY} showed the same statement with the assumption 
$ V^*_+(W_{Y|X})= V^*_-(W_{Y|X}) $ and the source is independent and identical distribution.
Our contribution is removing the first assumption and generalizing it to Markovian source.
\end{remark}

\begin{proof}
We find that
\begin{align}
\varepsilon_{\rm sep}^{-1} (\varepsilon)
	=
\max
\left \{
\tilde{\Phi} \left [
\frac{ C }{ H^{W_s}(M) } V^{ W_s } (M), 
V^*_{+} (W_{Y|X})
\right ] ^{-1}(\varepsilon), 
\tilde{\Phi} \left [
\frac{ C }{ H^{W_s}(M) } V^{ W_s } (M), 
V^*_{-} (W_{Y|X})
\right ] ^{-1}(\varepsilon)
\right \}. 
\end{align}

	We assume that 
	$\lim_{n \to \infty} \rom{P}_{\rm sep} (k,n) = \varepsilon$ 
	and intermediate set size of separation code is $ A $. 
	If $\Ps(A ; P_{ M^k })  \to \varepsilon_s$ and $ \Pc(A ; W_{ Y^n| X^n }) \to \varepsilon_c $ then $\varepsilon=\varepsilon_s * \varepsilon_c $.
	The channel coding theorem with the second order \cite{7,HSO,12}
(Theorem \ref{dis,so,th} with uniform message of size $A$) guarantees that
	\begin{align}
\log A=	k H^{W_s}(M)
	+ \sqrt{n}\max
	\left \{
	\sqrt{V^*_+ (W_{Y|X})}\Phi^{-1}(\varepsilon_c),
	\sqrt{V^*_- (W_{Y|X})}\Phi^{-1}(\varepsilon_c)
	\right \}.\Label{com,so,dis2}
	\end{align}
Combining 
	(\ref{com,so,dis1}) and (\ref{com,so,dis2}), 
we obtain 
	\begin{align*}
	k H^{W_s}(M)-n C
	=&
	\sqrt{k} \sqrt{V^{ W_s } (M)} \Phi^{-1}(\varepsilon_s)
	+
	\sqrt{n} \max
	\left \{
	\sqrt{V^*_+ (W_{Y|X})}\Phi^{-1}(\varepsilon_c),
	\sqrt{V^*_- (W_{Y|X})}\Phi^{-1}(\varepsilon_c)
	\right \} \\
	=&
	\sqrt{n} 
\sqrt{ \frac{ C }{ H^{W_s}(M) } V^{ W_s } (M) }\Phi^{-1}(\varepsilon_s)
	+
	\sqrt{n} \max
	\left \{
	\sqrt{V^*_+ (W_{Y|X})}\Phi^{-1}(\varepsilon_c),
	\sqrt{V^*_- (W_{Y|X})}\Phi^{-1}(\varepsilon_c)
	\right \} 
	\end{align*}
because 
$	\frac{ k }{ n }
	=
	\frac{ C }{ H^{W_s}(M) } + o(n)$.
So, we have 
\begin{align}
&\lim_{n \to \infty}\frac{\rom{K_{dm,sep}} H^{W_s}(M)-nC}{\sqrt{n}} \nonumber \\
=& 
\max_{\varepsilon = \varepsilon_s * \varepsilon_c}
\left (
\sqrt{ \frac{ C }{ H^{W_s}(M) } V^{ W_s } (M) }\Phi^{-1}(\varepsilon_s)
+
\max
\left \{
\sqrt{V^*_+ (W_{Y|X})}\Phi^{-1}(\varepsilon_c),
\sqrt{V^*_- (W_{Y|X})}\Phi^{-1}(\varepsilon_c)
\right \}
\right) \nonumber \\
=& 
\max
\Bigg \{
\max_{\varepsilon = \varepsilon_s * \varepsilon_c}
\left (
\sqrt{ \frac{ C }{ H^{W_s}(M) } V^{ W_s } (M) }\Phi^{-1}(\varepsilon_s)
+
\sqrt{V^*_+ (W_{Y|X})}\Phi^{-1}(\varepsilon_c)
\right),\nonumber \\
&\max_{\varepsilon = \varepsilon_s * \varepsilon_c}
\left (
\sqrt{ \frac{ C }{ H^{W_s}(M) } V^{ W_s } (M) }\Phi^{-1}(\varepsilon_s)
+
\sqrt{V^*_- (W_{Y|X})}\Phi^{-1}(\varepsilon_c)
\right)\Bigg \}.
\end{align}
\end{proof}

\subsection{Comparison}\Label{sep,so,re}
Here, we compare the optimal error probability $\varepsilon(R) $
and the error probability $\varepsilon_{\rm sep}(R) $ of the separation scheme.
Since this comparison is based on the capacity
$C$, the source entropy rate $H^{W_s}(M) $,
the source variance $V^{ W_s } (M) $, and the channel variance,
the analysis of the conditional additive channel case
can be done as the same was as 
the analysis of the discrete memoryless channel case.
So, we discuss only the discrete memoryless channel case.

First, we compare the separation bound with the Kostina-Verd\'{u} bound $\varepsilon_{KV}(R)$ defined in 
\eqref{err,kv}, 
which is still not the tight bound in the joint source-channel scheme.
The property \eqref{sep,so,th1} implies the inequality
\begin{align}
\frac{\varepsilon_{\rm sep}(R) }{\varepsilon_{KV}(R) }\ge 1.
\end{align}
Here, the equality is attained if and only if 
$ V^{ W_s } (M) = 0 $, $H^{W_s}(M) = 0$, or $C = 0$.
When $H^{W_s}(M) = 0$, there is no information to be transmitted.
When $C = 0$, we cannot make any information transmission.
These two cases do not occur in a realistic case.
When $ V^{ W_s } (M) = 0 $, the distribution of the message source is uniform, which is not discussed 
in the joint source-channel coding.
So, we conclude that 
the separation scheme always has a larger decoding error probability than the joint source-channel scheme. 

As the opposite evaluation, we have the following lemma.
\begin{lemma}\Label{sep,so,lem2}
	We have
	\begin{align}
	\frac{\varepsilon_{\rm sep}(R) }{\varepsilon_{KV}(R)}\le
	\frac{2 \Phi(R_*) - \Phi(R_*)^2}
	{\Phi(R_*)}, \Label{EFRRR}
	\end{align}
	where
	\begin{align}
	R_*
	:=
\left\{
\begin{array}{ll}
\frac{ R }{ \sqrt{\frac{ C }{ H^{W_s}(M) } V^{ W_s } (M) + 
		V^*_{-} (W_{Y|X})} } & \hbox{ when } R \le 0 \\
\frac{ R }{ \sqrt{\frac{ C }{ H^{W_s}(M) } V^{ W_s } (M) + 
		V^*_{+} (W_{Y|X})} } & \hbox{ when } R > 0 .
\end{array}
\right.
	\end{align}
Under the conditions $ V^*_{+} (W_{Y|X})= V^*_{-} (W_{Y|X})$ and $R \le 0$,
	the equality holds if and only if
	$\frac{ C }{ H^{W_s}(M) } V^{ W_s } (M)  = V^*_{-} (W_{Y|X})$.
\end{lemma}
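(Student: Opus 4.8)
The plan is to deduce \eqref{EFRRR} from the second inequality of Lemma~\ref{sep,so,lem}, applied to whichever of the two $*$-product distributions in the definition of $\varepsilon_{\rm sep}(R)$ carries the same channel variance that already appears in $\varepsilon_{KV}(R)$, together with the trivial fact that $\varepsilon_{\rm sep}(R)$ is a \emph{minimum} of those two $*$-product distributions and hence is bounded by either one of them.

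First I would split on the sign of $R$. Suppose $R\le 0$, so that $\varepsilon_{KV}(R)=\Phi_{\frac{C}{H^{W_s}(M)}V^{W_s}(M)+V^*_-(W_{Y|X})}(R)=\Phi(R_*)$ by the definition of $R_*$. Since $\varepsilon_{\rm sep}(R)$ is the minimum of $\tilde{\Phi}[\tfrac{C}{H^{W_s}(M)}V^{W_s}(M),V^*_+(W_{Y|X})](R)$ and $\tilde{\Phi}[\tfrac{C}{H^{W_s}(M)}V^{W_s}(M),V^*_-(W_{Y|X})](R)$, it is at most the second of these. Applying the second inequality of Lemma~\ref{sep,so,lem} with $v_1=\tfrac{C}{H^{W_s}(M)}V^{W_s}(M)$ and $v_2=V^*_-(W_{Y|X})$ bounds that $*$-product distribution by $2\Phi_{2(v_1+v_2)}(R)-\Phi_{2(v_1+v_2)}(R)^2$; since $\Phi_{v_1+v_2}(R)=\Phi(R_*)$, rewriting $\Phi_{2(v_1+v_2)}(R)$ through $R_*$ and dividing by $\varepsilon_{KV}(R)=\Phi(R_*)$ yields \eqref{EFRRR}. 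The case $R>0$ is identical after exchanging the roles of $V^*_+$ and $V^*_-$: there $\varepsilon_{KV}(R)=\Phi_{\frac{C}{H^{W_s}(M)}V^{W_s}(M)+V^*_+(W_{Y|X})}(R)$, one bounds $\varepsilon_{\rm sep}(R)\le\tilde{\Phi}[\tfrac{C}{H^{W_s}(M)}V^{W_s}(M),V^*_+(W_{Y|X})](R)$, and then again invokes Lemma~\ref{sep,so,lem}.

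For the equality discussion (with $R\le 0$ and $V^*_+(W_{Y|X})=V^*_-(W_{Y|X})$) I would track the two inequalities used above. The step $\varepsilon_{\rm sep}(R)\le\tilde{\Phi}[v_1,V^*_-(W_{Y|X})](R)$ is an equality precisely when $\tilde{\Phi}[v_1,V^*_-(W_{Y|X})](R)\le\tilde{\Phi}[v_1,V^*_+(W_{Y|X})](R)$, which is automatic once $V^*_+(W_{Y|X})=V^*_-(W_{Y|X})$; and the second step is an equality precisely when the second inequality of Lemma~\ref{sep,so,lem} is tight, which for $R\le 0$ happens exactly when $v_1=V^*_-(W_{Y|X})$, i.e.\ $\tfrac{C}{H^{W_s}(M)}V^{W_s}(M)=V^*_-(W_{Y|X})$. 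Intersecting the two conditions gives the stated criterion. The only genuinely delicate point is the bookkeeping: one must pick the branch of the minimum defining $\varepsilon_{\rm sep}(R)$ so that its channel variance matches the one in $\varepsilon_{KV}(R)$ (the $V^*_-$ branch for $R\le 0$, the $V^*_+$ branch for $R>0$), and keep straight the two Gaussian scalings $v_1+v_2$ and $2(v_1+v_2)$ when passing to the variable $R_*$. Beyond that, the estimate is an immediate consequence of Lemma~\ref{sep,so,lem} and the minimum structure of $\varepsilon_{\rm sep}(R)$.
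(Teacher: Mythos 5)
Your strategy coincides with the paper's own (two-line) proof: for $R\le 0$ write $\varepsilon_{KV}(R)=\Phi_{v_1+V^*_-(W_{Y|X})}(R)=\Phi(R_*)$ with $v_1=\frac{C}{H^{W_s}(M)}V^{W_s}(M)$, bound $\varepsilon_{\rm sep}(R)$ by one branch of the defining minimum, and invoke the second inequality of Lemma \ref{sep,so,lem}; your matching-variance branch choice is in fact cleaner than the paper's display, which uses the $V^*_+$ branch for $R\le 0$. The problem is the step you defer as ``bookkeeping'': Lemma \ref{sep,so,lem} bounds $\tilde{\Phi}[v_1,V^*_-(W_{Y|X})](R)$ by $2\Phi_{2(v_1+V^*_-)}(R)-\Phi_{2(v_1+V^*_-)}(R)^2$, and $\Phi_{2(v_1+V^*_-)}(R)=\Phi(R_*/\sqrt{2})$, not $\Phi(R_*)$. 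What your chain actually yields is
\begin{align*}
\frac{\varepsilon_{\rm sep}(R)}{\varepsilon_{KV}(R)}\le
\frac{2\Phi(R_*/\sqrt{2})-\Phi(R_*/\sqrt{2})^2}{\Phi(R_*)},
\end{align*}
and for $R<0$ this is strictly larger than the right-hand side of \eqref{EFRRR}, because $\Phi(R_*/\sqrt{2})>\Phi(R_*)$ and $x\mapsto 2x-x^2$ is increasing on $[0,1]$. So the printed bound \eqref{EFRRR} is not obtained by this route for $R<0$ (for $R>0$ the same chain does give it, since there $\Phi(R_*/\sqrt{2})\le\Phi(R_*)$).

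Moreover the discrepancy cannot be fixed by more careful scaling: take $V^*_+(W_{Y|X})=V^*_-(W_{Y|X})=v_1$ and $R<0$. Then $\varepsilon_{\rm sep}(R)=\tilde{\Phi}[v_1,v_1](R)=2\Phi_{4v_1}(R)-\Phi_{4v_1}(R)^2$ by Lemma \ref{sep,so,th,l2}, while $\varepsilon_{KV}(R)=\Phi(R_*)=\Phi_{2v_1}(R)$, so the ratio equals $\Phi_{4v_1}(R)\bigl(2-\Phi_{4v_1}(R)\bigr)/\Phi_{2v_1}(R)$, which strictly exceeds $2-\Phi(R_*)$, i.e.\ the right-hand side of \eqref{EFRRR} --- precisely at the configuration where the lemma asserts equality. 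Hence the statement can only be proved if its numerator is read at the doubled variance, i.e.\ with $\Phi(R_*/\sqrt{2})$ in place of $\Phi(R_*)$ (equivalently with $R_*$ carrying a factor $\sqrt{2(\cdot)}$ in the denominator); with that reading your argument, including the equality discussion (tightness of Lemma \ref{sep,so,lem} exactly at $v_1=V^*_-(W_{Y|X})$, the branch step being trivially tight when $V^*_+=V^*_-$), is complete and matches the paper's intent. As written, though, your proposal glosses over exactly the factor-of-$\sqrt{2}$ mismatch that the paper's own terse proof also leaves unaddressed, and that mismatch is where the claimed inequality in its literal form breaks down.
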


\begin{proof}
When $R \le 0 $, we have
	\begin{align}
{\varepsilon_{KV}(R)}
=&
\Phi_{\frac{ C }{ H^{W_s}(M) } V^{ W_s } (M) + 
		V^*_{-} (W_{Y|X})}(R) \\
\varepsilon_{\rm sep}(R)
\le &
\tilde{\Phi} \left [
\frac{ C }{ H^{W_s}(M) } V^{ W_s } (M), 
V^*_{+} (W_{Y|X})
\right ] (R).
	\end{align}
So, the inequality \eqref{sep,so,th1} of Lemma \ref{sep,so,lem} implies \eqref{EFRRR}.
We can show this inequality in the case of $R<0$.

When $ V^*_{+} (W_{Y|X})= V^*_{-} (W_{Y|X})$ and $R \le 0$,
Lemma \ref{sep,so,lem} guarantees that
	the equality holds if and only if
	$\frac{ C }{ H^{W_s}(M) } V^{ W_s } (M)  = V^*_{-} (W_{Y|X})$.
\end{proof}

When the variance of the information density is unique, i..e, 
$ V^*_{+} (W_{Y|X})= V^*_{-} (W_{Y|X})$,
Lemma \ref{sep,so,lem2} analytically determines the range of the ratio between 
the error probabilities with the joint and separation schemes.
For the general case,
combining Lemmas \ref{LH77} and \ref{sep,so,lem2}, we obtain the following lemma.
\begin{lemma}\Label{sep,so,lem5}
	We have
	\begin{align}
1\le	\frac{\varepsilon_{\rm sep}(R) }{\varepsilon(R)}
\le
\left\{
\begin{array}{ll}
 \frac{4 \Phi(R_*) - 2\Phi(R_*)^2}
	{\Phi(R_*)}
	& \hbox{ when } R < 0 \\
\frac{2 \Phi(R_*) - \Phi(R_*)^2}
	{\Phi(R^*) \Phi(R_*)}
	& \hbox{ when } R \ge 0,
\end{array}
\right.	\Label{EFRRX}
	\end{align}
	where 
$	R^*
	:=\frac{R}{\sqrt{\frac{ C }{ H^{W_s}(M) } V^{ W_s } (M)}}$.
\end{lemma}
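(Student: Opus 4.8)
The plan is to derive Lemma \ref{sep,so,lem5} simply by composing the two ratio estimates already at hand — Lemma \ref{LH77} for $\varepsilon_{KV}(R)/\varepsilon(R)$ and Lemma \ref{sep,so,lem2} for $\varepsilon_{\rm sep}(R)/\varepsilon_{KV}(R)$ — through the identity
\begin{align}
\frac{\varepsilon_{\rm sep}(R)}{\varepsilon(R)}
=
\frac{\varepsilon_{\rm sep}(R)}{\varepsilon_{KV}(R)}
\cdot
\frac{\varepsilon_{KV}(R)}{\varepsilon(R)} .
\end{align}
No new asymptotic or analytic input is required; the content has already been supplied by Lemmas \ref{LH77}, \ref{sep,so,lem2}, and \ref{sep,so,lem}, and what remains is bookkeeping over the sign of $R$.

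For the lower bound, I would invoke the first inequality of \eqref{sep,so,th1}, which (as used in Section \ref{sep,so,re}) gives $\varepsilon_{\rm sep}(R)/\varepsilon_{KV}(R)\ge 1$, together with the left inequality of \eqref{UPP7}, which gives $\varepsilon_{KV}(R)/\varepsilon(R)\ge 1$; multiplying yields $\varepsilon_{\rm sep}(R)/\varepsilon(R)\ge 1$. For the upper bound, I would combine
\begin{align}
\frac{\varepsilon_{\rm sep}(R)}{\varepsilon_{KV}(R)}
\le
\frac{2\Phi(R_*)-\Phi(R_*)^2}{\Phi(R_*)}
\end{align}
of Lemma \ref{sep,so,lem2} (with the same $R_*$ as in the present statement) with the right-hand bound of \eqref{UPP7}, namely $\varepsilon_{KV}(R)/\varepsilon(R)\le 2$ when $R<0$ and $\varepsilon_{KV}(R)/\varepsilon(R)\le 1/\Phi_{\frac{C}{H^{W_s}(M)}V^{W_s}(M)}(R)$ when $R\ge 0$. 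Using $\Phi_{\frac{C}{H^{W_s}(M)}V^{W_s}(M)}(R)=\Phi(R^*)$ from the definitions of $R^*$ and of $\Phi_v$, the product of the two factors equals $\frac{4\Phi(R_*)-2\Phi(R_*)^2}{\Phi(R_*)}$ when $R<0$ and $\frac{2\Phi(R_*)-\Phi(R_*)^2}{\Phi(R^*)\Phi(R_*)}$ when $R\ge 0$, which is exactly \eqref{EFRRX}.

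The main — though modest — obstacle is the bookkeeping of the variances hidden in $R_*$: when $R<0$ the Kostina-Verd\'{u} bound $\varepsilon_{KV}(R)$ equals $\Phi_{\frac{C}{H^{W_s}(M)}V^{W_s}(M)+V^*_{-}(W_{Y|X})}(R)$ while $\varepsilon_{\rm sep}(R)$ is majorized by $\tilde{\Phi}[\frac{C}{H^{W_s}(M)}V^{W_s}(M),V^*_{+}(W_{Y|X})](R)$, whereas for $R\ge 0$ the roles of $V^*_{+}(W_{Y|X})$ and $V^*_{-}(W_{Y|X})$ are interchanged; one must check that this is precisely the pairing already used when Lemma \ref{sep,so,lem} is invoked inside the proof of Lemma \ref{sep,so,lem2}, so that the argument $R_*$ fed into \eqref{sep,so,th1} there agrees with the piecewise $R_*$ in the present statement. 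Once this alignment is confirmed, the conclusion is the one-line multiplication above.
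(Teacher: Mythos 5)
Your proposal is correct and matches the paper's own derivation: the paper obtains Lemma \ref{sep,so,lem5} exactly by "combining Lemmas \ref{LH77} and \ref{sep,so,lem2}," i.e.\ multiplying the two ratio bounds through $\frac{\varepsilon_{\rm sep}(R)}{\varepsilon(R)}=\frac{\varepsilon_{\rm sep}(R)}{\varepsilon_{KV}(R)}\cdot\frac{\varepsilon_{KV}(R)}{\varepsilon(R)}$ and using $\Phi_{\frac{C}{H^{W_s}(M)}V^{W_s}(M)}(R)=\Phi(R^*)$, with the same sign-of-$R$ bookkeeping for $R_*$ that you describe.
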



\begin{remark}
The paper \cite[Section V]{DAY2} discussed a similar comparison as Lemma \ref{sep,so,lem2}
when the source is subject to an independent and identical distribution
and $V^*_{-} (W_{Y|X})=V^*_{+} (W_{Y|X})$.
Although the paper \cite[Section V]{DAY2} conjectured a similar statement as
Lemma \ref{sep,so,lem} via numerical calculation, they did not show it.
Hence, they could not analytically determine the range of the ratio between 
the error probabilities with the joint and separation schemes even when 
$ V^*_{+} (W_{Y|X})= V^*_{-} (W_{Y|X})$.
\end{remark}

\section{Discussion}
We have discussed the source-channel joint coding with the second order regime.
We have two open problems in this area.
One is the complete derivation of the second order coding rate in the general discrete memoryless case.
In this case, when the maximum and minimum variances has the same value,
the second order coding rate was derived by the paper \cite{DAY,DAY2}.
However, 
the general case had been remained as an open problem while a lower bound was obtained by Kostina and Verd\'{u} \cite{KV}.
Our optimal rate is strictly better than the lower bound by \cite{KV}.
To achieve such a better rate, we have invented a new random coding method, in which,
the distribution of the input alphabet is chosen according to 
the generation probability of the message.
Since the generation probability depends on the message in the joint coding regime, 
this improvement is very effective.
This coding method can be expected to another problem.
The second contribution is the derivation of the range of the ratio between 
the second order error probabilities of the joint and separation schemes.
The paper \cite{DAY2} derived an upper bound only by numerical calculation.
We have showed this conjecture analytically.
Further, we have given a rigorous formulation for the separation coding in Subsection \ref{S71}
while the error probability given in the RHS of \eqref{sep,ep} was used in many previous studies without rigorous derivation.

To obtain both main contributions, we have newly introduced two distribution families in Section \ref{S3}.
One is switched Gaussian convolution distributions and 
the other is $*$-product distribution.
Both distributions are defined as modifying the Gaussian distribution.
We have derived the notable relations among the cumulative distribution functions of these distributions and the Gaussian distribution.
The second contribution has been obtained from this kind of relations.  
Since these new distributions have operational meaning in this way,
we can expect that they will be applied to topics in information theory and related areas.

\section*{Acknowledgments}
MH is very grateful to Professor Vincent Y. F. Tan and 
Professor Shun Watanabe for helpful discussions and comments.
The works reported here were supported in part by 
JSPS Grants-in-Aid for Scientific Research (B) No. 16KT0017 and (A) No.17H01280,
the Okawa Research Grant
and Kayamori Foundation of Informational Science Advancement.

\appendices 

\section{Proof of Lemma \ref{sep,so,lem}}\Label{Ap3}
\noindent{\it Step (i):}
In this step, we prove the first inequality of (\ref{sep,so,th1}). 
Assume that $0< v_1,v_2 <\infty$.
Let $X$ and $Y$ be Gaussian random variables with mean $0$ and variance 
$v_1$ and $v_2$, respectively.
They are assumed to be independent of each other.
For a given real number $a$, 
we have
\begin{align}
\Phi_{v_1}
\left (a \right )
*
\Phi_{v_2}
\left ( R -a \right )
=
{\rm Pr}
\{
X \le a ~ {\rm or} ~ Y \le R-a 
\},
\end{align}
On the other hands, 
since $X+Y$ is a Gaussian random variable with variance $v_1+v_2$,
we have
\begin{align}
\Phi_{v_1+v_2}(R)
={\rm Pr}
\{
X + Y \le R 
\}.
\end{align}
Because 
$ \{
X + Y \le R 
\} 
\subsetneq \{
X \le a ~ {\rm or} ~ Y \le R-a
\}$, we have
\begin{align}
{\rm Pr}\{
X + Y \le R 
\} 
<
{\rm Pr} \{
X \le a ~ {\rm or} ~ Y \le R-a
\},
\end{align}
which implies that
\begin{align}
\Phi_{v_1+v_2}(R)
<
\Phi_{v_1}
\left (a \right )
*
\Phi_{v_2}
\left ( R -a \right ).
\end{align}
Taking the maximum with respect to $a$, we have
\begin{align}
\Phi_{v_1+v_2}(R)
<
\Psi[v_1,v_2](R).\Label{EH6-8}
\end{align}
Further, when $v_1$ or $v_2$ is zero, or $v_1$ or $v_2$ is infinity,
the equality holds in \eqref{EH6-8}.

\noindent{\it Step (ii):} In this step, 
we show the second inequality in (\ref{sep,so,th1}), and its equality condition.
For the proof, we define the new function $ \tilde{\varepsilon}(\varepsilon,y) $ for 
$ \varepsilon \in (0, 1) $ and $ y \le 0 $ as:
\begin{align}
\tilde{\varepsilon}(\varepsilon,y)
:=
\max_{ \varepsilon \ge \varepsilon_s * \varepsilon_c }
\frac{ 
	\Phi^{-1}(\varepsilon_s)
	+ \sqrt{y} \Phi^{-1}(\varepsilon_c) 
}{ \sqrt{1+ y }  }. 
\end{align}
Using this function, we can rewrite function $ \tilde{\Phi}[v_1, v_2]^{-1} (\varepsilon) $ 
as:
\begin{align}
\tilde{\Phi}[v_1, v_2]^{-1} (\varepsilon)
=
\sqrt{v_1 + v_2} 
\Phi^{-1}(\tilde{\varepsilon}(\varepsilon,y)),
\end{align}
where $ y = \frac{ v_2 }{ v_1 } $. 
Hence, the second inequality in (\ref{sep,so,th1}) and its equality condition
follow from the following two lemmas.

\begin{lemma}\Label{sep,so,th,l1}
For any $ y \ge 0 $, it holds that 
	\begin{align}
	\Phi^{ -1 }(\tilde{\varepsilon}(\varepsilon,y))
	\ge
	\sqrt{2} \Phi^{ -1 }(1 - \sqrt{1 - \varepsilon}).
	\end{align}
\end{lemma}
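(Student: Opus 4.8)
The plan is to prove the equivalent statement that $\tilde\Phi[v_1,v_2]^{-1}(\varepsilon)\ge\Phi_{2(v_1+v_2)}^{-1}\bigl(1-\sqrt{1-\varepsilon}\bigr)$ for all $v_1,v_2>0$ and $\varepsilon\in(0,1)$; by \eqref{sep,err}, the scaling $\Phi_v^{-1}=\sqrt v\,\Phi^{-1}$, and the identity $\tilde\Phi[v_1,v_2]^{-1}(\varepsilon)=\sqrt{v_1+v_2}\,\Phi^{-1}(\tilde\varepsilon(\varepsilon,v_2/v_1))$ used in this appendix, this is exactly Lemma \ref{sep,so,th,l1}. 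Dividing by $\sqrt{v_1+v_2}$ and writing $p:=v_1/(v_1+v_2)$, $q:=1-p$, it suffices to exhibit one decomposition $\varepsilon=\varepsilon_s*\varepsilon_c$ with
\[
\sqrt p\,\Phi^{-1}(\varepsilon_s)+\sqrt q\,\Phi^{-1}(\varepsilon_c)\ \ge\ \sqrt2\,\Phi^{-1}\bigl(1-\sqrt{1-\varepsilon}\bigr).
\]
Put $\mu:=-\ln(1-\varepsilon)>0$, $\bar\Phi:=1-\Phi$, and $\psi(u):=\Phi^{-1}(1-e^{-u})=\bar\Phi^{-1}(e^{-u})$. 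Since $\varepsilon=\varepsilon_s*\varepsilon_c$ reads $(1-\varepsilon_s)(1-\varepsilon_c)=e^{-\mu}$, I parametrise $1-\varepsilon_s=e^{-a}$, $1-\varepsilon_c=e^{-(\mu-a)}$ with $a\in(0,\mu)$, so that $\Phi^{-1}(\varepsilon_s)=\psi(a)$, $\Phi^{-1}(\varepsilon_c)=\psi(\mu-a)$, and the target becomes: there is $a\in(0,\mu)$ with $\sqrt p\,\psi(a)+\sqrt q\,\psi(\mu-a)\ge\sqrt2\,\psi(\mu/2)$.

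When $\psi(\mu/2)\le0$ (equivalently $\varepsilon\le\tfrac34$) the choice $a=\mu/2$ works, since $\sqrt p+\sqrt q\le\sqrt2$ and $\psi(\mu/2)\le0$. For $\psi(\mu/2)>0$ (i.e.\ $\varepsilon>\tfrac34$ — the regime left open in \cite{DAY}) I would take $a=p\mu$ and write $\psi(u)=\sqrt{2u}-\eta(u)$ with $\eta(u):=\sqrt{2u}-\psi(u)$; here $\eta(u)>0$ on $(0,\infty)$, which is the inequality $\bar\Phi(t)<e^{-t^2/2}$ for $t>0$ (true because $t\mapsto e^{t^2/2}\bar\Phi(t)$ decreases from $\tfrac12$, its derivative being $e^{t^2/2}\bigl(t\bar\Phi(t)-\varphi(t)\bigr)<0$ by the Mills inequality). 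Using $\sqrt p\sqrt{2p\mu}+\sqrt q\sqrt{2q\mu}=\sqrt{2\mu}=\sqrt2\sqrt{2\cdot\mu/2}$, the inequality at $a=p\mu$ collapses to $\sqrt p\,\eta(p\mu)+\sqrt q\,\eta(q\mu)\le\sqrt2\,\eta(\mu/2)$; and setting $\Theta(u):=\sqrt u\,\eta(u)=\sqrt2\,u-\sqrt u\,\psi(u)$, and using $\sqrt p\,\eta(p\mu)=\Theta(p\mu)/\sqrt\mu$, $\sqrt q\,\eta(q\mu)=\Theta(q\mu)/\sqrt\mu$, $\sqrt2\,\eta(\mu/2)=2\Theta(\mu/2)/\sqrt\mu$, this is nothing but
\[
\Theta(p\mu)+\Theta(q\mu)\ \le\ 2\,\Theta\bigl(\tfrac{p\mu+q\mu}{2}\bigr).
\]
Hence the whole lemma reduces to showing that $\Theta(u)=\sqrt2\,u-\sqrt u\,\Phi^{-1}(1-e^{-u})$ is concave on $(0,\infty)$.

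Concavity of $\Theta$ is the analytic heart and the main obstacle. I would differentiate using $\psi'(u)=e^{-u}/\varphi(\psi(u))$ and $\psi''(u)=\psi'(u)\bigl(\psi(u)\psi'(u)-1\bigr)$ (the latter from $\varphi'(x)=-x\varphi(x)$ after differentiating $\bar\Phi(\psi(u))=e^{-u}$ twice); multiplying $\Theta''(u)\le0$ by $u^{3/2}$ turns it into a relation among $u$, $\psi(u)$, $\psi'(u)$ only. Eliminating $e^{-u}=\bar\Phi(\psi(u))$ and invoking the sharp two–sided Mills bounds $\tfrac{t}{t^2+1}<\tfrac{\bar\Phi(t)}{\varphi(t)}<\tfrac1t$ for $t>0$ should then close it. The delicate point is that the dominant terms of $\Theta''$ cancel — consistently with the asymptotics $\Theta(u)=\tfrac{1}{2\sqrt2}\ln u+O(1)$, which is concave — so the crude Mills estimate does not suffice and the sharp one is needed, with separate attention near $u=\ln2$ where $\psi$ changes sign (note that $\psi$ itself is concave everywhere, as $\psi\psi'=\psi\,\bar\Phi(\psi)/\varphi(\psi)<1$ and $\psi'>0$). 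Finally, the companion equality statement comes from tracking the two inequalities used: $\Theta$ is strictly concave, so equality in the $\varepsilon>\tfrac34$ branch forces $p\mu=q\mu$, i.e.\ $v_1=v_2$, while in the $\varepsilon\le\tfrac34$ branch equality requires $\sqrt p+\sqrt q=\sqrt2$ or $\psi(\mu/2)=0$, i.e.\ $v_1=v_2$ or $\varepsilon=\tfrac34$; the boundary case $y=0$ follows by continuity in $y$.
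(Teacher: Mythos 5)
Your overall strategy is sound: since $\Phi^{-1}(\tilde{\varepsilon}(\varepsilon,y))$ is a maximum over decompositions $\varepsilon\ge\varepsilon_s*\varepsilon_c$, exhibiting one good decomposition for each $y$ suffices, and your two cases are set up correctly. For $\varepsilon\le\tfrac{3}{4}$ the symmetric choice $\varepsilon_s=\varepsilon_c=1-\sqrt{1-\varepsilon}$ together with $\sqrt{p}+\sqrt{q}\le\sqrt{2}$ and $\Phi^{-1}(1-\sqrt{1-\varepsilon})\le 0$ gives a complete and in fact shorter argument than the paper's (which passes through $\max_{\varepsilon=\varepsilon_s*\varepsilon_c}\inf_{y'}$, computes the inner infimum case by case, and then minimizes $\Phi(a\cos\theta)*\Phi(a\sin\theta)$ by a derivative-sign analysis). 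Your algebra for $\varepsilon>\tfrac{3}{4}$ is also correct as far as it goes: with $a=p\mu$ the claim is exactly equivalent to $\Theta(p\mu)+\Theta(q\mu)\le 2\Theta(\mu/2)$ for $\Theta(u)=\sqrt{2}\,u-\sqrt{u}\,\Phi^{-1}(1-e^{-u})$, and the auxiliary facts you quote ($\eta>0$, $\psi'=e^{-u}/\varphi(\psi)$, $\psi''=\psi'(\psi\psi'-1)$) are right.

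The genuine gap is that the midpoint concavity of $\Theta$ — the analytic heart of the whole case $\varepsilon>\tfrac{3}{4}$, and even of its boundary instances $y\in\{0,\infty\}$, which reduce to $\Theta(\mu)\le 2\Theta(\mu/2)$ — is never proved. You only outline a plan (``I would differentiate\dots should then close it''), and you yourself identify why it is nontrivial: the leading terms of $\Theta''$ cancel, so the crude bound $1-\Phi(t)\le e^{-t^2/2}$ cannot work and sharp Mills-type estimates with a separate treatment near $u=\ln 2$ would be needed; none of that analysis is carried out, and concavity of $\Theta$ is not an off-the-shelf fact about the Gaussian quantile (numerically it looks true, but that is not a proof). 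So, as written, the lemma is established only for $\varepsilon\le\tfrac{3}{4}$, which is essentially the regime already conjectured/treated in \cite{DAY}; the case the paper actually needs beyond that is left hanging on an unproven claim. Note also that the paper's own proof never requires any such global concavity statement: for fixed $y$ it lower-bounds the maximum by $\max_{\varepsilon=\varepsilon_s*\varepsilon_c}\inf_{y'\ge 0}$, evaluates the infimum in closed form according to the signs of $\Phi^{-1}(\varepsilon_s)$ and $\Phi^{-1}(\varepsilon_c)$, and then performs an explicit constrained maximization. To complete your route you must supply a full proof of the (midpoint) concavity of $\Theta$ — which, given the cancellations you point out, is likely at least as much work as the lemma itself — or replace the choice $a=p\mu$ in the regime $\varepsilon>\tfrac{3}{4}$ by an argument that does not need $\Theta$ at all.
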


Hence, we obtain 
$R=\tilde{\Phi}[v_1, v_2]^{-1} (\varepsilon)
\ge 	\sqrt{2(v_1 + v_2)} \Phi^{ -1 }(1 - \sqrt{1 - \varepsilon})$.
So, we have
$ \Phi_{2(v_1 + v_2)}(R)\ge 1 - \sqrt{1 - \varepsilon}$,
which implies that
\begin{align}
\tilde{\Phi}[v_1, v_2](R)=\varepsilon \le
2 \Phi_{2(v_1 + v_2)}(R)
-\Phi_{2(v_1 + v_2)}(R)^2\Label{HE77}.
\end{align}
Due to the equality condition in Lemma \ref{sep,so,th,l1},
the equality holds in \eqref{HE77} only when $v_1=v_2 $. 
Conversely, we have the following lemma. 
\begin{lemma}\Label{sep,so,th,l2}
When $R \le 0$,
the equality
\begin{align}
\tilde{\Phi}[v_1,v_1](R)
=2 \Phi_{4 v_1}(R)
-\Phi_{4 v_1}(R)^2  \Label{HTT}
\end{align}
holds.
\end{lemma}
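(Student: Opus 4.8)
The plan is to evaluate the minimization defining $\tilde{\Phi}[v_1,v_1](R)$ in closed form. Writing the $*$-product as $a*b = 1-(1-a)(1-b)$ and using the Gaussian symmetry $1-\Phi_{v_1}(x)=\Phi_{v_1}(-x)$, I would first recast the quantity as a maximization:
\begin{align}
1-\tilde{\Phi}[v_1,v_1](R)
= \max_{a\in\mathbb{R}}\bigl(1-\Phi_{v_1}(a)\bigr)\bigl(1-\Phi_{v_1}(R-a)\bigr)
= \max_{a\in\mathbb{R}} \Phi_{v_1}(-a)\,\Phi_{v_1}(a-R).
\end{align}
Writing $f(a):=\Phi_{v_1}(-a)\,\Phi_{v_1}(a-R)$, note that $f$ is continuous, strictly positive, and vanishes as $a\to\pm\infty$, hence attains its maximum; the target is to show the maximizer is $a=R/2$ (a point about which $f$ is symmetric, $f(R-a)=f(a)$).

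Second, I would show $\log f$ is unimodal with peak at $R/2$. Using $\varphi_{v_1}(-a)=\varphi_{v_1}(a)$, differentiation gives
\begin{align}
(\log f)'(a) = h(a-R) - h(-a), \qquad h(t):=\frac{\varphi_{v_1}(t)}{\Phi_{v_1}(t)}.
\end{align}
The one genuine ingredient is that the reversed hazard rate $h$ is strictly decreasing. This I would obtain from the identity $h'(t) = -h(t)\bigl(t+h(t)\bigr)$ together with the positivity of $p(t):=t\Phi_{v_1}(t)+\varphi_{v_1}(t)$, which holds because $p'(t)=\Phi_{v_1}(t)>0$ and $p(t)\to 0$ as $t\to-\infty$. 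Monotonicity of $h$ then yields $(\log f)'(a)\ge 0 \iff a-R\le -a \iff a\le R/2$, so $f$ has its unique maximum at $a=R/2$.

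Third, I would substitute and rescale. We have $f(R/2)=\Phi_{v_1}(-R/2)^2$, and the scaling identity $\Phi_{v}(x)=\Phi(x/\sqrt{v})$ gives $\Phi_{v_1}(-R/2)=\Phi\bigl(-R/(2\sqrt{v_1})\bigr)=\Phi_{4v_1}(-R)=1-\Phi_{4v_1}(R)$. Hence
\begin{align}
\tilde{\Phi}[v_1,v_1](R) = 1-\bigl(1-\Phi_{4v_1}(R)\bigr)^2 = 2\Phi_{4v_1}(R)-\Phi_{4v_1}(R)^2,
\end{align}
which is \eqref{HTT}. Note that the hypothesis $R\le 0$ is not used anywhere in this chain, which is consistent with the numerical observation recorded after Lemma \ref{sep,so,lem} that equality persists for $R>0$ as well.

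The main obstacle is the strict monotonicity of $h=\varphi_{v_1}/\Phi_{v_1}$ (equivalently, the inequality $t\Phi_{v_1}(t)+\varphi_{v_1}(t)>0$ for all $t$); this is the standard log-concavity fact for the Gaussian and is the only step where a real argument rather than algebraic bookkeeping is needed. Everything else — the recasting via $a*b$, the symmetrization, and the final rescaling — is routine.
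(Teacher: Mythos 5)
Your proof is correct (up to one cosmetic slip) and reaches the identity by a route that differs from the paper's in its key monotonicity ingredient. The paper also reduces to a one-dimensional minimization: after normalizing $v_1=1$ it sets $f(s)=\Phi(s)*\Phi(R-s)$ and computes $f'(s)=\varphi(s)\Phi(s-R)-\varphi(R-s)\Phi(-s)$ — exactly the negative of your $f\cdot(\log f)'$ — but it determines the sign of this expression by observing that $x\mapsto \varphi(x)\Phi(x-R)$ is increasing for $x<0$ (a product of two positive increasing functions there) and then handling $s<R$ and $s>0$ by separate direct comparisons; this is precisely where the hypothesis $R\le 0$ is used. You instead divide by $f$ and invoke the strict decrease of the Gaussian reversed hazard rate $h=\varphi_{v_1}/\Phi_{v_1}$, i.e.\ log-concavity of $\Phi_{v_1}$, which locates the sign change at $a=R/2$ uniformly in $a$ and for every $R$; the price is the one nontrivial analytic fact $t\Phi(t)+\varphi(t)>0$, which you prove correctly, and the payoff is that the restriction $R\le 0$ disappears, so your argument also establishes analytically the $R>0$ equality that the paper only checked numerically after Lemma \ref{sep,so,lem}. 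Two small fixes: for general $v_1$ the identities should read $h'(t)=-h(t)\bigl(t/v_1+h(t)\bigr)$ and $p(t)=\tfrac{t}{v_1}\Phi_{v_1}(t)+\varphi_{v_1}(t)$ with $p'(t)=\tfrac{1}{v_1}\Phi_{v_1}(t)$ (your formulas as written are the $v_1=1$ case; alternatively rescale to $v_1=1$ at the outset, as the paper does), and you should state explicitly that the minimum defining $\tilde{\Phi}[v_1,v_1](R)$ is attained because the objective tends to $1$ as $a\to\pm\infty$ — you note the equivalent fact for the maximum of $f$, so this is only a matter of phrasing.
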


Hence, we can see that
the equality holds in \eqref{HE77} if and only if $v_1=v_2 $. 
Lemma \ref{sep,so,th,l1} is shown in Appendix \ref{Ap4},
and Lemma \ref{sep,so,th,l2} is shown in Appendix \ref{Ap5}.

\section{Proof of Lemma \ref{sep,so,th,l1} } \Label{Ap4}
It is sufficient to show the following two statements.
(1) For any $y>0$, we have 
\begin{align}
\Phi^{ -1 }(\tilde{\varepsilon}(\varepsilon,y))
=&
\max_{ \varepsilon = \varepsilon_s * \varepsilon_c }
\frac{ 
	\Phi^{-1}(\varepsilon_s)
	+ \sqrt{y} \Phi^{-1}(\varepsilon_c) 
}{ \sqrt{1+ y }  }\nonumber \\
\ge&
\max_{ \varepsilon = \varepsilon_s * \varepsilon_c } \inf_{ y\ge 0 }
\frac{ 
	\Phi^{-1}(\varepsilon_s)
	+ \sqrt{y} \Phi^{-1}(\varepsilon_c) 
}{ \sqrt{1+ y }  }  \Label{sep,so,th8}\\
=& 
\sqrt{2} \Phi^{ -1 }(1 - \sqrt{1 - \varepsilon}).
 \Label{sep,so,th8T}
\end{align}
(2) The maximum in \eqref{sep,so,th8} is realized only when $ \varepsilon_s=\varepsilon_c$.
Under this condition, the infimum 
$\inf_{ y\ge 0 }
\frac{ 
	\Phi^{-1}(\varepsilon_s)
	+ \sqrt{y} \Phi^{-1}(\varepsilon_c) 
}{ \sqrt{1+ y }  }$ is realized only when $y=1$.

The statement (1) implies \eqref{HE77},
and the statement (1) implies the necessarily condition for the equality in \eqref{HE77}.

\noindent{\it Step (i):} In this step, we will show the following relation for $\varepsilon \le \frac{3}{4} $.
\begin{align}
\max_{ \varepsilon = \varepsilon_s * \varepsilon_c } \inf_{ y\ge 0 }
\frac{ 
	\Phi^{-1}(\varepsilon_s)
	+ \sqrt{y} \Phi^{-1}(\varepsilon_c) 
}{ \sqrt{1+ y }  }
=
\max_{ \varepsilon = \varepsilon_s * \varepsilon_c , \varepsilon_s\le \frac{1}{2}, \varepsilon_c\le \frac{1}{2} }
- \sqrt{\Phi^{ -1 }( \varepsilon_s )^2 + \Phi^{ -1 }( \varepsilon_c )^2 }.
\Label{sep,so,th9}
\end{align}

Hence, it is sufficient to show that
\begin{align}
&
\max_{ \varepsilon = \varepsilon_s * \varepsilon_c } \inf_{ y\ge 0 }
\frac{ 1 }{ \sqrt{1 + y} }
\left (
\begin{array}{c}
1\\
\sqrt{y}
\end{array} 
\right )
\cdot
\left (
\begin{array}{c}
\Phi^{ -1 }( \varepsilon_s )\\
\Phi^{ -1 }( \varepsilon_c )
\end{array} 
\right ) \nonumber \\
=&
\max_{ \varepsilon = \varepsilon_s * \varepsilon_c }
- \sqrt{\Phi^{ -1 }( \varepsilon_s )^2 + \Phi^{ -1 }( \varepsilon_c )^2 }.\Label{sep,so,th5}
\end{align}

We rewrite the LHS of (\ref{sep,so,th5}) as
\begin{align}
\max_{ \varepsilon = \varepsilon_s * \varepsilon_c } \inf_{ y\ge 0 }
\frac{ 
	\Phi^{-1}(\varepsilon_s)
	+ \sqrt{y} \Phi^{-1}(\varepsilon_c) 
}{ \sqrt{1+ y }  }
=
\max_{ \varepsilon = \varepsilon_s * \varepsilon_c } \inf_{ y\ge 0 }
\frac{ 1 }{ \sqrt{1 + y} }
\left (
\begin{array}{c}
1\\
\sqrt{y}
\end{array} 
\right )
\cdot
\left (
\begin{array}{c}
\Phi^{ -1 }( \varepsilon_s )\\
\Phi^{ -1 }( \varepsilon_c )
\end{array} 
\right ), \Label{sep,so,th3}
\end{align}
where $\cdot$ is inner product of vector. 
The inside of the RHS of \eqref{sep,so,th3} is calculated as
\begin{align}
&\inf_{ y\ge 0 }
\frac{ 1 }{ \sqrt{1 + y} }
\left (
\begin{array}{c}
1\\
\sqrt{y}
\end{array} 
\right )
\cdot
\left (
\begin{array}{c}
\Phi^{ -1 }( \varepsilon_s )\\
\Phi^{ -1 }( \varepsilon_c )
\end{array} 
\right )\nonumber \\
=&
\left \{
\begin{array}{ ll }
- \sqrt{\Phi^{ -1 }( \varepsilon_s )^2 + \Phi^{ -1 }( \varepsilon_c )^2 } 
& {\rm when} \quad \Phi^{ -1 }( \varepsilon_s ) \le 0 \  {\rm and} \  \Phi^{ -1 }( \varepsilon_c ) \le 0 \\
\Phi^{ -1 }( \varepsilon_s )
&{\rm when} \quad \Phi^{ -1 }( \varepsilon_s ) \le 0 \ {\rm and} \  \Phi^{ -1 }( \varepsilon_c ) \ge 0 \\
\Phi^{ -1 }( \varepsilon_c )
&{\rm when} \quad \Phi^{ -1 }( \varepsilon_s ) \ge 0 \  {\rm and} \  \Phi^{ -1 }( \varepsilon_c ) \le 0 .
\end{array}
\right . \Label{sep,so,th4}
\end{align}
Since  $\varepsilon \le \frac{ 3 }{ 4 }$, 
either 
$ \Phi(\varepsilon_s) $ or $ \Phi(\varepsilon_c)$ 
is negative. 
Hereafter, we will consider the maximum value 
of (\ref{sep,so,th4}) under the condition 
$\varepsilon = \varepsilon_s * \varepsilon_c$.

When $\varepsilon \le\frac{1}{2}$, 
we have $\varepsilon_s ,\varepsilon_c \le \frac{1}{2}$, which implies \eqref{sep,so,th5}.
So, we consider the case when $\frac{1}{2}<\varepsilon\le \frac{3}{4}$, which has the above three cases.
First, we consider the case when 
$\Phi^{ -1 }(\varepsilon_s ) \le 0$ 
and $\Phi^{ -1 }( \varepsilon_c ) \ge 0 $.
Then, we have
\begin{align}
\max_{\substack{\varepsilon_s,\varepsilon_c : 
		 \varepsilon = \varepsilon_s * \varepsilon_c,\\
		\Phi^{ -1 }(\varepsilon_s ) \le 0,\\
		\Phi^{ -1 }( \varepsilon_c ) \ge 0}}
\Phi^{ -1 }( \varepsilon_s )
=&
\max_{\substack{\varepsilon_s,\varepsilon_c : 
		 \varepsilon = \varepsilon_s * \varepsilon_c, \nonumber \\
		\varepsilon_s \le \frac{ 1 }{ 2 }, \varepsilon_c \ge \frac{ 1 }{ 2 }}}
\Phi^{ -1 }( \varepsilon_s ) \nonumber \\
=&
\max_{\varepsilon_s,\varepsilon_c : 
	0 \le \varepsilon_s \le 2\varepsilon - 1}
\Phi^{ -1 }( \varepsilon_s ) \nonumber \\
=&
\Phi^{ -1 }( 2 \varepsilon - 1 ).
\end{align}
That is, the maximum value is attained 
when $\Phi^{ -1 }( \varepsilon_s ) = \Phi^{-1}(2 \varepsilon - 1)$ and 
$\Phi^{-1}( \varepsilon_c )=0 $. 

We obtain the same equation in the case when 
$\Phi^{ -1 }(\varepsilon_s ) \ge 0$ and $\Phi^{ -1 }( \varepsilon_c ) \le 0 $.
Hence,
we find that that the maximum of the RHS of \eqref{sep,so,th4} equals
$
\max_{ \varepsilon = \varepsilon_s * \varepsilon_c , \varepsilon_s\le \frac{1}{2}, \varepsilon_c\le \frac{1}{2} }
- \sqrt{\Phi^{ -1 }( \varepsilon_s )^2 + \Phi^{ -1 }( \varepsilon_c )^2 }$,
which implies \eqref{sep,so,th9}.

\noindent{\it Step (ii):} In this step, 
when $\varepsilon \le \frac{3}{4}$,
we will show the following equation.
Also we will show that the following maximum is realized if and only if $\varepsilon_s = \varepsilon_c $.
Since the discussion of Step (i) shows that
Under this condition, 
the infimum 
$\inf_{ y\ge 0 }
\frac{ 
	\Phi^{-1}(\varepsilon_s)
	+ \sqrt{y} \Phi^{-1}(\varepsilon_c) 
}{ \sqrt{1+ y }  }$ is realized only when $y=1$.
These discussions show the desired statements (1) and (2) with $\varepsilon \le \frac{3}{4}$.
\begin{align}
\max_{ \varepsilon = \varepsilon_s * \varepsilon_c , \varepsilon_s\le \frac{1}{2}, \varepsilon_c\le \frac{1}{2} }
- \sqrt{\Phi^{ -1 }( \varepsilon_s )^2 + \Phi^{ -1 }( \varepsilon_c )^2 }
=
\sqrt{2} \Phi^{ -1 }(1 - \sqrt{1 - \varepsilon}).\Label{eq30}
\end{align}

For notation, we define 
the function for $\varepsilon_s $ and $\varepsilon_c$ as: 
\begin{align}
A(\varepsilon_s, \varepsilon_c)
:=
\sqrt{\Phi^{ -1 }( \varepsilon_s )^2 + \Phi^{ -1 }( \varepsilon_c )^2 }.
\end{align}
Then, we can find that 
\begin{align}
\lim_{\varepsilon_s \to 0}A(\varepsilon_s, \varepsilon_c)
=
\lim_{\varepsilon_c \to 0}A(\varepsilon_s, \varepsilon_c)
=
\infty,
\end{align}
and 
$\max_{ \varepsilon = \varepsilon_s * \varepsilon_c , \varepsilon_s\le \frac{1}{2}, \varepsilon_c\le \frac{1}{2} }
  A(\varepsilon_s, \varepsilon_c)$
is monotonically decreasing function 
of $\varepsilon $. 
Hence, the relation \eqref{eq30} is equivalent to 
\begin{align}
\min_{ \varepsilon_s , \varepsilon_c:  
A(\varepsilon_s, \varepsilon_c)
= -\sqrt{2} \Phi^{ -1 }(1 - \sqrt{1 - \varepsilon})}
 \varepsilon_s * \varepsilon_c
= \varepsilon. \Label{eq32}
\end{align}

Choosing $a:=-\sqrt{2} \Phi^{ -1 }(1 - \sqrt{1 - \varepsilon})>0$,
we write 
\begin{align*}
\Phi^{ -1 }(\varepsilon_s) &= a \cos \theta,\\
\Phi^{ -1 }(\varepsilon_c) &= a \sin \theta,
\end{align*}
for certain 
$\pi \le \theta \le \frac{ 3 }{ 2 } \pi$.
Now, to regard $\varepsilon_s * \varepsilon_c$ as a function of $\theta$, 
we define 
\begin{align}
f (\theta)
:=
\Phi(a \cos \theta) * \Phi(a \sin \theta), 
\end{align}
and hereafter 
we will find $ \theta $ which minimize $ f (\theta) $. 
Calculating the derivative, 
we have 
\begin{align*}
&\frac{ d f (\theta)}{ d \theta } \nonumber \\
=&
\frac{ d }{ d \theta } 
[\Phi(a \cos \theta) + \Phi(a \sin \theta) -  \Phi(a \cos \theta) \Phi(a \sin \theta)]\\
=&
-a \sin \theta \Phi'(a \cos \theta) + a \cos \theta \Phi'(a \sin \theta)
+a \sin \theta \Phi'(a \cos \theta) \Phi(a \sin \theta)
-  a \cos \theta \Phi(a \cos \theta) \Phi'(a \sin \theta)\\
=&
- a \sin \theta \Phi'(a \cos \theta) (1- \Phi(a \sin \theta))
+ a \cos \theta \Phi'(a \sin \theta) (1 - \Phi(a \cos \theta)).
\end{align*}
Now, we define
\begin{align*}
g_a(x)
:=
-a \sqrt{1 - x^2}
\Phi'(a x)  (1- \Phi(- a \sqrt{1 - x^2})). 
\end{align*}
Because 
$ \Phi'(x) = \frac{ 1 }{ \sqrt{2 \pi} }e^{- \frac{ x^2 }{ 2 } }  $ 
and $ \Phi(x) $ 
is a monotonically increasing function 
for $ x<0 $, 
we find that 
$ g_a(x) $ is a monotonically increasing function 
for $ x<0 $.
Since 
\begin{align}
\frac{ d f(\theta)}{ d \theta } 
=
-g_a(\cos \theta) + g_a(\sin \theta),
\end{align}
the derivative test chart of 
$ f (\theta) $ is given as 
follows. 
\begin{align}
\begin{array}{|c| * 5 {c|}} \hline
\theta & \pi & \cdots & \frac{ 5 }{ 4 } \pi & \cdots & \frac{ 3 }{ 2 }\pi \\ \hline
\frac{ d  f (\theta)}{ d \theta } & & - & & + & \\ \hline
f (\theta) & &\searrow & & \nearrow &  \\ \hline
\end{array}
\end{align}
Hence, 
when $ \theta = \frac{ 5 }{ 4 }\pi $ i.e., $\varepsilon_s=\varepsilon_c  $, 
$ f (\theta) $ is minimized.
Therefore,
when $ (\varepsilon_s, \varepsilon_c ) $ satisfies 
$\varepsilon = \varepsilon_s*\varepsilon_c$ and $\varepsilon_s=\varepsilon_c  $,
the minimum \eqref{eq32} is attained.
So, we have 
$\varepsilon_s=\varepsilon_c = 1 - \sqrt{1 - \varepsilon}$, 
which means \eqref{eq32}.


\noindent{\it Step (iii):} 
In this step, when $\varepsilon > \frac{3}{4}$,
we will show the following equation.
Also we will show that the following maximum is realized if and only if $\varepsilon_s = \varepsilon_c $.
Since the discussion of Step (i) shows that
Under this condition, 
the infimum 
$\inf_{ y\ge 0 }
\frac{ 
	\Phi^{-1}(\varepsilon_s)
	+ \sqrt{y} \Phi^{-1}(\varepsilon_c) 
}{ \sqrt{1+ y }  }$ is realized only when $y=1$.
These discussions show the desired statements (1) and (2) with $\varepsilon > \frac{3}{4}$.
\begin{align}
\max_{ \varepsilon \ge \varepsilon_s * \varepsilon_c } 
\inf_{ y\ge 0 }
\frac{ 
	\Phi^{-1}(\varepsilon_s)
	+ \sqrt{y} \Phi^{-1}(\varepsilon_c) 
}{ \sqrt{1+ y }  }
=
\sqrt{2} \Phi^{ -1 }(1 - \sqrt{1 - \varepsilon}).\Label{eq3T}
\end{align}

Since $ \varepsilon > \frac{3}{4}$,
we have four cases.
(1) $\Phi^{ -1 }( \varepsilon_s ) \le 0 $ and $ \Phi^{ -1 }( \varepsilon_c ) \le 0 $.
(2) $\Phi^{ -1 }( \varepsilon_s ) > 0 $ and $ \Phi^{ -1 }( \varepsilon_c ) \le 0 $.
(3) $\Phi^{ -1 }( \varepsilon_s ) \le 0 $ and $ \Phi^{ -1 }( \varepsilon_c ) > 0 $.
(4) $\Phi^{ -1 }( \varepsilon_s ) > 0 $ and $ \Phi^{ -1 }( \varepsilon_c ) > 0 $.
The infinum 
$\inf_{ y\ge 0 }
\frac{ 
	\Phi^{-1}(\varepsilon_s)
	+ \sqrt{y} \Phi^{-1}(\varepsilon_c) 
}{ \sqrt{1+ y }  }
$ is negative except for the case (4).
So, the maximum with respect to 
$\varepsilon_s $ and $ \varepsilon_c$ under the condition
$\varepsilon \ge \varepsilon_s * \varepsilon_c$
is realized in the case (4).
In the case (4),
we have
\begin{align}
\inf_{ y\ge 0 }
\frac{ 
	\Phi^{-1}(\varepsilon_s)
	+ \sqrt{y} \Phi^{-1}(\varepsilon_c) 
}{ \sqrt{1+ y }  }
=\min(\Phi^{-1}(\varepsilon_s),\Phi^{-1}(\varepsilon_c)).
\Label{EH688}
\end{align}
The maximum of the RHS of \eqref{EH688} with the condition
$\varepsilon \ge \varepsilon_s * \varepsilon_c$
is realized when 
$\varepsilon = \varepsilon_s * \varepsilon_c$ and $\varepsilon_s = \varepsilon_c$.
Solving the equation $\varepsilon = \varepsilon_s * \varepsilon_s$, we have
\begin{align}
\max_{ \varepsilon \ge \varepsilon_s * \varepsilon_c } 
\min(\Phi^{-1}(\varepsilon_s),\Phi^{-1}(\varepsilon_c))
=
\sqrt{2} \Phi^{ -1 }(1 - \sqrt{1 - \varepsilon}).
\Label{EH689}
\end{align}
So, the combination of \eqref{EH688} and \eqref{EH689} yields  \eqref{eq3T}.
 
\section{Proof of Lemma \ref{sep,so,th,l2}}\Label{Ap5}
It is sufficient to show the case with $v_1=1$.
We set the function
	\begin{align}
	f(s)	:= 	\Phi(s) * \Phi(R-s).
	\end{align}
	That is, it is sufficient to show that
	the minimum $ \min_{s} f(s)$ is realized when $ s=\frac{R}{2}$
	because $\Phi(\frac{R}{2}) * \Phi(\frac{R}{2})$ equals the RHS of \eqref{HTT}. 
	
Calculating the derivative, we have
\begin{align*}
	\frac{ d f(s) }{ ds }
	=&
	\frac{ d }{ ds }[\Phi(s) + \Phi(R-s) - \Phi(s) \Phi(R-s)]\\
	=&
	\Phi'(s) - \Phi'(R-s) - \Phi'(s)\Phi(R-s) + \Phi(s) \Phi'(R-s)\\
	=&
	\Phi'(s)(1- \Phi(R-s)) 
	- \Phi'(R-s)(1-\Phi(s)) \\
	=&
	\Phi'(s)\Phi(s-R) 	- \Phi'(R-s)\Phi(-s) .
\end{align*}
The function $x \mapsto \Phi'(x)\Phi(x-R)$ is a monotonically increasing function 
for $ x<0 $. 
So, we find that
$	\frac{ d f(s) }{ ds } \le 0$ for $ s \in [R,\frac{R}{2}]$
and
$	\frac{ d f(s) }{ ds } \ge 0$ for $ s \in [\frac{R}{2},0]$.
Further, when $ s <R$,
$ |s| > |R-s|$, which implies that $ \Phi'(s)< \Phi'(R-s)$.
In this case, we have $s -R < -s$, which implies $ \Phi(s-R)< \Phi(-s) $.
So, we obtain $\Phi'(s)\Phi(s-R) - \Phi'(R-s)\Phi(-s) <0$, i.e., 
$	\frac{ d f(s) }{ ds } < 0$.
Similarly, when $ s >0$, we can show the inequality
$	\frac{ d f(s) }{ ds } > 0$.
Therefore, the minimum $ \min_{s} f(s)$ is realized when $ s=\frac{R}{2}$.

\section{Proof of Lemma \ref{L-11}}\Label{A10}
First, we set
\begin{align}
{\cal L} := \{(m, x, y) \in ({\cal M, X, Y}) | P_M(m)W_{Y|X=x} (y)
\le c Q_Y (y)
 \}.
\end{align}
For each $(m, x) \in ({\cal M, X})$, we define
\begin{align}
{\cal B}(m, x) := \{ y \in {\cal Y} | (m, x, y) \in {\cal L} \}. 
\end{align}
Also, for decoder $\varphi$ and each $m \in {\cal M}$, we define
\begin{align}
{\cal D}(m) := \{y \in {\cal Y} | \varphi(y) = m \}. 
\end{align}
In addition, we define $P_{X|M}$ so that
\begin{align}
P_{X|M}(x|m) =
	\begin{cases}
	0 & x \neq \san{e}(m)\\
	1 & x= \san{e}(m).
	\end{cases}
\end{align}
Using this notation, we define
\begin{align}
P_{MX}(m, x) &:= P_M(m)P_{X|M}(x|m) , \\
P_{MXY}(m, x, y)&:= P_M(m)P_{X|M}(x|m)W_{Y|X=\san{e}(m)} (y). 
\end{align}

Then, 
\begin{align}
&\sum_m P_M(m)W_{Y|X=\san{e}(m)} 
\{ P_M(m)W_{Y|X=\san{e}(m)} (Y)
\le c Q_Y (Y)
\} \nonumber\\
=& \sum_{(m, x, y) \in {\cal L}} P_{MXY} (m, x, y) \nonumber\\
=& \sum_{(m, x) \in {\cal M, X}} \sum_{y \in {\cal B}(m, x)}P_{MX}(m, x) W_{Y|X}(y|x) 
\nonumber\\
=& \sum_{(m, x) \in {\cal M, X}} \sum_{y \in {\cal B}(m, x) \cap {\cal D}(m)}
		P_{MX}(m, x) W_{Y|X}(y|x)
	+\sum_{(m, x) \in {\cal M, X}} \sum_{y \in {\cal B}(m, x)\cap {\cal D}^c(m)}
			P_{MX}(m, x) W_{Y|X}(y|x) \nonumber\\
\le & \sum_{(m, x) \in {\cal M, X}} \sum_{y \in {\cal B}(m, x) \cap {\cal D}(m)}
	P_{MX}(m, x) W_{Y|X}(y|x)
	+\sum_{(m, x) \in {\cal M, X}} \sum_{y \in {\cal D}^c(m)}
	P_{MX}(m, x) W_{Y|X}(y|x) \nonumber\\
=& \sum_{(m, x) \in {\cal M, X}} \sum_{y \in {\cal B}(m, x) \cap {\cal D}(m)}P_{MX}(m, x) W_{Y|X}(y|x) + \Pjs[\phi|P_M, W_{Y|X}]. \Label{sbcl1p1}
\end{align}
The last equality follows since the error probability can be written as
\begin{align*}
\Pjs[\phi|P_M, W_{Y|X}] = \sum_{(m, x) \in {\cal M, X}} \sum_{y \in {\cal D}^c(m)}P_{MX}(m, x) W_{Y|X}(y|x). 
\end{align*}
We notice here that
\begin{align*}
 P_M(m)W_{Y|X=\san{e}(m)} (Y)
\le c Q_Y (Y)
\end{align*}
for $y \in {\cal B}(m, x)$. By substituting this into (\ref{sbcl1p1}), the first term of (\ref{sbcl1p1}) is
\begin{align*}
&\sum_{(m, x) \in {\cal M, X}} \sum_{y \in {\cal B}(m, x) \cap {\cal D}(m)}c P_{X|M}(x|m)Q_Y (y)\\
\le & \sum_{(m, x) \in {\cal M, X}} \sum_{y \in {\cal D}(m)}c P_{X|M}(x|m)Q_Y (y)\\
=& c \sum_{m \in {\cal M}} \sum_{y \in {\cal D}(m)}Q_Y (y)\\
=& c \sum_{m \in {\cal M}}Q_Y ({\cal D}(m)) = c,
\end{align*}
which implies (\ref{2}).

\bibliographystyle{IEEE}

\begin{thebibliography}{1}

\bibitem{7}
V. Strassen,
``Asymptotische Absch\"{a}tzugen in Shannon's Informationstheorie,''
In Transactions of the Third Prague Conference on Information Theory etc,
Czechoslovak Academy of Sciences, Prague, pp. 689-723, 1962.

\bibitem{6}
M. Hayashi, ``Second-Order Asymptotics in Fixed-Length Source Coding and Intrinsic Randomness,'' 
\emph{IEEE Trans. Inf. Theory}, vol. 54, no. 10, 4619 -- 4637 (2008).

	\bibitem{HSO}
	M. Hayashi,
	``Information Spectrum Approach to Second-Order Coding Rate in Channel Coding,"
	{\em IEEE Trans. Inf. Theory}, vol. 55, no. 11, 4947--4966 (2009).
	
\bibitem{12}
Y. Polyanskiy, H.V. Poor, and S. Verd\'{u}, 
``Channel coding rate in the finite blocklength regime,'' 
\emph{IEEE Trans. Inf. Theory}, vol. 56, no. 5, 2307 -- 2359 (2010).

	\bibitem{HW}
	M. Hayashi and S. Watanabe, 
	``Finite-Length Analyses for Source and Channel Coding on Markov Chains,'' 
arXiv:1309.7528 (2013).

	
	\bibitem{DAY}
	D. Wang, A. Ingber, and Y. Kochman,
	``The Dispersion of Joint Source-Channel Coding,''
  \emph{Proc.\ 49th Annual Allerton Conf.}, Allerton House, Monticello, IL,
  USA, 2011, pp. 180 - 187.

	\bibitem{DAY2}
	D. Wang, A. Ingber, and Y. Kochman,
	``The Dispersion of Joint Source-Channel Coding,''
	arXiv: 1109.6310 (2011).

	\bibitem{VSM}
	V. Y. F. Tan, S. Watanabe, and M. Hayashi
	``Moderate Deviations for Joint Source-Channel Coding of Systems With Markovian Memory'', 
{\it Proceedings of 2014 IEEE International Symposium on Information Theory}, June 29 - July 4 2014, Honolulu, HI, USA, pp. 1687 - 1691.
	
	\bibitem{KV}
 	V. Kostina and S. Verd\'{u}, 
``Lossy joint source-channel coding in the finite blocklength regime,''
{\it Proceedings of 2012 IEEE International Symposium on Information Theory}, 1-6 July 2012, Cambridge, MA, USA, pp. 1553-1557.

	\bibitem{CVFKM}
A. T. Campo, G. Vazquez-Vilar, A. G. i F\`{a}bregas, T. Koch and A. Martinez, 
``Achieving Csisz\'{a}r's Source-Channel Coding Exponent with Product Distributions,''
{\it Proceedings of 2012 IEEE International Symposium on Information Theory}, 1-6 July 2012, Cambridge, MA, USA, pp. 1548 - 1552.

	\bibitem{ZA}
	Y. Zhong, F. Alajaji and L. Lorne Campbell,
	``Joint Source-Channel Coding Error Exponent for Discrete Communication Systems With Markovian Memory,"
\emph{IEEE Trans. Inf. Theory}, vol. 53, no. 12, 4457-4472 (2007).

	\bibitem{Csis}
I. Csisz\'{a}r, ``Joint source-channel error exponent,'' 
{\em Probl. Contr. Inf. Theory}, vol. 9, pp. 315--328, 1980.

	\bibitem{KH1}
W. Kumagai, and M. Hayashi,
``Second-Order Asymptotics of Conversions of Distributions and Entangled States Based on
Rayleigh-Normal Probability Distributions,''
\emph{IEEE Trans. Inf. Theory}, vol. 53, no. 3, 1829 - 1857 (2017);
arXiv:1306.4166.

	\bibitem{KH2}
W. Kumagai, and M. Hayashi,
``Random Number Conversion and LOCC Conversion via Restricted Storage,''
Rayleigh-Normal Probability Distributions,''
\emph{IEEE Trans. Inf. Theory}, vol. 63, no. 4, 2504 - 2532 (2001);
arXiv:1401.3781.

	\bibitem{HW2}
M. Hayashi, S. Watanabe,
``Uniform Random Number Generation from Markov Chains: Non-Asymptotic and Asymptotic Analyses,'' 
\emph{IEEE Trans. Inf. Theory}, vol. 62, no. 4, 1795 - 1822 (2016).

\bibitem{han}
T. S. Han, {\em Information-Spectrum Methods in Information Theory}. Springer, 2003. 

\bibitem{Gal}
R. G. Gallager, {\em Information Theory and Reliable Communication.} John Wiley \& Sons, 1968. 

\bibitem{CK}
I. Csisz\'{a}r and J K\''{o}rner, {\em Information Theory: Coding Theorems for Discrete Memoryless Systems}. Academic Press, 1981. 

\bibitem{WH}
S. Watanabe, M. Hayashi
``Finite-length Analysis on Tail probability for Markov Chain and Application to Simple Hypothesis Testing,"
Accepted for publication in Annals of Applied Probability;
arXiv:1401.3801. 

\bibitem{Naga} H. Nagaoka, 
``Strong converse theorems in quantum information theory,'' 
{\it Proceedings of ERATO Workshop on Quantum Information Science 2001, Univ. Tokyo}, 
6-8 September 2001, Tokyo, Japan, pp. 33. 

\bibitem{HN} M. Hayashi and H. Nagaoka, 
``General formulas for capacity of classical-quantum channels,'' 
\emph{IEEE Trans. Inf. Theory}, vol. 49, no. 7, 1753 - 1768 (2003).

\bibitem{A}
A. N. Tikhomirov, 
``On the convergence rate in the central limit theorem for weakly dependent random variables,'' 
{\em Theor. Probabil. and Its Applic.,} vol. 25, no. 4, 790-809 (1980).

	\bibitem{KV14}
I. Kontoyiannis and S. Verd\'{u}, 
``Optimal Lossless Data Compression: Non-Asymptotics and Asymptotics,'' 
\emph{IEEE Trans. Inf. Theory}, vol. 60, no. 2, 777-795 (2014).

 		\bibitem{KV13} V. Kostina and S. Verd\'{u},
``Lossy Joint Source-Channel Coding in the Finite Blocklength Regime,''
\emph{IEEE Trans. Inf. Theory}, vol. 59, no. 5, 2545 - 2575 (2013).

\end{thebibliography}

\end{document}